\newcommand{\xkh}[1]{\left(#1\right)}
\newcommand{\dkh}[1]{\left\{#1\right\}}
\newcommand{\zkh}[1]{\left[#1\right]}
\newcommand{\nj}[1]{\langle {#1} \rangle}
\newcommand{\norm}[1]{\left\|{#1}\right\|_2}
\newcommand{\normf}[1]{\left\|{#1}\right\|_F}
\newcommand{\norms}[1]{\left\|{#1}\right\|}
\newcommand{\ucnorm}[1]{\|{#1}\|_2}
\newcommand{\ucnorms}[1]{\|{#1}\|}
\newcommand{\ucabs}[1]{\lvert#1\rvert}
\newcommand{\abs}[1]{\left\lvert#1\right\rvert}
\newcommand{\aabs}[1]{\lvert#1\rvert}
\newcommand{\argmin}[1]{\mathop{\rm argmin}\limits_{#1}}
\newcommand{\minm}[1]{\mathop{\rm min}\limits_{#1}}
\newcommand{\maxm}[1]{\mathop{\rm max}\limits_{#1}}
\newcommand{\sumim}{\sum_{i=1}^m}
\newcommand{\sumjm}{\sum_{j=1}^m}
\newcommand{\sumkm}{\sum_{k=1}^m}
\newcommand{\disth}[1]{\mathrm {dist}_\H\left(#1\right)}
\newcommand{\dd}{{\mathrm d}}
\newcommand{\e}{{\mathrm e}}
\renewcommand{\H}{{\mathbb H}}
\newcommand{\E}{{\mathbb E}}
\newcommand{\R}{{\mathbb R}}
\newcommand{\SSS}{{\mathbb S}}
\newcommand{\X}{{\mathcal X}}
\newcommand{\C}{{\mathbb C}}
\newcommand{\gk}{g_k}
\newcommand{\rk}{r_k}
\newcommand{\mem}{{{\bm E}_m}}
\newcommand{\hmd}{{\mathbb H}^{m\times d}}
\newcommand{\alphak}{\alpha_k}
\newcommand{\alphakk}{\alpha_{k+1}}
\newcommand{\phik}{\varphi_k}
\newcommand{\phikk}{\varphi_{k+1}}
\newcommand{\thetak}{\theta_k}
\newcommand{\zeropi}{\zkh{0,\pi}}
\newcommand{\betalp}{\beta^{\ell_p}}
\newcommand{\betaone}{\beta^{\ell_1}}
\newcommand{\betaonebar}{\bar{\beta}^{\ell_1}}
\newcommand{\betatwo}{\beta^{\ell_2}}
\newcommand{\betatwobar}{\bar{\beta}^{\ell_2}}
\newcommand{\real}[1]{\mathrm{Re}\left(#1\right)}
\newcommand{\image}[1]{\mathrm{Im}\left(#1\right)}
\newcommand{\ucreal}[1]{\mathrm{Re}(#1)}
\newcommand{\hh}{\ast}
\newcommand{\T}{\top}
\newcommand{\va}{{\bm a}}
\newcommand{\vb}{{\bm b}}
\newcommand{\vu}{{\bm u}}
\newcommand{\vv}{{\bm v}}
\newcommand{\vx}{{\bm x}}
\newcommand{\vy}{{ \bm{ y}}}
\newcommand{\vxt}{{\bm x}^t}
\newcommand{\ma}{{\bm A}}
\newcommand{\mb}{{\bm B}}
\newcommand{\mi}{{\bm I}}
\newcommand{\mpm}{{\bm P}}
\newcommand{\ve}{{\bm e}}
\renewcommand{\i}{{\rm i}}
\renewcommand{\omega}{\eta}
\newcommand{\sqrttwo}{\sqrt{2}}
\newcommand{\pitwo}{\frac{\pi}{2}}
\newcommand{\pim}{\frac{\pi}{m}}
\newcommand{\kpim}{\frac{k\pi}{m}}
\newcommand{\pimtwo}{\frac{\pi}{2m}}
\newcommand{\pimtwothree}{\frac{3\pi}{2m}}
\newcommand{\pimfourthree}{\frac{3\pi}{4m}}
\newcommand{\tjk}{\tau_j^k}
\newtheorem{definition}{Definition}[section]
\newtheorem{corollary}[definition]{Corollary}
\newtheorem{theorem}[definition]{Theorem}
\newtheorem{lemma}[definition]{Lemma}
\newtheorem{remark}[definition]{Remark}
\date{}
\begin{document}

\author{Haiyang Peng}
\address{School of Mathematical Sciences, Beihang University, Beijing, 100191, China }
\email{haiyangpeng@buaa.edu.cn}

\author{Deren Han}
\address{School of Mathematical Sciences, Beihang University, Beijing, 100191, China }
\email{handr@buaa.edu.cn}

%

\author{Meng Huang}
\address{School of Mathematical Sciences, Beihang University, Beijing, 100191, China }
\email{menghuang@buaa.edu.cn}


\baselineskip 18pt

\title{The Condition Number in Phase Retrieval from Intensity Measurements}


\begin{abstract}
This paper investigates the stability of phase retrieval by analyzing the condition number of the nonlinear map \(\Psi_{\ma}(\bm{x}) = \bigl(\lvert \langle \va_j, \vx \rangle \rvert^2 \bigr)_{1 \le j \le m}\), where \(\va_j \in \mathbb{H}^n\) are known sensing vectors with \(\mathbb{H} \in \{\mathbb{R}, \mathbb{C}\}\). For each \(p \ge 1\), we define 
the condition number \(\beta_{\Psi_\ma}^{\ell_p}\)  as the ratio of optimal upper and lower Lipschitz constants  of \(\Psi_{\ma}\) measured in the \(\ell_p\) norm,  with respect to the metric \(\disth{\vx, \vy} = \|\vx \vx^\ast - \vy \vy^\ast\|_*\).  We establish universal lower bounds on \(\beta_{\Psi_\ma}^{\ell_p}\) for any sensing matrix \(\ma \in \mathbb{H}^{m \times d}\),  proving  that \(\beta_{\Psi_\ma}^{\ell_1} \ge \pi/2\) and \(\beta_{\Psi_\ma}^{\ell_2} \ge \sqrt{3}\) in the real case \((\mathbb{H} = \mathbb{R})\), and \(\beta_{\Psi_\ma}^{\ell_p} \ge 2\) for \(p=1,2\) in the complex case \((\mathbb{H} = \mathbb{C})\). These bounds are shown to be asymptotically tight: both a deterministic harmonic frame \(\mem \in \mathbb{R}^{m \times 2}\)  and Gaussian random matrices \(\ma \in \mathbb{H}^{m \times d}\) asymptotically attain them. Notably, the harmonic frame \(\mem \in \mathbb{R}^{m \times 2}\) achieves the optimal lower bound \(\sqrt{3}\) for all \(m \ge 3\) when \(p=2\), thus serving as an optimal sensing matrix within  \(\ma \in \mathbb{R}^{m \times 2}\).  Our results provide the first explicit uniform lower bounds on \(\beta_{\Psi_\ma}^{\ell_p}\) and offer insights into the fundamental stability limits of phase retrieval.
\end{abstract}

\maketitle

\section{Introduction}
\subsection{Phase retrieval}
This paper addresses the phase retrieval problem, which aims to recover a signal $\vx \in \H^d$ from the intensity measurements $\abs{\nj{\va_j,\vx}}^2, j=1,\ldots,m$, where $ \va_j \in \H^n$ are known sensing vectors and  $\H\in\dkh{\R,\C}$.  This problem naturally arises in applications such as X-ray crystallography \cite{harrison1993phase,millane1990phase} and coherent diffraction imaging \cite{chai2010array,shechtman2015phase}, where only the magnitudes of measurements are accessible. The practical significance of phase retrieval has spurred extensive mathematical research, particularly within frame theory; for comprehensive surveys on this topic, we refer the reader to \cite{grohs2020phase}. For convenience, we denote $\ma:=(\va_1,\ldots,\va_m)^\hh \in \H^{m\times d}$ and consider the nonlinear map
 \begin{equation*}
	\Psi_{\ma}(\vx) = \abs{\ma\vx}^2 := (\abs{\nj{\va_1,\vx}}^2, \abs{\nj{\va_2,\vx}}^2, \cdots, \abs{\nj{\va_m,\vx}}^2). 
\end{equation*}
Then phase retrieval is equivalent to recover \(\vx \in \mathbb{H}^d\) from \(\Psi_{\ma}(\vx) \in \mathbb{R}^m\). Since \(\Psi_{\ma}(c \vx) = \Psi_{\ma}(\vx)\) for any \(c \in \mathbb{H}\) with \(|c| = 1\), the recovery can only be achieved up to a unimodular constant. 
Accordingly, we define an equivalence relation \(\vx \sim \vy\) if and only if there exists \(c \in \mathbb{H}\) with \(|c| = 1\) such that \(\vx = c \vy\). A matrix \(\ma \in \mathbb{H}^{m \times d}\) is said to have the \emph{phase retrieval property} if the map \(\Psi_{\ma}\) is injective on the quotient space \(\mathbb{H}^d / \sim\). In the real case \((\mathbb{H} = \mathbb{R})\), it is known that no matrix \(\ma \in \mathbb{R}^{m \times d}\) with \(m < 2d - 1\) has this property, while a generic matrix with \(m \geq 2d - 1\) has phase retrieval property \cite{balan2006signal}. In the complex case \((\mathbb{H} = \mathbb{C})\), a generic matrix \(\ma \in \mathbb{C}^{m \times d}\) with \(m \geq 4d - 4\) suffices for phase retrieval, whereas no matrix with \(m < 4d - 4\) allows phase retrieval for dimensions \(d = 2^k + 1\) \cite{conca2015algebraic, wanggang2019generalized}.  A more in-depth account of the history of necessary and sufficient bounds for phase retrieval can be found in \cite{botelho2016phase}.

A considerable amount of research has focused on algorithms for phase retrieval, including convex relaxation methods \cite{candes2014phaselift, candes2013phaselift, goldstein2018, waldspurger2015} and non-convex approaches \cite{caijianfeng2022solving, candes2015wf, chenyuxin2015twf, chenyuxin2019, duchi2019solving, huangmeng2022, macong2020, sunju2018, tan2019phase, waldspurger2018phase, wanggang2017taf, zhanghuishuai2016provable, zhanghuishuai2017rwf}, with many providing probabilistic guarantees of global convergence under Gaussian random measurements. Additionally, several works \cite{chenyuxin2015convex,krahmer2018,penghaiyang2024npr} have extended phase retrieval analysis to subgaussian measurement distributions satisfying a small ball condition or to signals meeting a \(\mu\)-flatness assumption when measurements are Bernoulli random vectors. For comprehensive surveys covering recent theoretical and algorithmic advances, we refer readers to \cite{jaganathan2016phase,shechtman2015phase}. In contrast, this paper does not address algorithmic development but instead focuses on the stability of the recovery map \(\Psi_{\ma}(\vx) \to \vx/\sim\).

\subsection{Stability of  phase retrieval}
In phase retrieval applications, the stability of reconstruction is arguably the most critical consideration. By analyzing the bi-Lipschitz properties of the map \(\Psi_{\ma}\) and  its variants, it has been shown that phase retrieval in finite-dimensional settings is uniformly stable \cite{alaifari2017phase,bandeira2014saving,cahill2016phase,grohs2020phase}.  Throughout this paper, we say that  the map \(\Psi_{\ma}\) is bi-Lipschitz if  there exist constants $0 < L \le U < \infty$ such that for all  $\vx, \vy \in \H^d$, 
\begin{equation} \label{def:lowupcon}
	L\cdot \disth{\vx,\vy}\le \ucnorms{\Psi_{\ma}(\vx)- \Psi_{\ma}(\vy)}_p \le U\cdot \disth{\vx, \vy}, 
\end{equation}
where $p\ge 1$ and 
 \begin{equation*}
	\disth{\vx,\vy} := \norms{\vx\vx^\hh-\vy\vy^\hh}_*  \overset{\text{(i)}}{=}   \inf_\theta\ucnorm{\vx-\e^{\i\theta} \vy} \cdot \sup_\theta\ucnorm{\vx-\e^{\i\theta} \vy}.
\end{equation*}
Here, $\norms{\cdot}_*$ denotes the nuclear norm, and the equality (i) is established in  \cite[Theorem 3.7]{balan2022lipschitz}. 
 In particular, when the measurement vectors $\va_j, j=1,\ldots,m$ are subgaussian random vectors satisfying small-ball assumption, the authors of \cite{davis2020nonsmooth, duchi2019solving, eldar2014phase} focus on the lower Lipschitz constant and provide a deterministic constant \( L> 0 \) such that
 \begin{equation*}
	\ucnorms{\Psi_{\ma}(\vx)- \Psi_{\ma}(\vy)}_1 \ge L \cdot \inf_\theta\ucnorm{\vx-\e^{\i\theta} \vy} \cdot \sup_\theta\ucnorm{\vx-\e^{\i\theta} \vy}, \quad \mbox{for all} \quad \vx,\vy\in\H^d.
\end{equation*} 
For a given measurement matrix, one can increase the lower Lipschitz constant \(L\) by amplifying the measurement vectors, thereby enhancing \(\Psi_{\ma}\)'s ability to distinguish signals and improving noise robustness; however, such amplification can be costly \cite{bandeira2014saving}. This motivates the study of stability via the condition number, which remains invariant under scaling of the measurement vectors. Formally, for any \(\ma \in \mathbb{H}^{m \times d}\) and \(p \geq 1\), the condition number of \(\Psi_{\ma}\) with respect to the \(\ell_p\) norm is defined as
\begin{equation}\label{def:beta}
\beta^{\ell_p}_{\Psi_\ma} = \frac{U^{\ell_p}_{\Psi_\ma}}{L^{\ell_p}_{\Psi_\ma}},
\end{equation}
where \(L^{\ell_p}_{\Psi_\ma}\) and \(U^{\ell_p}_{\Psi_\ma}\) are the optimal lower and upper Lipschitz constants from \eqref{def:lowupcon}, given by
\begin{equation}\label{def:l and u}
	L^{\ell_p}_{\Psi_\ma} = \inf_{\substack{\vx,\vy\in\H^d\\ \disth{\vx, \vy}\ne 0}} \frac{\ucnorms{\Psi_{\ma}(\vx)- \Psi_{\ma}(\vy)}_p}{\disth{\vx, \vy}},\quad 
	U^{\ell_p}_{\Psi_\ma} = \sup_{\substack{\vx,\vy\in\H^d\\ \disth{\vx, \vy}\ne 0}} \frac{\ucnorms{\Psi_{\ma}(\vx)- \Psi_{\ma}(\vy)}_p}{\disth{\vx, \vy}}. 
\end{equation}
The condition number serves as a quantitative measure of the stability of a measurement matrix \(\ma\) for phase retrieval under the \(\ell_p\) norm. It is theoretically established that the smaller the condition number, the greater the stability of \(\Psi_{\ma}\) against adversarial noise \cite{bandeira2014saving}. If \(\ma\) lacks the phase retrieval property, then the lower Lipschitz bound satisfies \(L^{\ell_p}_{\Psi_\ma} = 0\), implying \(\beta^{\ell_p}_{\Psi_\ma} = +\infty\). Moreover, by Theorem \ref{th:equivalent}, stated later, the upper Lipschitz bound satisfies \(U_{\Psi_\ma}^{\ell_p} = \|\ma\|_{2 \to 2p}^2 < +\infty\). Consequently, if \(\ma\) has the phase retrieval property, the condition number \(\beta^{\ell_p}_{\Psi_\ma}\) is finite \cite{grohs2020phase} and satisfies \(\beta^{\ell_p}_{\Psi_\ma} \geq 1\). In this paper, we focus on analyzing phase retrieval stability through the condition number of the nonlinear map \(\Psi_{\ma}\).

A more granular analysis was conducted by Balan and Zou \cite{balan2016lipschitz}, who investigated the local Lipschitz properties of both the intensity map \(\Psi_{\ma}\) and the magnitude map \(\Phi_{\ma} = \bigl(|\langle \va_1, \vx \rangle|, \ldots, |\langle \va_m, \vx \rangle|\bigr)\). They defined distinct types of local Lipschitz bounds and derived explicit expressions for them in both real and complex settings. This work highlighted that stability can vary significantly depending on the specific location within the signal space.

The choice of metric for analyzing stability is particularly crucial in the context of generalized phase retrieval respect to the low-rank matrix. Balan and Dock \cite{balan2022lipschitz} studied the Lipschitz properties of the generalized maps \(\Psi_{\ma}\) and \(\Phi_{\ma}\) using various metrics on the quotient space, including the metric \(\disth{\vx, \vy}\) and the Bures-Wasserstein distance \( \dd(\vx,\vy) = \inf_\theta\norm{\vx-\e^{\i\theta} \vy} \). Their analysis revealed a critical distinction: the intensity map \(\Psi_{\ma}\) remains globally lower Lipschitz with respect to the Frobenius norm, ensuring its stability; in stark contrast, the magnitude map \(\Phi_{\ma}\) was shown to fail this stability test, with its global lower Lipschitz constant being zero for some case with the Bures-Wasserstein distance. This suggests that the intensity map \(\Psi_{\ma}\) is a more robust and natural formulation for generalized phase retrieval problems.

\subsection{Our contributions}
Calculating the condition number for phase retrieval is notoriously challenging. In this paper, we first provide an alternative characterization of the optimal lower and upper Lipschitz bounds \(L^{\ell_p}_{\Psi_\ma}\) and \(U^{\ell_p}_{\Psi_\ma}\). Specifically, for any sensing matrix \(\ma = (\va_1, \ldots, \va_m)^\ast \in \mathbb{H}^{m \times d}\) and every \(p \geq 1\), we prove that
\begin{align} \label{eq:charbu}
L_{\Psi_\ma}^{\ell_p} &= \inf_{\substack{\|\vu\|_2 = \|\vv\|_2 = 1 \\ \langle \vu, \vv \rangle \in \mathbb{R}}} \left( \sum_{j=1}^m \left| \operatorname{Re}(\vu^\ast \va_j \va_j^\ast \vv) \right|^p \right)^{\frac{1}{p}}, \quad \text{and} \quad
U_{\Psi_\ma}^{\ell_p} = \|\ma\|_{2 \to 2p}^2,
\end{align}
where \(\|\ma\|_{2 \to 2p} := \sup_{\|\vx\|=1} \|\ma \vx\|_{2p}\). Furthermore, for \(p=2\), when \(\{\va_j\}_{j=1}^m\) forms a tight 4-frame, the minimum  is attained by orthogonal vectors \(\vu\) and \(\vv\), that is,
\[
L_{\Psi}^{\ell_2} = \inf_{\substack{\|\vu\|_2 = \|\vv\|_2 = 1 \\ \langle \vu, \vv \rangle = 0}} \left( \sum_{j=1}^m \left| \operatorname{Re}(\vu^\ast \va_j \va_j^\ast \vv) \right|^2 \right)^{\frac{1}{2}}.
\]
This significantly simplifies the computation of the condition number by restricting the search to orthogonal pairs \(\vu, \vv \in \mathbb{H}^d\) rather than all pairs with real inner product.

Building on the characterization \eqref{eq:charbu}, we establish uniform lower bounds on the condition number \(\beta^{\ell_p}_{\Psi_\ma}\) for any sensing matrix \(\ma \in \mathbb{H}^{m \times d}\) and for both \(p=1\) and \(p=2\), as detailed in Lemmas \ref{le:opt:number:2} and \ref{le:opt:number:l1}. Specifically, we prove that
\begin{align} \label{eq:unibcon}
\beta^{\ell_1}_{\Psi_\ma} \geq   \begin{cases}
m \tan\frac{\pi}{2m}, & \text{if } \mathbb{H} = \mathbb{R}, \\
2, & \text{if } \mathbb{H} = \mathbb{C},
\end{cases}
\quad \text{and} \quad
\beta^{\ell_2}_{\Psi_\ma} \geq   \begin{cases}
\sqrt{3}, & \text{if } \mathbb{H} = \mathbb{R}, \\
\ \; 2, & \text{if } \mathbb{H} = \mathbb{C}.
\end{cases}
\end{align}
Generally speaking, a smaller condition number indicates that \(\Psi_{\ma}\) more closely approximates an isometric mapping. Our results \eqref{eq:unibcon} reveal that the condition number for phase retrieval can never reach 1, regardless of the dimensions \(m\) and \(d\), in contrast to linear maps where a large class of matrices achieve condition number exactly equal to 1.

Finally, we demonstrate that the bounds in \eqref{eq:unibcon} are tight or asymptotically tight by computing the condition numbers of the harmonic frame \(\mem \in \mathbb{R}^{m \times 2}\) and Gaussian random matrices \(\ma \in \mathbb{H}^{m \times d}\). Specifically, for any \(m \geq 3\), the condition numbers of the harmonic frame \(\mem\) in \(\mathbb{R}^2\) satisfy
 \begin{align} \label{eq:harcon}
	\betaone_{\mem} = \left\{\begin{aligned}
		&  m \frac{\tan \pimtwo}{\cos \pimtwo}, \quad \mbox{if}\ m\ \mbox{is odd},\\
		& \frac m2 \tan \frac{\pi}{m},  \quad \mbox{if}\ m\ \mbox{is even},
	\end{aligned}\right. \qquad \mbox{and} \qquad \beta_\mem^{\ell_2} = \sqrt{3}.
\end{align}
Here, the harmonic frame is defined as
\begin{equation} \label{def:mem}
\mem := \begin{pmatrix}
1 & \cos\left(\frac{\pi}{m}\right) & \cdots & \cos\left(\frac{(m-1)\pi}{m}\right) \\
0 & \sin\left(\frac{\pi}{m}\right) & \cdots & \sin\left(\frac{(m-1)\pi}{m}\right)
\end{pmatrix}^\top \in \mathbb{R}^{m \times 2}.
\end{equation}
These results indicate that \(\mem\) is the optimal sensing matrix minimizing the condition number in \(\mathbb{R}^{m \times 2}\) for \(p=2\), and asymptotically optimal for \(p=1\) as \(m \to \infty\). Moreover, for any fixed \(0 < \delta < 1\), with high probability the condition number of a standard Gaussian matrix \(\ma \in \mathbb{H}^{m \times d}\) satisfies
\[
\bar{\beta}^{\ell_1} \leq \beta^{\ell_1}_{\Psi_\ma} \leq \bar{\beta}^{\ell_1} + \delta,
\]
provided \(m \geq C_0 \log(1/\delta) \delta^{-2} d\) for some constant \(C_0 > 0\) where \(\bar{\beta}^{\ell_1} = \pi/2\) for \(\H = \R\) and \(\bar{\beta}^{\ell_1} = 2\) for \(\H = \C\).   
Similar asymptotic tightness holds for \(p=2\) as \(m \to \infty\). Together, these results establish that the universal lower bounds \(\bar{\beta}^{\ell_1}\) and \(\bar{\beta}^{\ell_2}\) in \eqref{eq:unibcon} are asymptotically tight in \(\mathbb{H}^{m \times d}\).

\subsection{Comparison to previous work}
Another widely used nonlinear map in phase retrieval is 
\begin{equation*}
	\Phi_{\ma}(\vx) = |\ma \vx| := \bigl(|\langle \va_1, \vx \rangle|, |\langle \va_2, \vx \rangle|, \ldots, |\langle \va_m, \vx \rangle|\bigr),
\end{equation*}
where \(\va_j \in \mathbb{H}^n\) for \(j=1,\ldots,m\) are known sensing vectors. The stability of \(\Phi_{\ma}\) has been extensively studied through its bi-Lipschitz property, especially focusing on the lower Lipschitz bound \cite{alaifari2017phase, alharbi2024locality, balan2015invertibility, bandeira2014saving, cahill2016phase}. Specifically, using the Bures-Wasserstein distance defined by \( \dd(\vx,\vy) = \inf_\theta\norm{\vx-\e^{\i\theta} \vy} \),
we say \(\Phi_{\ma}\) satisfies the bi-Lipschitz property with constants \(U \geq L > 0\) if for all \(\vx, \vy \in \mathbb{H}^d\),
\[
 L  \cdot \dd(\vx,\vy)  \le \norm{\Phi_{\ma}(\vx)-\Phi_{\ma}(\vy)} \le  U  \cdot \dd(\vx,\vy).
\]
The optimal lower and upper Lipschitz bounds are denoted by \(L_{\Phi_\ma}\) and \(U_{\Phi_\ma}\), respectively.
It has been shown that $U_{\Phi_\ma}=\norm{\ma}$. Furthermore, for the real case,  it is known that $\sigma_{\ma} \le L_{\Phi_\ma} \le \sqrt2 \sigma_{\ma} $, where $\sigma_{\ma} := \min_{S \subset [m]} \max\dkh{\lambda_{\min}(A_S^\hh A_{S} ), \lambda_{\min}(A_{S^c}^\hh A_{S^c} )}$ and \(\lambda_{\min}\) denotes the minimal eigenvalue. In the complex case, however, only $L_{\Phi_\ma} \le C \sigma_{\ma} $
can be established for some constant \(C > 0\) depending on \(\ma\) \cite{alaifari2017phase}.

Recently, Xia et al.~\cite{xiayu2024stability} studied the condition number of  \(\Phi_{\ma}\), defined as
\begin{equation}\label{eq:beta:1}
\beta_{\Phi_\ma} = \frac{U_{\Phi_\ma}}{L_{\Phi_\ma}}.
\end{equation}
They established uniform lower bounds on \(\beta_{\Phi_\ma}\) for all \(\ma \in \mathbb{H}^{m \times d}\) in both real and complex settings (see Table~\ref{tab:para of cond}), and demonstrated that these bounds are asymptotically tight. A key motivation for our study of \(\Psi_{\ma}\) is that in many practical scenarios, noise is added directly to \(\Psi_{\ma}\) rather than to \(\Phi_{\ma}\). Table~\ref{tab:para of cond} compares the condition numbers of \(\Psi_{\ma}\) obtained in this paper with those of \(\Phi_{\ma}\) from \cite{xiayu2024stability}.
\begin{table}[H]
	\renewcommand\arraystretch{1.5}
	\caption{Uniform lower bounds of the condition numbers of $\Psi_{\ma}$ and $\Phi_{\ma}$.}\label{tab:para of cond}
	\begin{center}
		\begin{tabular}{|c|c|c|c|}\hline
			Non-linear map            & $\H = \R$          & $\H = \C$     &       \\\hline
			$\betaone_{\Psi_\ma} $    &$\frac{\pi}{2}$     & $2$           &  this paper   \\
			$\betatwo_{\Psi_\ma} $    &$\sqrt{3}$          & $2$           &  this paper   \\
			\hspace{-0.5em} $\beta_{\Phi_\ma}$  &$\sqrt{\frac{\pi}{\pi-2}} \approx 1.659$& $\sqrt{\frac{4}{4-\pi}}\approx 2.159$ & \cite{xiayu2024stability} \\\hline
		\end{tabular}
	\end{center}
\end{table}
 It should be noted that  our study  focuses exclusively on the stability of phase retrieval in finite-dimensional spaces \(\mathbb{H}^d\). More general settings, including phase retrieval in infinite-dimensional Banach spaces and their stability properties, are discussed in \cite{alaifari2017phase, cahill2016phase, grohs2020phase}. Unlike the finite-dimensional case, it has been shown that phase retrieval in infinite dimensions is inherently unstable. Additionally, the stability of finite-dimensional phase retrieval under perturbations of the frame vectors has been studied in \cite{alharbi2023stable}.

\subsection{Notations} 
For any vector \(\vx\), \(\|\vx\|_p\) denotes its \(\ell_p\)-norm. The symbols \(\vx^\top\) and \(\vx^\ast\) represent the transpose and conjugate transpose of \(\vx\), respectively. For a matrix \(\ma\), \(\|\ma\|\), \(\|\ma\|_*\), and \(\|\ma\|_F\) denote its spectral norm, nuclear norm, and Frobenius norm, respectively, while \(\|\ma\|_{2 \to 2p} := \sup_{\|\vx\|=1} \|\ma \vx\|_{2p}\). The notations \(\ma^\top\), \(\ma^\ast\), and \(\overline{\ma}\) stand for the transpose, conjugate transpose, and entrywise conjugate of \(\ma\), respectively. We write \(\operatorname{Re}(\cdot)\) and \(\operatorname{Im}(\cdot)\) for the real and imaginary parts of a complex number or matrix (applied entrywise). For \(\mathbb{H} \in \{\mathbb{R}, \mathbb{C}\}\), \(\mathbb{S}_{\mathbb{H}}^{d-1} := \{\vx \in \mathbb{H}^d : \norm{\vx} = 1\}\) denotes the unit sphere. A vector \(\va \in \mathbb{H}^d\) is called a standard Gaussian random vector if
\begin{equation} \label{def:gauss}
\va \sim \begin{cases}
\mathcal{N}(\mathbf{0}, \mi_d), & \text{if } \mathbb{H} = \mathbb{R}, \\
\mathcal{N}(\mathbf{0}, \mi_d/2) + \i\, \mathcal{N}(\mathbf{0}, \mi_d/2), & \text{if } \mathbb{H} = \mathbb{C},
\end{cases}
\end{equation}
where \(\mathcal{N}(\mathbf{0}, \Sigma)\) denotes a Gaussian distribution with mean zero and covariance \(\Sigma\). A matrix \(\ma \in \mathbb{H}^{m \times d}\) is a standard Gaussian random matrix if its rows are independent standard Gaussian vectors. We use the notation \(f(m,n) = O(g(m,n))\) or \(f(m,n) \lesssim g(m,n)\) to mean there exists a constant \(c > 0\) such that \(f(m,n) \leq c\, g(m,n)\), and \(f(m,n) \gtrsim g(m,n)\) to mean \(f(m,n) \geq c\, g(m,n)\) for some \(c > 0\). Finally, \([m] := \{1, 2, \ldots, m\}\) for any positive integer \(m\).

\subsection{Organization}
The remainder of this paper is organized as follows: Section~\ref{sec:preliminaries} introduces equivalent characterizations of the optimal Lipschitz constants \(L_{\Psi_\ma}^{\ell_p}\) and \(U_{\Psi_\ma}^{\ell_p}\), along with orthogonality conditions as preliminaries. In Section~\ref{sec:p = 2}, we focus on the condition number with respect to the \(\ell_2\) norm, establishing optimal uniform lower bounds, concentration results for random matrices, and showing that the harmonic frame attains the optimal lower bound when \(\mathbb{H} = \mathbb{R}\). Section~\ref{sec:p = 1} presents analogous results for the \(\ell_1\) norm. Finally, Section~\ref{sec:discussion} provides a discussion and outlook. Technical lemmas and detailed proofs are collected in the supplementary materials.

\section{Characterizing the Lipschitz constants: Equivalent forms and orthogonal conditions}\label{sec:preliminaries}
In this section, we first present alternative expressions for \(L_{\Psi_\ma}^{\ell_p}\) and \(U_{\Psi_\ma}^{\ell_p}\), and then investigate the conditions under which the  lower Lipschitz bound is attained at orthogonal vectors. These results aim to simplify the characterization of \(L_{\Psi_\ma}^{\ell_p}\) and \(U_{\Psi_\ma}^{\ell_p}\), thereby facilitating the computation.

\subsection{Others expressions for \texorpdfstring{$L_{\Psi}^{\ell_p}$ and $U_{\Psi}^{\ell_p}$.}{} } 
The following theorem extends \cite[Theorem 4.5]{balan2015invertibility} from the real case to the complex case.

\begin{theorem}\label{th:equivalent}
	For any matrix $\ma = (\va_1,\cdots,\va_m)^\hh \in \H^{m\times d}$ and each $p\ge 1$, one has
	 \begin{align*}
		L_{\Psi_\ma}^{\ell_p} = \inf_{\substack{\norm{\vu}=\norm{\vv}=1 \\ \nj{\vu,\vv}\in\R}} \xkh{\sum_{j=1}^m \abs{\real{\vu^\hh\va_j \va_j^\hh\vv}}^p}^{\frac 1p} \quad \mbox{and} \quad U_{\Psi_\ma}^{\ell_p} =  \norms{\ma}_{2\rightarrow 2p}^2.
\end{align*}
\end{theorem}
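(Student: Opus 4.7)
The plan is to reduce the variational problems defining $L_{\Psi_\ma}^{\ell_p}$ and $U_{\Psi_\ma}^{\ell_p}$ to optimizations over pairs $(\vu,\vv)$ via the substitution $\vu=(\vx+\vy)/2$, $\vv=(\vx-\vy)/2$, and then to exploit the phase-invariance $\Psi_\ma(\e^{\i\theta}\vy)=\Psi_\ma(\vy)$. First I would compute the numerator: since $\vx\vx^\hh-\vy\vy^\hh=2(\vu\vv^\hh+\vv\vu^\hh)$, one obtains $\abs{\va_j^\hh\vx}^2-\abs{\va_j^\hh\vy}^2=\va_j^\hh(\vx\vx^\hh-\vy\vy^\hh)\va_j=4\real{\vu^\hh\va_j\va_j^\hh\vv}$, so $\ucnorms{\Psi_\ma(\vx)-\Psi_\ma(\vy)}_p=4\xkh{\sum_{j=1}^m\abs{\real{\vu^\hh\va_j\va_j^\hh\vv}}^p}^{1/p}$. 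Next, combining identity (i) with $\inf_\theta\norm{\vx-\e^{\i\theta}\vy}^2=\norm{\vx}^2+\norm{\vy}^2-2\abs{\nj{\vx,\vy}}$ and $\sup_\theta\norm{\vx-\e^{\i\theta}\vy}^2=\norm{\vx}^2+\norm{\vy}^2+2\abs{\nj{\vx,\vy}}$ yields $\disth{\vx,\vy}^2=\xkh{\norm{\vx}^2+\norm{\vy}^2}^2-4\abs{\nj{\vx,\vy}}^2$. Substituting $\norm{\vx}^2+\norm{\vy}^2=2\xkh{\norm{\vu}^2+\norm{\vv}^2}$ together with the identity $\nj{\vx,\vy}=\xkh{\norm{\vu}^2-\norm{\vv}^2}-2\i\,\image{\nj{\vu,\vv}}$ simplifies this to $\disth{\vx,\vy}^2=16\xkh{\norm{\vu}^2\norm{\vv}^2-\image{\nj{\vu,\vv}}^2}$.

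To simplify the denominator, I would exploit the fact that both $\Psi_\ma(\vy)$ and $\vy\vy^\hh$ (hence $\disth{\vx,\vy}$) are invariant under $\vy\mapsto \e^{\i\theta}\vy$, whereas this substitution sends $\nj{\vx,\vy}$ to $\e^{\i\theta}\nj{\vx,\vy}$. Choosing $\theta=-\arg\nj{\vx,\vy}$ makes $\nj{\vx,\vy}\in\R$ and, by the formula above, forces $\image{\nj{\vu,\vv}}=0$. Under this normalization the denominator becomes $4\norm{\vu}\norm{\vv}$, with both $\vu$ and $\vv$ nonzero precisely when $\disth{\vx,\vy}\ne 0$. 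Rescaling to $\tilde\vu=\vu/\norm{\vu}$ and $\tilde\vv=\vv/\norm{\vv}$, the ratio equals $\xkh{\sum_{j=1}^m\abs{\real{\tilde\vu^\hh\va_j\va_j^\hh\tilde\vv}}^p}^{1/p}$ with $\norm{\tilde\vu}=\norm{\tilde\vv}=1$ and $\nj{\tilde\vu,\tilde\vv}\in\R$. Conversely, every such $\tilde\vu,\tilde\vv$ arises from $\vx=\tilde\vu+\tilde\vv$, $\vy=\tilde\vu-\tilde\vv$, so the infimum (respectively supremum) over admissible $(\vx,\vy)$ coincides with that over the constrained pairs, establishing the formula for $L_{\Psi_\ma}^{\ell_p}$ and a parallel sup-representation for $U_{\Psi_\ma}^{\ell_p}$.

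To close the $U$-case, I would evaluate this supremum explicitly. The choice $\tilde\vu=\tilde\vv$ gives $\real{\tilde\vu^\hh\va_j\va_j^\hh\tilde\vu}=\abs{\va_j^\hh\tilde\vu}^2$, reducing the expression to $\norms{\ma\tilde\vu}_{2p}^2$, whose supremum over $\norm{\tilde\vu}=1$ is $\norms{\ma}_{2\to 2p}^2$ by definition. For the matching upper bound, the elementary inequality $\abs{\real{\tilde\vu^\hh\va_j\va_j^\hh\tilde\vv}}\le\abs{\va_j^\hh\tilde\vu}\abs{\va_j^\hh\tilde\vv}$ combined with the Cauchy--Schwarz inequality $\sum_j\xkh{\abs{\va_j^\hh\tilde\vu}\abs{\va_j^\hh\tilde\vv}}^p\le\xkh{\sum_j\abs{\va_j^\hh\tilde\vu}^{2p}}^{1/2}\xkh{\sum_j\abs{\va_j^\hh\tilde\vv}^{2p}}^{1/2}$ yields $\norms{\ma\tilde\vu}_{2p}\norms{\ma\tilde\vv}_{2p}\le\norms{\ma}_{2\to 2p}^2$, closing the proof.

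The main obstacle is the phase-reduction in the second paragraph, which is the genuinely new content beyond the real-case argument of \cite[Theorem 4.5]{balan2015invertibility}, where $\nj{\vu,\vv}\in\R$ holds automatically. One needs to verify not only that a single unimodular phase $\e^{\i\theta}$ can render $\nj{\vx,\vy}$ real, but also that the resulting correspondence is surjective onto the constrained domain $\dkh{(\tilde\vu,\tilde\vv):\norm{\tilde\vu}=\norm{\tilde\vv}=1,\ \nj{\tilde\vu,\tilde\vv}\in\R}$, and that degenerate pairs with $\vu=0$ or $\vv=0$ correspond exactly to the excluded case $\disth{\vx,\vy}=0$.
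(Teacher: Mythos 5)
Your proposal is correct and follows essentially the same route as the paper's proof: both reduce to the sum/difference substitution $\vx=\vu+\vv$, $\vy=\vu-\vv$ after phase-aligning $\vy$, use the identity $\abs{\va_j^\hh\vx}^2-\abs{\va_j^\hh\vy}^2=c\,\real{\vu^\hh\va_j\va_j^\hh\vv}$ together with $\disth{\vx,\vy}=c\norm{\vu}\norm{\vv}$, check surjectivity onto pairs with $\nj{\vu,\vv}\in\R$, and finish the $U$-case by Cauchy--Schwarz. The "main obstacle" you flag at the end is already resolved by your own computation $\disth{\vx,\vy}^2=16\xkh{\norm{\vu}^2\norm{\vv}^2-\image{\nj{\vu,\vv}}^2}$, so the argument is complete.
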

\begin{proof}
	For any $\vx, \vy \in \H^d$ with $\disth{\vx, \vy}\ne 0$, let $\theta_0 = \argmin{\theta\in\R} \norm{\vx-\e^{\i\theta}\vy}$.
It is easy to verify that $\nj{\vx, \e^{\i\theta_0} \vy} \in \R $ and
\[  \maxm{\theta\in\R} \ucnorm{\vx-\e^{\i\theta}\vy}=\ucnorm{\vx+\e^{\i\theta_0}\vy}. \] 
Define $\vu = \vx-\e^{\i\theta_0}\vy, \vv = \vx+\e^{\i\theta_0}\vy$.  Then $0<\norm{\vu} \le \norm{\vv}$ and $\disth{\vx, \vy}=\norm{\vu} \norm{\vv}$.  Moreover,
\[\nj{\vu, \vv} = \norm{\vx}^2 - \norm{\vy}^2 + 2 \image{\nj{\vx, \e^{\i\theta_0} \vy}} \cdot \i = \norm{\vx}^2 - \norm{\vy}^2 \in \R .
\]
A straightforward calculation yields
\begin{equation} \label{eq:reuvxy}
\ucabs{\va_j^\hh \vx}^2 - \ucabs{\va_j^\hh \vy}^2=\real{\vu^\hh\va_j \va_j^\hh\vv}.
\end{equation}
Therefore,  by the definition of \(L_{\Psi_\ma}^{\ell_p}\) in \eqref{def:l and u}, we have
\[
L^{\ell_p}_{\Psi_\ma} \ge   \inf_{\substack{\vu,\vv\in\H^d, \nj{\vu,\vv}\in\R\\  \vu,\vv \ne 0}} \frac{\xkh{\sum_{j=1}^m \abs{\real{\vu^\hh\va_j \va_j^\hh\vv}}^p}^{\frac 1p}}{\norm{\vu}\norm{\vv}} = \inf_{\substack{\norm{\vu}=\norm{\vv}=1 \\ \nj{\vu,\vv}\in\R}} \xkh{\sum_{j=1}^m \abs{\real{\vu^\hh\va_j \va_j^\hh\vv}}^p}^{\frac 1p}. 
\]
Conversely, for any $\vu,\vv\in\H^d$ with  $ \nj{\vu,\vv}\in\R$ and $\norm{\vu}=\norm{\vv}=1$, let $\vx=(\vv+\vu)/2$ and $\vy=(\vv-\vu)/2$. Then $\disth{\vx, \vy}=1$, and using \eqref{eq:reuvxy} gives
\[
\inf_{\substack{\norm{\vu}=\norm{\vv}=1 \\ \nj{\vu,\vv}\in\R}} \xkh{\sum_{j=1}^m \abs{\real{\vu^\hh\va_j \va_j^\hh\vv}}^p}^{\frac 1p}\ge  \inf_{\substack{\vx,\vy\in\H^d\\ \disth{\vx, \vy}\ne 0}} \frac{\ucnorms{\abs{\ma\vx}^2-\abs{\ma\vy}^2}_p}{\disth{\vx, \vy}} =L^{\ell_p}_{\Psi_\ma}.
\]
Combining the above inequalities, we conclude
\begin{align*}
	L_{\Psi_\ma}^{\ell_p}  = \inf_{\substack{\norm{\vu}=\norm{\vv}=1 \\ \nj{\vu,\vv}\in\R}} \xkh{\sum_{j=1}^m \abs{\real{\vu^\hh\va_j \va_j^\hh\vv}}^p}^{\frac 1p}. 
\end{align*}
Similarly, the upper Lipschitz constant \(U_{\Psi_\ma}^{\ell_p}\) satisfies
\begin{align*}
	U_{\Psi_\ma}^{\ell_p} = \sup_{\substack{\vu,\vv\in\H^d,  \nj{\vu,\vv}\in\R \\ \vu,\vv \neq 0}} \frac{\xkh{\sum_{j=1}^m \abs{\real{\vu^\hh\va_j \va_j^\hh\vv}}^p}^{\frac 1p}}{\norm{\vu}\norm{\vv}} = \sup_{\norm{\vu}=1} \xkh{\sum_{j=1}^m \abs{\va_j^\hh\vu}^{2p}}^{\frac 1p},
\end{align*}
where the last equality follows from the Cauchy–Schwarz inequality.  
\end{proof}

\begin{remark}
By the definition of \(L_{\Psi_\ma}^{\ell_p}\) in \eqref{def:l and u}, a matrix \(\ma \in \mathbb{H}^{m \times d}\) has the phase retrieval property if and only if \(L_{\Psi_\ma}^{\ell_p} > 0\). In particular, for \(p=2\), Theorem \ref{th:equivalent} implies that \(\ma\) has phase retrieval if and only if there exists a constant \(c_0 > 0\) such that
\begin{equation} \label{eq:uniel2}
\sum_{j=1}^m \left| \operatorname{Re}(\vu^\ast \va_j \va_j^\ast \vv) \right|^2 \geq c_0 \|\vu\|_2^2 \|\vv\|_2^2
\end{equation}
holds for all \(\vu, \vv \in \mathbb{H}^d\) with \(\langle \vu, \vv \rangle \in \mathbb{R}\). Previous work \cite{balan2015stability} established that \(\ma\) has the phase retrieval property if and only if there exists \(c_0 > 0\) such that for all \(\vx, \vy \in \mathbb{H}^d\),
\[
\|\Psi_{\ma}(\vx) - \Psi_{\ma}(\vy)\|_2^2 \geq c_0 \left( \|\vx - \vy\|_2^2 \|\vx + \vy\|_2^2 - 4\, \operatorname{Im}^2 \langle \vx, \vy \rangle \right).
\]
Compared to this, the condition in \eqref{eq:uniel2} is more straightforward and convenient to verify.

\end{remark}

According to Theorem \ref{th:equivalent}, a straightforward conclusion is that the minimal lower Lipschitz bound in \eqref{def:l and u} is attained when \(\vx\) and \(\vy\) are orthogonal, as stated below.

\begin{corollary}\label{th:ortho:1}
	For any matrix $\ma = (\va_1,\cdots,\va_m)^\hh \in \H^{m\times d}$ and each $p \ge 1$, the lower Lipschitz constant $L^{\ell_p}_{\Psi_\ma}$ satisfies 
	\begin{align*}
			L^{\ell_p}_{\Psi_\ma} = \inf_{\substack{\vx,\vy\in\H^d, \nj{\vx,\vy}=0\\ \norm{\vx} \le 1,\norm{\vy} = 1}} \frac{\ucnorms{\abs{\ma\vx}^2-\abs{\ma\vy}^2}_p}{\disth{\vx, \vy}}.
		\end{align*}
\end{corollary}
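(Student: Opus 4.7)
The plan is to derive the corollary directly from Theorem~\ref{th:equivalent}, reusing the construction $\vx = (\vv+\vu)/2,\ \vy = (\vv-\vu)/2$ that already appears in its proof, together with the scale invariance of the ratio $\|\Psi_\ma(\vx)-\Psi_\ma(\vy)\|_p / \disth{\vx,\vy}$ under $(\vx,\vy) \mapsto (c\vx, c\vy)$ for $c>0$.

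One inequality is immediate: the set of $(\vx,\vy)$ appearing in the right-hand side infimum is a subset of the set in the original definition~\eqref{def:l and u} (every such orthogonal pair with $\norm{\vy}=1$ obviously satisfies $\disth{\vx,\vy} \ne 0$), so the restricted infimum is $\ge L_{\Psi_\ma}^{\ell_p}$.

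For the reverse inequality, the strategy is: given any $\vu,\vv$ with $\norm{\vu}=\norm{\vv}=1$ and $\nj{\vu,\vv}\in\R$ (the feasible set in Theorem~\ref{th:equivalent}), I would set $\vx = (\vv+\vu)/2$ and $\vy = (\vv-\vu)/2$. A direct calculation gives $\nj{\vx,\vy} = (\|\vv\|^2 - \|\vu\|^2 + \nj{\vu,\vv} - \overline{\nj{\vu,\vv}})/4 = 0$, so the pair is \emph{automatically} orthogonal. After replacing $\vu$ by $-\vu$ if necessary (which leaves both the constraints and the value $Q(\vu,\vv) := (\sum_j |\real{\vu^\hh\va_j\va_j^\hh\vv}|^p)^{1/p}$ unchanged), I may assume $\nj{\vu,\vv}\le 0$, so that $\norm{\vx}^2 = (1+\nj{\vu,\vv})/2 \le (1-\nj{\vu,\vv})/2 = \norm{\vy}^2$. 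Rescaling by $c = 1/\norm{\vy}$ produces a pair $(\vx',\vy') = (c\vx, c\vy)$ with $\nj{\vx',\vy'}=0$, $\norm{\vy'}=1$, and $\norm{\vx'}\le 1$, placing it in the constraint set of the right-hand side.

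The final step is to observe that both $\|\Psi_\ma(\vx)-\Psi_\ma(\vy)\|_p$ and $\disth{\vx,\vy}$ are homogeneous of degree $2$ in $(\vx,\vy)$, so the ratio is scale-invariant, and the identity~\eqref{eq:reuvxy} established in the proof of Theorem~\ref{th:equivalent} shows this ratio equals $Q(\vu,\vv)$. Hence for every admissible $(\vu,\vv)$ there is a pair $(\vx',\vy')$ in the right-hand side's constraint set achieving ratio $Q(\vu,\vv)$, so the right-hand side infimum is $\le \inf_{\vu,\vv} Q(\vu,\vv) = L_{\Psi_\ma}^{\ell_p}$, completing the proof. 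The ``hard'' part is really only careful bookkeeping of the normalization (the swap $\vu \leftrightarrow -\vu$ and the choice of scale $c$); no new analytic ingredient is needed beyond Theorem~\ref{th:equivalent} itself.
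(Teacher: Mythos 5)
Your proposal is correct and follows essentially the same route as the paper: the easy direction by restriction of the feasible set, and the reverse direction via the substitution $\vx=(\vv+\vu)/2$, $\vy=(\vv-\vu)/2$ together with the identity \eqref{eq:reuvxy}. If anything, your explicit bookkeeping of the sign swap $\vu\mapsto-\vu$ and the rescaling $c=1/\norm{\vy}$ is more careful than the paper, whose proof stops at the constraint $\disth{\vx,\vy}\ne 0$ rather than the normalization $\norm{\vx}\le 1$, $\norm{\vy}=1$ appearing in the statement.
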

\begin{proof}
For any \(\vu, \vv \in \mathbb{H}^d\) with \(\langle \vu, \vv \rangle \in \mathbb{R}\) and \(\|\vu\| = \|\vv\| = 1\), define \(\vx = (\vv + \vu)/2\) and \(\vy = (\vv - \vu)/2\). It is straightforward to verify that \(\langle \vx, \vy \rangle = 0\) and  $\disth{\vx, \vy}=1\ne 0$. Note that
\[
\operatorname{Re}(\vu^\ast \va_j \va_j^\ast \vv) = |\va_j^\ast \vx|^2 - |\va_j^\ast \vy|^2.
\]
Therefore,
\[
 L^{\ell_p}_{\Psi_\ma}= \inf_{\substack{\norm{\vu}=\norm{\vv}=1 \\ \nj{\vu,\vv}\in\R}} \xkh{\sum_{j=1}^m \abs{\real{\vu^\hh\va_j \va_j^\hh\vv}}^p}^{\frac 1p}  \ge  \inf_{\substack{\vx,\vy\in\H^d ,  \nj{\vx,\vy}=0 \\ \disth{\vx, \vy} \neq 0}} \frac{\ucnorms{\abs{\ma\vx}^2-\abs{\ma\vy}^2}_p}{\disth{\vx, \vy}} ,
\]
where the equality on the left follows from Theorem \ref{th:equivalent}. Clearly,
\[
 \inf_{\substack{\vx,\vy\in\H^d ,  \nj{\vx,\vy}=0 \\ \disth{\vx, \vy} \neq 0}} \frac{\ucnorms{\abs{\ma\vx}^2-\abs{\ma\vy}^2}_p}{\disth{\vx, \vy}} \ge  \inf_{\substack{\vx,\vy\in\H^d\\ \disth{\vx, \vy}\ne 0}} \frac{\ucnorms{\abs{\ma\vx}^2-\abs{\ma\vy}^2}_p}{\disth{\vx, \vy}} = L^{\ell_p}_{\Psi_\ma},
\]
where the last equality is by definition of \(L_{\Psi_\ma}^{\ell_p}\). Combining these inequalities, we conclude that
\[
L_{\Psi_\ma}^{\ell_p} =  \inf_{\substack{\vx,\vy\in\H^d ,  \nj{\vx,\vy}=0 \\ \disth{\vx, \vy} \neq 0}} \frac{\ucnorms{\abs{\ma\vx}^2-\abs{\ma\vy}^2}_p}{\disth{\vx, \vy}}.
\]
\end{proof}

\begin{remark}
A similar result has been established for the nonlinear map $\Phi$ in \cite[Theorem 1.1]{alharbi2024locality}, where the minimal lower Lipschitz bound of $\Phi$ occurs at orthogonal vectors.
\end{remark}

\subsection{Orthogonal conditions}
Recall from Theorem \ref{th:equivalent} that the lower Lipschitz bound is given by
\[
L_{\Psi_\ma}^{\ell_p} = \inf_{\substack{\|\vu\|_2 = \|\vv\|_2 = 1 \\ \langle \vu, \vv \rangle \in \mathbb{R}}} \left( \sum_{j=1}^m \left| \operatorname{Re}(\vu^\ast \va_j \va_j^\ast \vv) \right|^p \right)^{\frac{1}{p}}.
\]
To further simplify the computation, it is natural to ask under what conditions the infimum is attained when \(\vu\) and \(\vv\) are orthogonal. The next lemma provides a sufficient condition for this to hold in the case \(p=2\).

\begin{lemma}\label{le:ortho:2:p=2}
	Let $\ma = (\va_1,\cdots,\va_m)^\hh \in \H^{m\times d}$. If $\sum_{j=1}^{m} \ucabs{\va_j^\hh \vx}^4 = \sum_{j=1}^{m} \ucabs{\va_j^\hh \vy}^4 $ holds for all $\vx, \vy \in\H^d$ with $\norm{\vx} = \norm{\vy} = 1$ and  $\nj{\vx, \vy}=0$, then 
	\begin{align*}
		L_{\Psi_\ma}^{\ell_2} = \inf_{\substack{\norm{\vu}=\norm{\vv}=1 \\ \nj{\vu,\vv} = 0}} \xkh{\sum_{j=1}^m \abs{\real{\vu^\hh\va_j \va_j^\hh\vv}}^2}^{\frac 12}. 
	\end{align*}
\end{lemma}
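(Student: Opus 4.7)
The plan is to reduce Lemma \ref{le:ortho:2:p=2} to Corollary \ref{th:ortho:1} and then show that, under the stated hypothesis, the infimum in Corollary \ref{th:ortho:1} (which is over orthogonal pairs $\vx, \vy$ with $\norm{\vx}\le 1$ and $\norm{\vy}=1$) is actually attained when $\norm{\vx}=\norm{\vy}=1$. Once this reduction is in place, a linear change of variables converts the characterization into the orthogonal form stated in the lemma.

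Concretely, starting from Corollary \ref{th:ortho:1} with $p=2$, I would fix a unit vector $\vy$ and a unit vector $\vu$ orthogonal to $\vy$, and write $\vx = s\vu$ with $s\in[0,1]$. For orthogonal $\vx,\vy$ the nuclear-norm distance simplifies cleanly to $\disth{\vx,\vy} = s^2+1$, while the numerator expands as
\begin{equation*}
\ucnorms{\abs{\ma\vx}^2 - \abs{\ma\vy}^2}_2^2 = s^4 \sum_{j=1}^m \ucabs{\va_j^\hh\vu}^4 + \sum_{j=1}^m \ucabs{\va_j^\hh\vy}^4 - 2s^2 \sum_{j=1}^m \ucabs{\va_j^\hh\vu}^2\ucabs{\va_j^\hh\vy}^2.
\end{equation*}
Now the hypothesis is invoked: since $\vu$ and $\vy$ are unit orthogonal, $\sum_j|\va_j^\hh\vu|^4 = \sum_j|\va_j^\hh\vy|^4 =: c$ is constant along all such pairs, so with $D := \sum_j|\va_j^\hh\vu|^2|\va_j^\hh\vy|^2$ the squared ratio becomes
\begin{equation*}
g(t) := \frac{(t^2+1)c - 2tD}{(t+1)^2}, \qquad t := s^2 \in [0,1].
\end{equation*}

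A short differentiation gives $g'(t) = 2(c+D)(t-1)/(t+1)^3 \le 0$ on $[0,1]$, so $g$ is nonincreasing and its minimum is attained at $t=1$, i.e.\ $\norm{\vx}=1$. Consequently, the infimum in Corollary \ref{th:ortho:1} may be restricted to orthogonal \emph{unit} pairs $\vx,\vy$, for which $\disth{\vx,\vy}=2$. To land on the form stated in the lemma, I would use the bijection between orthogonal unit pairs $(\vx,\vy)$ and orthogonal unit pairs $(\vu,\vv)$ given by $\vu = (\vx-\vy)/\sqrt{2}$, $\vv = (\vx+\vy)/\sqrt{2}$. A direct computation (using that $(\vy^\hh\va_j)(\va_j^\hh\vx) = \overline{(\vx^\hh\va_j)(\va_j^\hh\vy)}$, so the cross terms are purely imaginary) yields $\real{\vu^\hh\va_j\va_j^\hh\vv} = \tfrac12(\ucabs{\va_j^\hh\vx}^2 - \ucabs{\va_j^\hh\vy}^2)$, which exactly cancels the factor $\disth{\vx,\vy} = 2$ and converts the restricted infimum into $\inf_{\|\vu\|=\|\vv\|=1,\,\langle\vu,\vv\rangle=0}(\sum_j|\real{\vu^\hh\va_j\va_j^\hh\vv}|^2)^{1/2}$, as desired.

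The main obstacle is really just step where monotonicity of $g$ on $[0,1]$ is established: everything before it (the bounds on $\disth{\vx,\vy}$ and the expansion of the $\ell_2$ numerator) is routine, and everything after it (the unit-pair bijection and the cross-term calculation) is a short algebraic verification. So the crux is noticing that the hypothesis is \emph{exactly} the condition needed to eliminate the $\vu$-dependence of the $s^4$-coefficient, thereby decoupling the optimization in $s$ from the optimization in $(\vu,\vv)$ and forcing the optimum to occur at $s=1$.
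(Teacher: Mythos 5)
Your proposal is correct and follows essentially the same route as the paper's proof: both reduce to orthogonal pairs via Corollary \ref{th:ortho:1}, both use the hypothesis to show the one-variable ratio $g(t)$ (your $g'(t)=2(c+D)(t-1)/(t+1)^3$ is exactly the paper's derivative after substituting $\sum_j\alpha_j^2=\sum_j\gamma_j^2$) is minimized at equal norms, and both finish with the change of variables $\vu=(\vx-\vy)/\sqrt2$, $\vv=(\vx+\vy)/\sqrt2$. No gaps.
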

\begin{proof}
To prove the result, it suffices to show that if the condition
 $\sum_{j=1}^{m} \ucabs{\va_j^\hh \vx}^4 = \sum_{j=1}^{m} \ucabs{\va_j^\hh \vy}^4 $ holds for all $\vx, \vy \in\H^d$ with $\norm{\vx} = \norm{\vy} = 1$ and $\nj{\vx, \vy}=0$, then 
	\begin{equation} \label{eq:normorth}
		L_{\Psi_\ma}^{\ell_2} =  \inf_{\substack{\vx,\vy\in\H^d ,  \nj{\vx,\vy}=0 \\ \norm{\vx}=\norm{\vy}=1}} \frac{\ucnorms{\abs{\ma\vx}^2-\abs{\ma\vy}^2}_2}{\disth{\vx, \vy}}.
	\end{equation}
Indeed, for any such $\vx, \vy \in\H^d$, define
\[
\vu = \frac{\vx - \vy}{\sqrt{2}}, \quad \vv = \frac{\vx + \vy}{\sqrt{2}}.
\]
Then $\norm{\vu}=\norm{\vv}=1$, $ \nj{\vu,\vv}=0$, and  $\disth{\vx, \vy}=2$.  Note that $\ucabs{\va_j^\hh \vx}^2 - \ucabs{\va_j^\hh \vy}^2=2\ucreal{\vu^\hh\va_j \va_j^\hh\vv} $. Therefore, 
\begin{equation} \label{eq:gexorthy}
 \inf_{\substack{\vx,\vy\in\H^d ,  \nj{\vx,\vy}=0 \\ \norm{\vx}=\norm{\vy}=1}} \frac{\ucnorms{\abs{\ma\vx}^2-\abs{\ma\vy}^2}_2}{\disth{\vx, \vy}} \ge \inf_{\substack{\norm{\vu}=\norm{\vv}=1 \\ \nj{\vu,\vv} = 0}} \xkh{\sum_{j=1}^m \abs{\real{\vu^\hh\va_j \va_j^\hh\vv}}^2}^{\frac 12} \ge 	L_{\Psi_\ma}^{\ell_2},
\end{equation}
where the last inequality follows from Theorem \ref{th:equivalent}. Combining \eqref{eq:normorth} and \eqref{eq:gexorthy}, we conclude that
\[
L_{\Psi_\ma}^{\ell_2} = \inf_{\substack{\norm{\vu}=\norm{\vv}=1 \\ \nj{\vu,\vv} = 0}} \xkh{\sum_{j=1}^m \abs{\real{\vu^\hh\va_j \va_j^\hh\vv}}^2}^{\frac 12}. 
\]

Next, we prove \eqref{eq:normorth}.  For any $\vx,\vy \in \H^d$ with $\nj{\vx,\vy}=0$ and $\norm{\vx}=\norm{\vy}=1$, define  $\vx^t = \sqrt{t}\,\vx$ and consider the function
 \begin{equation*}
	g(t) := \frac{\ucnorm{\ucabs{\ma\vxt}-\ucabs{\ma\vy}}^2}{\mbox{dist}_{\H}^2\xkh{\vxt, \vy}}.
\end{equation*}
We claim that 
\begin{equation} \label{cla:g1}
\min_{t \ge 0} g(t) = g(1).
\end{equation}
The equation \eqref{eq:normorth} then follows by combining \eqref{cla:g1} with Corollary \ref{th:ortho:1}.  
To prove the claim \eqref{cla:g1},  observing  that since \(\langle \vx^t, \vy \rangle = 0\), we have
\[ 
\disth{\vxt, \vy}=\ucnorm{\vxt-\vy} \ucnorm{\vxt+\vy}=t+1.
\]
For convenience, set $\alpha_j = \ucabs{\va_j^\hh \vx}^2$, $\gamma_j = \ucabs{\va_j^\hh \vy}^2$. Then
 \begin{equation*}
	g(t) = \frac{\sum_{j=1}^{m} \xkh{\alpha_j t - \gamma_j}^2}{(t+1)^2} .
\end{equation*}
It is straightforward to compute the derivative:
\[
g'(t) = \frac{2 \sumjm (\alpha_j t - \gamma_j)(\alpha_j + \gamma_j)}{(t+1)^3}.
\]
Recall the  assumption $\sum_{j=1}^{m} \ucabs{\va_j^\hh \vx}^4 = \sum_{j=1}^{m} \ucabs{\va_j^\hh \vy}^4 $, which implies 
\[
\sumjm \alpha_j^2 = \sumjm \gamma_j^2.
\] 
Evaluating the derivative at \(t=1\) gives 
\[
g'(1) = \frac{2 \sum_{j=1}^m (\alpha_j - \gamma_j)(\alpha_j + \gamma_j)}{2^3} = \frac{2 \sum_{j=1}^m (\alpha_j^2 - \gamma_j^2)}{8} = 0.
\]
Since the numerator of \(g'(t)\) is linear in \(t\) and all \(\alpha_j, \gamma_j \geq 0\), it follows that \(g(t)\) attains its minimum at \(t=1\). This completes the proof.
\end{proof}

Lemma \ref{le:ortho:2:p=2} requires the condition
\[
\sum_{j=1}^m |\va_j^\ast \vx|^4 = \sum_{j=1}^m |\va_j^\ast \vy|^4
\]
to hold for all \(\vx, \vy \in \mathbb{H}^d\) with \(\|\vx\|_2 = \|\vy\|_2 = 1\) and \(\langle \vx, \vy \rangle = 0\). A sufficient condition for this is that \(\{\va_j\}_{j=1}^m\) forms a tight \(p\)-frame with \(p=4\). Here, we focus on \(p\)-frames in finite-dimensional Hilbert spaces; for a broader treatment of \(p\)-frame theory in Banach spaces, see \cite{casazza2005frame, christensen2003p}.

\begin{definition}\label{def:frame}
	Let $H$ be a finite dimensional Hilbert space and let $1\le p < \infty$. A collection of vectors $\{\va_j\}_{j\in J}$ in $H$ is called a $p$-frame of $H$ with $p$-frame bounds $B \ge A > 0$ if
	 \begin{equation*}
		A \norms{\vx} \le \Big(\sum_{j\in J} \ucabs{\nj{\vx, \va_j}}^p\Big)^{1/p} \le B \norms{\vx} \qquad \mbox{for all} \quad \vx \in H.
	\end{equation*}
If $A = B$, then $\{\va_j\}_{j\in J}$ is called   a tight $p$-frame. 
\end{definition}

It is worth noting that the harmonic frame \(\mem\), defined in \eqref{def:mem}, satisfies $\sum_{j=1}^{m} \lvert\va_j^\hh \vu\rvert^4 = 3m/8$ for all \(\vu \in \mathbb{R}^2\) with \(\norm{\vu} = 1\) (see Section \ref{pf:em:l2}), and is therefore a tight 4-frame. Random frame constructions play a fundamental role in phase retrieval within finite-dimensional Hilbert spaces \cite{alharbi2023stable}. Statistical analysis shows that vectors sampled uniformly at random from the unit sphere in such spaces are highly likely to form an approximate tight frame when the sample size is sufficiently large \cite{vershynin2018}.
 Recent studies on phase retrieval stability have highlighted the advantages of employing \(p\)-norms and their extensions to Banach lattices, with explicit applications discussed in \cite{alaifari2017phase, christ2022examples, freeman2023discretizing}.

Building on Lemma \ref{le:ortho:2:p=2}, one might hope for a similar result to hold when \(p = 1\). However, the following lemma demonstrates that, even for the harmonic frame \(\mem \in \mathbb{R}^{m \times 2}\), the minimal lower Lipschitz bound does not necessarily occur at orthogonal vectors \(\vu\) and \(\vv\).

\begin{lemma}\label{le:ortho:l - em:p=1} 
	Let $m \ge 3$. For $\mem = (\va_1,\cdots,\va_m)^\T \in \R^{m\times 2}$ defined in \eqref{def:mem}, one has: \begin{equation}\label{eq:ortho:1}
		\inf_{\substack{\norm{\vu}=\norm{\vv}=1 \\ \nj{\vu,\vv} = 0}} \sum_{j=1}^m \ucabs{\vu^\T\va_j \va_j^\T\vv} = \left\{\begin{aligned}
			&\frac{1}{2\tan \frac{\pi}{2 m}}, \quad \mbox{if}\ m\ \mbox{is odd},\\
			&\ \frac{1}{ \tan \frac{\pi}{m}}, \ \ \quad \mbox{if}\ m\ \mbox{is even},
		\end{aligned}\right.
	\end{equation}
	and \begin{equation}\label{eq:ortho:2}
		\inf_{\norm{\vu}=\norm{\vv}=1} \sum_{j=1}^m \ucabs{\vu^\T\va_j \va_j^\T\vv} = \left\{\begin{aligned}
			&\frac{\cos \frac{\pi}{2 m}}{2\tan \frac{\pi}{2 m}}, \quad \mbox{if}\ m\ \mbox{is odd},\\
			&\ \frac{1}{ \tan \frac{\pi}{m}},             \ \ \quad \mbox{if}\ m\ \mbox{is even}.
		\end{aligned}\right.
	\end{equation}
\end{lemma}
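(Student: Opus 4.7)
The plan is to parameterize $\vu=(\cos\alpha,\sin\alpha)^\T$ and $\vv=(\cos\beta,\sin\beta)^\T$ and, with $\phi_j=(j-1)\pi/m$, apply the product-to-sum identity to rewrite
\begin{equation*}
\sum_{j=1}^m|\vu^\T\va_j\va_j^\T\vv|=\tfrac{1}{2}\sum_{j=1}^m\bigl|c+\cos(s-2(j-1)\pi/m)\bigr|,
\end{equation*}
where $c=\cos(\alpha-\beta)\in[-1,1]$ and $s=\alpha+\beta$. The orthogonality condition $\nj{\vu,\vv}=0$ corresponds exactly to $c=0$, so \eqref{eq:ortho:1} reduces to $\tfrac12\min_s T(s)$ with $T(s):=\sum_{j=1}^m|\cos(s-2(j-1)\pi/m)|$, while \eqref{eq:ortho:2} additionally minimizes over $c\in[-1,1]$.

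The key computational input is the identity $\min_s G_N(s)=\cot(\pi/(2N))$ for every integer $N\geq 1$, where $G_N(s):=\sum_{k=0}^{N-1}|\cos(s-k\pi/N)|$. This follows because $G_N$ is $\pi/N$-periodic and symmetric about both $s=0$ and $s=\pi/(2N)$, so its extrema lie at these two points, both evaluable in closed form via $\sum_{k=0}^{n-1}\cos(a+kd)=\tfrac{\sin(nd/2)}{\sin(d/2)}\cos(a+(n-1)d/2)$. For \eqref{eq:ortho:1} with odd $m$, the residues $\{2(j-1)\pi/m\bmod\pi\}_{j=1}^m$ equal $\{k\pi/m:0\leq k\leq m-1\}$ because $\gcd(2,m)=1$; hence $T(s)=G_m(s)$ with minimum $\tfrac12\cot(\pi/(2m))=1/(2\tan(\pi/(2m)))$. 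For even $m=2\ell$, each residue $k\pi/\ell$ is hit twice, giving $T(s)=2G_\ell(s)$ with minimum $\cot(\pi/(2\ell))=\cot(\pi/m)=1/\tan(\pi/m)$, completing \eqref{eq:ortho:1}.

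For \eqref{eq:ortho:2} in the even case $m=2\ell$, the pairing $j\leftrightarrow j+\ell$ shifts the argument by $\pi$, so $\{\cos(s-2(j-1)\pi/m)\}$ consists of $\pm$-pairs; its median is $0$, hence the convex piecewise-linear function $c\mapsto\sum_j|c+x_j|$ is minimized at $c^*=0$, matching the orthogonal value $\cot(\pi/m)$. For odd $m$, the sign pattern $\epsilon_j=\text{sgn}(c+\cos(s-2(j-1)\pi/m))$ has $\sum_j\epsilon_j\in\{\pm1,\pm3,\ldots\}\setminus\{0\}$, so the partial derivative in $c$ never vanishes inside a cell; the minimum must occur on the manifold where some $c+\cos(s-\theta_{j_0})=0$. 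Exploiting the discrete rotational invariance $(\alpha,\beta)\to(\alpha+\pi/m,\beta+\pi/m)$ and the symmetry $\alpha\leftrightarrow\beta$, one can take $j_0=1$ and $\alpha=\pi/2$, reducing the problem to the one-variable minimization of $F(\beta):=\sum_{j=1}^m\sin\phi_j\,|\cos(\beta-\phi_j)|$.

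The computation of this one-dimensional minimum is the main technical step and likely obstacle. On the first smooth interval $[0,\pi/(2m)]$, the sign pattern is fixed and applying the same sum formulas to the decomposition $F(\beta)=\bigl(\tfrac12\sum_j\epsilon_j\sin(2\phi_j)\bigr)\cos\beta+\bigl(\tfrac12\sum_j\epsilon_j-\tfrac12\sum_j\epsilon_j\cos(2\phi_j)\bigr)\sin\beta$ collapses to $F(\beta)=\tfrac12\cot(\pi/(2m))\cos\beta$, which decreases to $\cos^2(\pi/(2m))/(2\sin(\pi/(2m)))=\cos(\pi/(2m))/(2\tan(\pi/(2m)))$ at $\beta=\pi/(2m)$. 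Crossing into the next cell, one sign $\epsilon_{(m+3)/2}$ flips and the triple-angle identity $\cos(3\theta)=\cos\theta(1-4\sin^2\theta)$ with $\theta=\pi/(2m)$ gives $F'(\pi/(2m)^+)=\tfrac32\cos(\pi/(2m))>0$, confirming a local minimum with the claimed value. The hard part will be upgrading this to a global minimum over $[0,\pi/2]$: on each smooth cell one has $F''=-F<0$ (so $F'$ is strictly decreasing), and at every breakpoint $\beta_k=(2k+1)\pi/(2m)$ the derivative jumps up by $2\cos\beta_k\geq0$. Combined with the boundary condition $F'(\pi/2)=0$ coming from the reflection $F(\pi-\beta)=F(\beta)$, a finite induction across the $(m-1)/2$ breakpoints shows $F'(\beta)\geq0$ throughout $[\pi/(2m),\pi/2]$; reflection and periodicity then yield global monotonicity and hence the claimed minimum.
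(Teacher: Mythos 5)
Your reduction is a genuinely different and, for most of the lemma, cleaner route than the paper's: the product-to-sum identity turns the objective into $\tfrac12\sum_j|c+\cos(s-2\phi_j)|$, the equally-spaced-nodes identity $\min_s\sum_{k=0}^{N-1}|\cos(s-k\pi/N)|=\cot(\pi/(2N))$ disposes of \eqref{eq:ortho:1} for both parities, and the median argument settles the even case of \eqref{eq:ortho:2} without any case analysis. The vanishing-term argument (the $c$-derivative equals $\tfrac12\sum_j\epsilon_j$, which is a nonzero odd integer) correctly forces the odd-$m$ minimizer onto the set where $\vu\perp\va_{j_0}$ or $\vv\perp\va_{j_0}$, and the symmetry reduction to $\alpha=\pi/2$ and the computation of $F$ on the first cell, together with $F'(\tfrac{\pi}{2m}^+)=\tfrac32\cos\tfrac{\pi}{2m}>0$, are all correct. (The paper instead computes the full two-variable function $G(\theta,\psi)$ piecewise in closed form and compares cells directly.)

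However, there is a genuine gap in the last step of the odd case of \eqref{eq:ortho:2}. You have established that $\beta=\tfrac{\pi}{2m}$ is a \emph{local} minimum of $F$ with the claimed value, but the asserted ``finite induction'' does not prove it is \emph{global}. The three facts you invoke --- $F''=-F<0$ on each cell, upward jumps $2\cos\beta_k\ge 0$ of $F'$ at the breakpoints, and $F'(\pi/2)=0$ --- do not imply $F'\ge 0$ on $[\tfrac{\pi}{2m},\tfrac{\pi}{2}]$. Going forward, $F'$ may decrease through zero inside a cell and only recover sign after the next upward jump; going backward from $F'(\pi/2)=0$, concavity of the last cell gives $F'\ge 0$ there, but at the breakpoint entering that cell one has $F'(\beta_k^-)=F'(\beta_k^+)-2\cos\beta_k$, which the stated facts do not keep nonnegative. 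A toy function with two concave cells, a positive left derivative, a zero right derivative, and an upward jump in between can still have $F'<0$ on part of the first cell, so the qualitative ingredients alone cannot close the argument. What piecewise concavity \emph{does} give you for free is that the global minimum of $F$ over $[0,\pi]$ is attained at a breakpoint $\beta_k=\tfrac{(2k-1)\pi}{2m}$; to finish you must actually compare the values $F(\beta_k)$ (or, equivalently, derive the closed form $F(\beta)=A_k\cos\beta+B_k\sin\beta$ on every cell via the Lagrange sums, not just the first one, and verify $F(\beta_{k+1})\ge F(\beta_k)$). This quantitative comparison across all $\tfrac{m-1}{2}$ breakpoints is precisely the content of the paper's formula for $G_k(\theta,\phi)$ and its subsequent case-by-case verification that $G_{k+1}\ge G_k$; it is the irreducible computational core of the lemma and is missing from your write-up.
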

\begin{proof}
	See Section \ref{pf:ortho:l - em:p=1}. 
\end{proof}

\section{The Condition number under \texorpdfstring{\( \ell_2 \)}{l2} norm}\label{sec:p = 2}
In this section, we first establish universal lower bounds on the condition number \(\betatwo_{\Psi_\ma}\) for all \(\ma \in \mathbb{H}^{m \times d}\). We then demonstrate the tightness or asymptotic tightness of these bounds by computing the condition numbers for the harmonic frame \(\mem \in \mathbb{R}^{m \times 2}\) and for Gaussian random matrices \(\ma \in \mathbb{H}^{m \times d}\).

\begin{theorem}\label{le:opt:number:2}
	For any matrix $\ma \in \H^{m\times d}$, one has 
	\begin{align*}
		\betatwo_{\Psi_\ma} \ge \betatwobar := \left\{\begin{aligned}
			&\sqrt{3},\quad \mbox{if } \H = \R, \\
			&\,\ 2,       \quad\; \mbox{if } \H = \C.
		\end{aligned}\right. 
	\end{align*}
\end{theorem}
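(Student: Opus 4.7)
The plan is to use the characterizations from Theorem \ref{th:equivalent},
\[
U_{\Psi_\ma}^{\ell_2} = \sup_{\norm{\vu}=1}\Big(\sumjm \abs{\va_j^\hh \vu}^4\Big)^{1/2}, \qquad L_{\Psi_\ma}^{\ell_2} = \inf_{\substack{\norm{\vu}=\norm{\vv}=1 \\ \nj{\vu,\vv}\in\R}}\Big(\sumjm \abs{\real{\vu^\hh\va_j\va_j^\hh\vv}}^2\Big)^{1/2},
\]
and compare them at a maximizer of the quartic form $g(\vu):=\sumjm \abs{\va_j^\hh \vu}^4$. Since $g$ is continuous on the compact sphere $\SSS_\H^{d-1}$, there is a unit $\vu_0$ with $g(\vu_0)=(U_{\Psi_\ma}^{\ell_2})^2$. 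I may assume $d\ge 2$ and that $\ma$ has the phase retrieval property, else the bound is trivial. Fix any unit $\vv_0\in\H^d$ with $\vu_0^\hh \vv_0=0$; then $\nj{\vu_0,\vv_0}=0\in\R$, so the infimum formula yields $(L_{\Psi_\ma}^{\ell_2})^2 \le \sumjm (\real{\vu_0^\hh \va_j \va_j^\hh \vv_0})^2$.

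In the real case, I would take the great circle $\vx(\theta):=\cos\theta\,\vu_0+\sin\theta\,\vv_0$ on $\SSS_\R^{d-1}$, set $a_j:=\va_j^\T \vu_0$ and $b_j:=\va_j^\T \vv_0$, and Taylor-expand $g(\vx(\theta))$ to second order:
\[
g(\vx(\theta)) = \sumjm a_j^4 + 4\theta \sumjm a_j^3 b_j + \theta^2\Big(6\sumjm a_j^2 b_j^2 - 2\sumjm a_j^4\Big) + O(\theta^3).
\]
Since $\vu_0$ maximizes $g$, the second-order coefficient must be nonpositive, giving $\sumjm a_j^2 b_j^2 \le \tfrac{1}{3}\sumjm a_j^4=\tfrac{1}{3}(U_{\Psi_\ma}^{\ell_2})^2$. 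Combined with $(L_{\Psi_\ma}^{\ell_2})^2 \le \sumjm a_j^2 b_j^2$, this yields $\betatwo_{\Psi_\ma} \ge \sqrt{3}$.

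In the complex case, a single tangent direction only produces a factor of $\sqrt{3}$; to reach $2$ I would replace the single great circle with the phase-rotated family $\vx(\theta,\phi):=\cos\theta\,\vu_0+\sin\theta\,\e^{\i\phi}\,\vv_0$, each member of which lies on $\SSS_\C^{d-1}$ since $\vu_0^\hh \vv_0=0$. Writing $c_j:=\va_j^\hh \vu_0$, $d_j:=\va_j^\hh \vv_0$, and $z_j:=\overline{c_j}\,d_j$, the analogous expansion produces a second-order coefficient $4\sumjm (\real{\e^{\i\phi} z_j})^2 + 2\sumjm \abs{c_j}^2\abs{d_j}^2 - 2\sumjm \abs{c_j}^4 \le 0$ valid for every $\phi\in[0,2\pi)$. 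Averaging this inequality over $\phi$ and using $\tfrac{1}{2\pi}\int_0^{2\pi}(\real{\e^{\i\phi} z_j})^2\,d\phi=\tfrac{1}{2}\abs{z_j}^2=\tfrac{1}{2}\abs{c_j}^2\abs{d_j}^2$ gives $\sumjm \abs{c_j}^2\abs{d_j}^2 \le \tfrac{1}{2}(U_{\Psi_\ma}^{\ell_2})^2$. Moreover, each $\e^{\i\phi}\vv_0$ is itself unit and orthogonal to $\vu_0$, so Theorem \ref{th:equivalent} gives $(L_{\Psi_\ma}^{\ell_2})^2 \le \sumjm (\real{\e^{\i\phi} z_j})^2$ for every $\phi$; averaging once more, $(L_{\Psi_\ma}^{\ell_2})^2 \le \tfrac{1}{2}\sumjm \abs{c_j}^2\abs{d_j}^2 \le \tfrac{1}{4}(U_{\Psi_\ma}^{\ell_2})^2$, i.e., $\betatwo_{\Psi_\ma} \ge 2$.

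The main technical hurdle is the complex case: a single second-order variation at the maximizer yields only the factor $3$, and the improvement to $4$ requires exploiting the extra $U(1)$ freedom in two places simultaneously---the nonpositivity of $g''(0)$ must hold for every phase $\phi$, and each $\e^{\i\phi}\vv_0$ is separately admissible in the infimum defining $L_{\Psi_\ma}^{\ell_2}$. The elementary identity $\tfrac{1}{2\pi}\int_0^{2\pi}(\real{\e^{\i\phi} z})^2\,d\phi=\tfrac{1}{2}\abs{z}^2$ is what couples these two sources of averaging and delivers the optimal factor of $4$.
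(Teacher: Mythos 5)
Your proof is correct, but it takes a genuinely different route from the paper. The paper reduces to $d=2$ via \eqref{eq:betaAB}, parameterizes $\va_i$, $\vu$, $\vv$ by angles, and then bounds $(U_{\Psi_\ma}^{\ell_2})^2$ from below and $(L_{\Psi_\ma}^{\ell_2})^2$ from above by the \emph{same} reference quantity $\sum_i \|\va_i\|^4$ (with constants $3/8$ vs.\ $1/8$ in the real case and $1/3$ vs.\ $1/12$ in the complex case), the bounds coming from replacing a max (resp.\ min) by an average over the circle or, via the Poisson surface-integral formula, over $\SSS^2$. You instead work directly in $\H^d$ at a global maximizer $\vu_0$ of the quartic form and exploit second-order optimality along great circles through $\vu_0$: the nonpositivity of the $\theta^2$-coefficient gives $\sum_j a_j^2 b_j^2 \le \tfrac13 (U_{\Psi_\ma}^{\ell_2})^2$ in the real case, and in the complex case averaging both the second-order condition and the admissible test pairs $(\vu_0, \e^{\i\phi}\vv_0)$ over the phase $\phi$ (using $\tfrac{1}{2\pi}\int_0^{2\pi}(\real{\e^{\i\phi}z})^2\,\dd\phi = \tfrac12|z|^2$) upgrades the factor from $3$ to $4$. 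I checked the Taylor expansions and the chaining of inequalities (both estimates are made with the \emph{same} $\vv_0$, so they compose), and the degenerate cases ($d=1$, $L_{\Psi_\ma}^{\ell_2}=0$) are handled exactly as the paper implicitly does. Your approach is shorter and coordinate-free, avoids the $d=2$ reduction and the explicit trigonometric/surface integrals, and makes transparent where the extra factor in the complex case comes from (the $U(1)$ freedom); the paper's global-averaging approach yields as a by-product the separate quantitative facts $(U_{\Psi_\ma}^{\ell_2})^2 \ge \tfrac38\sum_i t_i^4$ and $(M_\ma)^2 \le \tfrac18\sum_i t_i^4$, which connect more directly to the tight-frame computations used for the harmonic frame.
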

\begin{proof}
	See Section \ref{pf:opt:number:2}. 
\end{proof}

It is noteworthy that, to prove Theorem \ref{le:opt:number:2}, we only need to consider the case where \( d = 2 \). In fact, letting $\mb = (\vb_1,\cdots,\vb_m)^\T\in\H^{m\times 2}$ be the the matrix composed of the first two columns of $\ma\in\hmd$, one can easily check that \begin{align*}
	L^{\ell_p}_{\Psi_\ma} = \inf_{\substack{\nj{\vu,\vv}\in\R \\ \norm{\vu}=\norm{\vv}=1}} \xkh{\sum_{j=1}^m \abs{\real{\vu^\hh\va_j \va_j^\hh\vv}}^p}^{\frac 1p} \le \inf_{\substack{\nj{\vu,\vv}\in\R, \vu,\vv\in\H^2 \\ \norm{\vu}=\norm{\vv}=1}}  \xkh{\sum_{j=1}^m \abs{\real{\vu^\hh\vb_j \vb_j^\hh\vv}}^p}^{\frac 1p}
\end{align*}
and\begin{align*}
	U^{\ell_p}_{\Psi_\ma} = \sup_{\norm{\vu}=1} \xkh{\sum_{j=1}^m \abs{\va_j^\hh\vu}^{2p}}^{\frac 1p} \ge \sup_{\vu\in\H^2, \norm{\vu}=1} \xkh{\sum_{j=1}^m \abs{\vb_j^\hh\vu}^{2p}}^{\frac 1p}.
\end{align*}
As a result, one has 
\begin{equation} \label{eq:betaAB}
	\betalp_{\Psi_\ma} = \frac{U^{\ell_p}_{\Psi_\ma}}{L^{\ell_p}_{\Psi_\ma}} \ge \frac{U^{\ell_p}_{\Psi_\mb}}{L^{\ell_p}_{\Psi_\mb}}  = \betalp_{\Psi_\mb}. 
\end{equation}

The following theorem establishes that the universal lower bound given in Theorem \ref{le:opt:number:2} for \(\mathbb{H} = \mathbb{R}\) is tight. To begin, recall the harmonic frame consisting of \(m\) equidistant points on the upper semicircle, defined as
\renewcommand{\arraystretch}{1.5}
\begin{align} \label{eq:harmf}
\mem := \begin{pmatrix}
1 & \cos\left(\frac{\pi}{m}\right) & \cdots & \cos\left(\frac{(m-1)\pi}{m}\right) \\
0 & \sin\left(\frac{\pi}{m}\right) & \cdots & \sin\left(\frac{(m-1)\pi}{m}\right)
\end{pmatrix}^\top \in \mathbb{R}^{m \times 2}.
\end{align}
\renewcommand{\arraystretch}{1}

\begin{theorem}\label{le:em:l2}
	For any $m \ge 3$, let $\mem$ be defined as in \eqref{eq:harmf}. Then
	\begin{equation*}
		\betatwo_{\Psi_\mem} = \sqrt{3}. 
	\end{equation*}
\end{theorem}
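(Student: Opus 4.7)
The plan is to compute $L^{\ell_2}_{\Psi_\mem}$ and $U^{\ell_2}_{\Psi_\mem}$ separately and take their ratio. I will use the parametrization $\va_j = (\cos((j-1)\pi/m), \sin((j-1)\pi/m))^\top$ for $j = 1, \ldots, m$, and for any unit $\vu = (\cos\theta, \sin\theta)^\top \in \R^2$ write $\vu^\top\va_j = \cos((j-1)\pi/m - \theta)$.

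First I would verify that $\{\va_j\}_{j=1}^m$ forms a tight $4$-frame, i.e.\ $\sum_{j=1}^m |\va_j^\top\vu|^4 = 3m/8$ for every unit vector $\vu$. This follows from the identity $\cos^4 x = \tfrac{3}{8} + \tfrac{1}{2}\cos(2x) + \tfrac{1}{8}\cos(4x)$ applied with $x = (j-1)\pi/m - \theta$, together with the elementary observation that $\sum_{j=1}^m e^{2\i k(j-1)\pi/m} = 0$ for $k = 1, 2$ whenever $m \ge 3$ (handled case-by-case for $m$ odd vs.\ $m$ even). This simultaneously gives $U^{\ell_2}_{\Psi_\mem} = \|\mem\|_{2\to 4}^2 = \sqrt{3m/8}$ via Theorem~\ref{th:equivalent}.

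Next I invoke Lemma~\ref{le:ortho:2:p=2}. The tight $4$-frame property just established implies $\sum_j |\va_j^\top\vx|^4 = \sum_j |\va_j^\top\vy|^4 = 3m/8$ for any unit $\vx, \vy$, so the hypothesis of Lemma~\ref{le:ortho:2:p=2} is satisfied, and the infimum defining $L^{\ell_2}_{\Psi_\mem}$ may be restricted to orthogonal unit pairs. In $\R^2$ every such pair takes the form $\vu = (\cos\theta,\sin\theta)^\top$, $\vv = (-\sin\theta,\cos\theta)^\top$, for which $\vu^\top\va_j\,\va_j^\top\vv = \cos((j-1)\pi/m - \theta)\sin((j-1)\pi/m - \theta) = \tfrac{1}{2}\sin\bigl(2(j-1)\pi/m - 2\theta\bigr)$. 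Squaring and using $\sin^2 y = (1-\cos(2y))/2$ along with the same roots-of-unity identity (now for $k=2$ in $e^{4\i(j-1)\pi/m}$) yields
\begin{equation*}
\sum_{j=1}^m \bigl|\vu^\top\va_j\,\va_j^\top\vv\bigr|^2 \;=\; \frac{1}{4}\sum_{j=1}^m \sin^2\!\Bigl(\tfrac{2(j-1)\pi}{m} - 2\theta\Bigr) \;=\; \frac{m}{8},
\end{equation*}
independent of $\theta$. Hence $L^{\ell_2}_{\Psi_\mem} = \sqrt{m/8}$, and therefore $\betatwo_{\Psi_\mem} = \sqrt{3m/8}/\sqrt{m/8} = \sqrt{3}$.

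The only subtlety worth watching is the case $m = 4$, where one must be careful to verify $\sum_{j=1}^{4} e^{\i\pi(j-1) k}$ vanishes for $k = 1, 2$ (it does, since each sum equals $1-1+1-1=0$); once that is checked, the argument is uniform in $m \ge 3$. No genuine obstacle arises — the result is essentially a direct trigonometric computation packaged with Theorem~\ref{th:equivalent} and Lemma~\ref{le:ortho:2:p=2}.
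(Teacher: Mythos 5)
Your proof is correct and follows essentially the same route as the paper: both establish the tight $4$-frame identity $\sum_{j}|\va_j^\top\vu|^4 = 3m/8$ to obtain $U^{\ell_2}_{\Psi_\mem}=\sqrt{3m/8}$, and both reduce the lower bound to the value $m/8$ on orthogonal pairs; the only difference is that you route the reduction through Lemma~\ref{le:ortho:2:p=2}, whereas the paper computes $\sum_j|\va_j^\top\vu|^2|\va_j^\top\vv|^2 = \frac{m}{4} + \frac{m}{8}\cos(2\theta-2\varphi)$ in closed form for arbitrary unit pairs, which exhibits the minimizer directly without the lemma. One cosmetic slip in your $m=4$ aside: the sums to check are $\sum_{j=1}^4 e^{\i k(j-1)\pi/2}$ for $k=1,2$ (equal to $1+\i-1-\i$ and $1-1+1-1$ respectively), not $\sum_{j=1}^4 e^{\i\pi(j-1)k}$, but your general roots-of-unity statement already covers this case, so nothing is lost.
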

\begin{proof}
	See Section \ref{pf:em:l2}. 
\end{proof}

Next, we estimate the  condition number for a standard Gaussian random matrix, which shows that the universal lower bound given in Theorem \ref{le:opt:number:2} is asymptotic optimal for the real case.

%
\begin{theorem}\label{le:concent:p=2} 
Assume that \(\ma = (\va_1, \ldots, \va_m)^\top \in \mathbb{R}^{m \times d}\) is a standard Gaussian matrix with \(\va_j \sim \mathcal{N}(\mathbf{0}, \mi_d), j=1,\ldots,m\). For any \(0 < \delta < 4\), with probability at least \(1 - c_1 \exp(-c_0 d) - O(m^{-d})\), we have
\[
\betatwobar \leq \betatwo_{\Psi_\ma} \leq \betatwobar + \delta,
\]
provided that \(m \geq C_0 \delta^{-4} d^2\), where \(c_0, C_0 > 0\) are universal constants.
\end{theorem}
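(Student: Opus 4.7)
The lower bound $\betatwo_{\Psi_\ma}\ge\sqrt 3$ is immediate from Theorem~\ref{le:opt:number:2} (which holds for every realization of $\ma$), so the task is to prove the matching upper bound $\betatwo_{\Psi_\ma}\le\sqrt 3+\delta$ with the claimed probability. By Theorem~\ref{th:equivalent}, and since the constraint $\nj{\vu,\vv}\in\R$ is vacuous over $\R^d$,
\begin{equation*}
U^{\ell_2}_{\Psi_\ma}=\sup_{\norm{\vu}=1}\Bigl(\sum_{j=1}^m (\va_j^\T\vu)^4\Bigr)^{1/2},\qquad
L^{\ell_2}_{\Psi_\ma}=\inf_{\norm{\vu}=\norm{\vv}=1}\Bigl(\sum_{j=1}^m (\vu^\T\va_j)^2(\va_j^\T\vv)^2\Bigr)^{1/2}.
\end{equation*}
My plan is to show that, on a high-probability event, $U^{\ell_2}_{\Psi_\ma}\le\sqrt{m(3+\eta)}$ and $L^{\ell_2}_{\Psi_\ma}\ge\sqrt{m(1-\eta)}$ for some $\eta=c\delta$, so that $\betatwo_{\Psi_\ma}\le \sqrt{(3+\eta)/(1-\eta)}\le\sqrt 3+\delta$.

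The expected targets come from Isserlis' formula: for $\va\sim\mathcal{N}(\mathbf 0,\mi_d)$ and unit $\vu,\vv$ with $t=\nj{\vu,\vv}$, one has $\E[(\va^\T\vu)^4]=3$ and $\E[(\vu^\T\va)^2(\va^\T\vv)^2]=1+2t^2\ge 1$, with equality in the latter exactly at orthogonal pairs. Pointwise tails come from the observation that each summand is a degree-$4$ Gaussian polynomial, hence sub-Weibull of order $1/2$ with Orlicz norm uniformly bounded on the sphere. A Bernstein / Hanson--Wright type inequality then yields, for small $\eta>0$ and any fixed unit $\vu,\vv$,
\begin{equation*}
\Prob\Bigl[\bigl|\tfrac1m\textstyle\sum_j (\va_j^\T\vu)^4-3\bigr|>\eta\Bigr]\le 2\exp(-cm\eta^2),
\end{equation*}
and analogously for $\tfrac1m\sum_j(\vu^\T\va_j)^2(\va_j^\T\vv)^2$ around its mean $1+2t^2$.

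To pass from pointwise to uniform, I would use a standard $\epsilon$-net argument. Take an $\epsilon$-net $\mathcal{N}_\epsilon\subset\SSS_\R^{d-1}$ with $|\mathcal{N}_\epsilon|\le(3/\epsilon)^d$, apply the displayed tail bound and a union bound over $\mathcal{N}_\epsilon$ for $U^{\ell_2}_{\Psi_\ma}$ and over $\mathcal{N}_\epsilon\times\mathcal{N}_\epsilon$ for $L^{\ell_2}_{\Psi_\ma}$, and absorb the discretization error via a crude Lipschitz estimate driven by $\max_j\norm{\va_j}\lesssim\sqrt{d\log m}$ (which holds with probability $1-O(m^{-d})$ by Gaussian concentration). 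Choosing $\epsilon$ polynomially small in $m$, $\eta=\delta/C$ with $C$ large, and balancing net cardinality against the Bernstein tail, yields the two target inequalities under $m\ge C_0\delta^{-4}d^2$ with failure probability $c_1\exp(-c_0d)+O(m^{-d})$. Combining then gives $\betatwo_{\Psi_\ma}\le\sqrt 3+\delta$.

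The principal obstacle is the uniform lower bound on $L^{\ell_2}_{\Psi_\ma}$: one must simultaneously control an infimum of sums of heavy-tailed (sub-Weibull) products $(\vu^\T\va_j)^2(\va_j^\T\vv)^2$ over the $(2d-1)$-dimensional product sphere, and keep the failure probability at $O(m^{-d})$. The sample-complexity scaling $d^2\delta^{-4}$ arises from the combination of (i) the product net of cardinality $(1/\epsilon)^{2d}$ with $\epsilon$ polynomial in $m$, (ii) the Bernstein-type deviation rate $\exp(-cm\eta^2)$ needed to beat the net union bound, and (iii) the Lipschitz/discretization error controlled through the tail bound on $\max_j\norm{\va_j}$.
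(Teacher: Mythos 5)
Your proposal is correct and follows essentially the same route as the paper: both arguments reduce the claim to a uniform upper bound $U_{\Psi_\ma}^{\ell_2}\le\sqrt{m}(\sqrt 3+\delta_1)$ and a uniform lower bound $L_{\Psi_\ma}^{\ell_2}\ge\sqrt{m}(1-\delta_1)$, using $\E(\va^\T\vu)^4=3$ and $\E(\vu^\T\va)^2(\va^\T\vv)^2=1+2\nj{\vu,\vv}^2$, and then take the ratio. The only difference is that the paper delegates the two concentration steps to cited lemmas (the $\norms{\ma}_{2\to4}$ deviation bound of Lemma~\ref{le:u:p=2} for the upper constant, and the net-plus-truncation bound of Lemma~\ref{le:l:p=2} — whose truncation event is exactly the source of the $O(m^{-d})$ term — for the lower constant), whereas you sketch the same net/Bernstein/truncation machinery directly.
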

\begin{proof}
	See Section \ref{pf:concent:p=2}. 
\end{proof}

\begin{remark}
In Theorem \ref{le:concent:p=2}, as \(\delta \to 0\), the condition number \(\betatwo_{\Psi_\ma}\) converges to \(\betatwobar\). This shows that \(\betatwobar\) is asymptotically optimal in the real case. Moreover, one can check that \(\betatwobar\) is also asymptotically optimal in the complex case by analyzing the condition number of a standard complex Gaussian matrix together with the law of large numbers.
\end{remark}

%

\subsection{Proof of Theorem \ref{le:opt:number:2}} \label{pf:opt:number:2} 
According to \eqref{eq:betaAB}, without loss of generality, we assume  \( d = 2 \), i.e. $\ma = (\va_1,\va_2,\cdots,\va_m)^\hh\in \H^{m\times 2}$. We write
\begin{align}\label{eq:ai}
	\va_i = t_i \begin{pmatrix}
		\cos \varphi_i \cos \alpha_i\\
		\sin \varphi_i \cos \beta_i
	\end{pmatrix} + t_i\cdot \i \cdot \begin{pmatrix}
	\cos \varphi_i \sin \alpha_i\\
	\sin \varphi_i \sin \beta_i
	\end{pmatrix}
\end{align}
where $t_i = \norm{\va_i}\ge 0$, $(\varphi_i,\alpha_i,\beta_i)\in I^\H$ and 
\begin{align}\label{eq:ih}
	I^\H := \left\{\begin{aligned}
		&[0, \pi]\times \dkh{0}\times \dkh{0},       & \mbox{if } \H = \R, \\
		&[0, \pi]\times [0, 2\pi]\times [0, 2\pi],   & \mbox{if } \H = \C.
	\end{aligned}\right. 
\end{align}
Applying Theorem \ref{th:equivalent}, one has
 \begin{align}\label{eq:ulp}
	U^{\ell_2}_{\Psi_\ma} = \sup_{\norm{\vu}=1} \xkh{\sum_{j=1}^m \abs{\va_j^\hh\vu}^{4}}^{\frac 12} 
\end{align} and
 \begin{eqnarray}\label{eq:llp}
	L^{\ell_2}_{\Psi_\ma} &= &  \inf_{\norm{\vu}=\norm{\vv}=1} \xkh{\sumim \abs{\real{\vu^\hh\va_i \va_i^\hh\vv}}^2}^{1/2}   \notag \\
	&\le & \inf_{\substack{\norm{\vu}=\norm{\vv}=1 \\ \nj{\vu,\vv}=0}} \xkh{\sumim \abs{\real{\vu^\hh\va_i \va_i^\hh\vv}}^2}^{1/2} =: M_\ma.
\end{eqnarray}
Similar to \eqref{eq:ai}, we also parameterize $\vu, \vv$ as
 \begin{align} \label{eq:uv}
	\vu = \begin{pmatrix}
		\cos \theta \cos \alpha\\
		\sin \theta \cos \beta
	\end{pmatrix} + \i \cdot \begin{pmatrix}
		\cos \theta \sin \alpha\\
		\sin \theta \sin \beta
	\end{pmatrix}\ \ \mbox{and}\ \ \vv = \begin{pmatrix}
		\sin \theta \cos \alpha\\
		-\cos \theta \cos \beta
	\end{pmatrix} + \i \cdot \begin{pmatrix}
		\sin \theta \sin \alpha\\
		-\cos \theta \sin \beta
	\end{pmatrix},
\end{align}
where $(\theta,\alpha,\beta)\in I^\H$. A straightforward calculation reveals that 
\[
\abs{\va_i^\hh \vu}^2 = \frac 12 t_i^2 + \frac 12 t_i^2\xkh{\cos 2\varphi_i \cos 2\theta + \sin 2\varphi_i \sin 2\theta \cos\xkh{\alpha-\beta-\alpha_i+\beta_i}}
\]
and 
\begin{equation} \label{eq:Reuv}
\real{\vu^\hh\va_i\va_i^\hh\vv} = \frac 12 t_i^2 \xkh{\cos 2\varphi_i \sin 2\theta - \sin 2\varphi_i \cos 2\theta \cos\xkh{\alpha-\beta-\alpha_i+\beta_i}}.
\end{equation}
To prove the results, it suffices to show 
 \begin{align} \label{eq:upplow}
	(U_{\Psi_\ma}^{\ell_2})^2 \ge \left\{\begin{aligned}
		&\frac{3}{8}\sumim t_i^4, \  \mbox{if } \H = \R, \\
		&\frac{1}{3}\sumim t_i^4, \  \mbox{if } \H = \C,
	\end{aligned}\right. \quad \mbox{and}\quad (M_\ma)^2 \le \left\{\begin{aligned}
		&\; \frac{1}{8}\ \sumim t_i^4, \  \mbox{if } \H = \R, \\
		&\frac{1}{12}\sumim t_i^4,     \  \mbox{if } \H = \C.
	\end{aligned}\right.  
\end{align}
We first estimate $(U_{\Psi_\ma}^{\ell_2})^2$.  It follows from \eqref{eq:ulp} that 
 \[(U_{\Psi_\ma}^{\ell_2})^2 = \maxm{(\vu, \vv)\in\X^\H} \sum_{i=1}^{m} \abs{\va_i^\hh\vu}^4 = \maxm{(\theta, \alpha, \beta)\in I^\H} f(\theta,\alpha,\beta) \]
where \[f(\theta,\alpha,\beta) := \frac 14 \sum_{i=1}^{m} t_i^4\xkh{1+\cos 2\varphi_i\cos 2\theta + \sin 2\varphi_i\sin 2\theta\cos(\alpha-\beta-\alpha_i+\beta_i)}^2. \]
Consider the following two cases:

{\bf Case 1: $\H = \R$.} For this case, \( \alpha = \beta = 0 \) and \( \alpha_i = \beta_i = 0 \) for $i = 1,\cdots,m$, so \( f(\theta,0,0) \) can be simplified to 
\[f(\theta,0,0) = \frac 14 \sum_{i=1}^{m} t_i^4\xkh{1 + \cos(2\varphi_i-2\theta)}^2 = \frac 18 \sum_{i=1}^{m} t_i^4\xkh{3 + 4\cos(2\theta-2\varphi_i)+\cos(4\theta-4\varphi_i)}.\]
Through a straightforward calculation, we obtain 
\begin{eqnarray*}
	\maxm{\theta\in [0,\pi]} f(\theta,0,0) \ge \frac 1\pi \int_0^\pi f(\theta,0,0) \dd \theta \overset{\text{(i)}}{=} \frac 1\pi \int_0^\pi \frac 38 \sum_{i=1}^{m} t_i^4 \dd \theta = \frac 38 \sum_{i=1}^{m} t_i^4. 
\end{eqnarray*} 
Here, (i) comes from the fact that
\begin{align*}
	& \int_0^\pi \sum_{i=1}^{m} t_i^4 \xkh{4\cos(2\theta-2\varphi_i)+\cos(4\theta-4\varphi_i)} \dd \theta = 0 
\end{align*}
since $\cos(2\theta-2\varphi_i)$ and $\cos(4\theta-4\varphi_i)$ are periodic function  in \( \theta \) with period $\pi$. 

{\bf Case 2: $\H = \C$.} Note that \(\cos(\alpha-\beta-\alpha_i+\beta_i) = \cos(\alpha-\beta)\cos(\alpha_i-\beta_i) + \sin(\alpha-\beta)\sin(\alpha_i-\beta_i)\).  Taking $x = \cos2\theta, y = \sin 2\theta \cos(\alpha-\beta), z = \sin 2\theta \sin(\alpha-\beta), a_i = \cos2\varphi_i, b_i = \sin2\varphi_i \cos(\alpha_i-\beta_i), c_i = \sin2\varphi_i \sin(\alpha_i-\beta_i)$, then  one can rewrite \( f(\theta,\alpha,\beta) \) as 
\[f(\theta,\alpha,\beta) = \frac 14\sumim t_i^4 \xkh{1+(a_i\cdot x + b_i\cdot y + c_i\cdot z)}^2:= h(x,y,z).
\]
It is easy to see that $x^2 + y^2 + z^2 = 1$ and $a_i^2 + b_i^2 + z_i^2 = 1$ for each $i$. Denote \(\SSS^2 = \dkh{(x, y, z)^\T \in \R^3:x^2 + y^2 + z^2 = 1}\).  Since $h(x,y,z)$ is nonnegative in \(\SSS^2\), we see that \begin{align*}
	\maxm{(x,y,z)\in\SSS^2} h(x,y,z) \ge \frac{1}{4\pi} \iint_{\SSS^2} h(x,y,z)\dd S = \frac{1}{16\pi} \sumim t_i^4 \iint_{\SSS^2} \xkh{1+(a_i\cdot x + b_i\cdot y + c_i\cdot z)}^2 \dd S. 
\end{align*}
The Poisson formula in surface integrals implies  that 
\[\iint_{\SSS^2} \xkh{1+(a_i\cdot x + b_i\cdot y + c_i\cdot z)}^2 \dd S = 2\pi\int_{-1}^{1}\xkh{1+u}^2\dd u = \frac{16\pi}{3}.\]
As a result, we immediately have 
\[\maxm{(\theta,\alpha,\beta)\in I^\H}f(\theta,\alpha,\beta) \ge \frac 13 \sumim t_i^4.\]
This gives the lower bounds of  $(U_{\Psi_\ma}^{\ell_2})^2$ in \eqref{eq:upplow}.

Next, we turn to estimate the upper bound of $M_\ma$ in \eqref{eq:upplow}. According to \eqref{eq:Reuv}, one has
\((M_\ma)^2 = \min_{(\theta,\alpha,\beta)\in I^\H} g(\theta, \alpha, \beta)\) where
\begin{equation*}
	g(\theta, \alpha, \beta) :=  \frac 14 \sum_{i=1}^{m} t_i^4 \xkh{\cos 2\varphi_i\sin 2\theta - \sin 2\varphi_i\cos 2\theta\cos(\alpha-\beta-\alpha_i+\beta_i)}^2.
\end{equation*}
Using the same arguments to that of \( (U_{\Psi_\ma}^{\ell_2})^2 \), we have
\[\minm{\theta\in [0,\pi]} g(\theta,0,0) = \minm{\theta\in [0,\pi]} \frac 14 \sum_{i=1}^{m} t_i^4\xkh{\sin(2\theta - 2\varphi_i)}^2\le \frac 1{4\pi} \int_0^\pi \sum_{i=1}^{m} t_i^4\xkh{\sin(2\theta - 2\varphi_i)}^2  \dd \theta = \frac 18 \sum_{i=1}^{m} t_i^4\]
for \(\H = \R\), and 
\[\minm{(\theta, \alpha, \beta)\in I^\C} g(\theta, \alpha, \beta) \le \frac{1}{16\pi} \cdot 2\pi\int_{-1}^{1} u^2\dd u \sum_{i=1}^{m} t_i^4 = \frac{1}{12} \sum_{i=1}^{m} t_i^4\]
for \(\H = \C\). This completes the proof.

\subsection{Proof of Theorem \ref{le:em:l2}}\label{pf:em:l2}
Let $\vu = (\cos \theta \; \sin \theta)^\T$ and $\vv = (\cos \varphi \; \sin \varphi)^\T$ with \(0 \le \theta, \varphi \le 2\pi\). Denote $\mem = (\va_1,\va_2,\cdots,\va_m)^\T\in \R^{m\times 2}$. Recalling the definition of $\mem$ in \eqref{eq:harmf}, we have \renewcommand{\arraystretch}{1.5} \begin{align*}
	\mem\vu = \begin{pmatrix}
		1\cdot \cos \theta + 0\cdot \sin \theta \\
		\cos \frac{\pi}{m}\cdot \cos \theta + \sin \frac{\pi}{m}\cdot \sin \theta \\
		\cdots \\
		\cos \frac{(m-1)\pi}{m}\cdot \cos \theta + \sin \frac{(m-1)\pi}{m}\cdot \sin \theta
	\end{pmatrix} = \begin{pmatrix}
		\cos \xkh{0-\theta} \\
		\cos \xkh{\frac{\pi}{m}-\theta} \\
		\cdots \\
		\cos \xkh{\frac{(m-1)\pi}{m}-\theta} 
	\end{pmatrix}.
\end{align*}
It gives
\begin{align*}
 \sumjm \ucabs{\va_j^\T \vu}^4 = \sum_{j=1}^m \cos^4 \xkh{\frac{j\pi}{m}-\theta} = \sum_{j=1}^m \zkh{\frac 38 + \frac{\cos \xkh{\frac{2j\pi}{m}-2\theta}}{2} + \frac{\cos \xkh{\frac{4j\pi}{m}-4\theta}}{8}}.
\end{align*}
Applying the Lagrange's trigonometric identities that
\begin{equation} \label{eq:Lind} 
\sum_{j=1}^k \cos(j \theta)=\frac{\sin\xkh{\frac{2k+1}{2}\theta}}{2\sin\xkh{\frac{\theta}{2}}}-\frac12 \qquad \mbox{and} \qquad \sum_{j=1}^k \sin(j \theta)=\frac{\sin\xkh{\frac{k+1}{2}\theta} \sin\xkh{\frac{k\theta}{2}}}{\sin\xkh{\frac{\theta}{2}}},
\end{equation}
one can check that 
 \begin{equation} \label{eq:cos24}
 \sum_{j=1}^m  \cos \xkh{\frac{2j\pi}{m}-2\theta}=0, \qquad \sum_{j=1}^m  \cos \xkh{\frac{4j\pi}{m}-4\theta}=0.
 \end{equation}
Therefore, the upper Lipschitz constant of $\mem$ is 
\begin{align*}
	U_\mem^{\ell_2} = \maxm{\theta\in[0,2\pi]} \xkh{\sumjm \ucabs{\va_j^\T \vu}^4}^{1/2} = \sqrt{\frac{3m}{8}}
\end{align*}
Similarly, we have
\begin{align*}
	\mem\vv = \begin{pmatrix}
		\cos \xkh{0-\varphi} \ 
		\cos \xkh{\frac{\pi}{m}-\varphi} \ 
		\cdots \ 
		\cos \xkh{\frac{(m-1)\pi}{m}-\varphi} 
	\end{pmatrix}^\T.
\end{align*}\renewcommand{\arraystretch}{1}
This gives
\begin{align*}
\sumjm \ucabs{\va_j^\T \vu}^2\ucabs{\va_j^\T \vv}^2 =& \sum_{j=1}^m \cos^2 \xkh{\frac{j\pi}{m}-\theta} \cos^2 \xkh{\frac{j\pi}{m}-\varphi} \\
	\overset{\text{(i)}}{=} & \frac m4 + \frac 14 \sum_{j=1}^m \cos \xkh{\frac{2j\pi}{m}-2\theta}\cos \xkh{\frac{2j\pi}{m}-2\varphi} \\
	=& \frac m4 + \frac 18 \sum_{j=1}^m \zkh{\cos (2\theta-2\varphi)+ \cos \xkh{\frac{4j\pi}{m}-2\theta-2\varphi}} \\
	=& \frac m4 + \frac m8 \cos (2\theta-2\varphi). 
\end{align*}
where  (i) comes from trigonometric identities together with \eqref{eq:cos24}. 
As a result, $L_\mem^{\ell_2}$ can be calculated as follows 
\begin{align*}
	L_\mem^{\ell_2} = \minm{\theta,\varphi\in[0,2\pi]} \xkh{\sumjm \ucabs{\va_j^\T \vu}^2\ucabs{\va_j^\T \vv}^2}^{1/2} = \sqrt{\frac{m}{8}},
\end{align*}
where the last equality holds if and only if $\abs{2\theta-2\varphi}=\pi$, namely, $\vu \perp \vv$. 
Combining the lower and upper Lipschitz constants, we arrive at the conclusion that 
\begin{align*}
	\betatwo_{\mem} = U_\mem^{\ell_2}/L_\mem^{\ell_2} = \sqrt{3}.
\end{align*}

\subsection{Proof of Theorem \ref{le:concent:p=2}}\label{pf:concent:p=2} 
Based on Lemma \ref{le:u:p=2}, with probability at least \(1 - 2\exp(-d/2)\), it holds
\begin{equation*}
	\norms{\ma}_{2\rightarrow 4} \le  (3m)^{1/4} + 2\sqrt{d}.
	\end{equation*}
Therefore, for any fixed $0<\delta_1 \le 1$, one has
 \begin{equation} \label{eq:gauup}
	 \frac{1}{\sqrt{m}} U_{\Psi_\ma}^{\ell_2}= \frac{1}{\sqrt{m}} \norms{\ma}_{2\rightarrow 4}^2 \le \sqrt{3} + \delta_1,
	 \end{equation}
provided that \(m \ge C_0 \delta_1^{-4} d^2\) for some  constant \( C_0 > 0 \).   For the lower Lipschitz constant, recall that 
\[
\frac 1m (L_{\Psi_\ma}^{\ell_2})^2 = \inf_{\norm{\vu} = \norm{\vv} = 1} \frac 1m \sum_{j=1}^{m} \abs{\vu^\T \va_j\va_j^\T \vv}^2. 
\]
Since $\va_1,\ldots,\va_m \in \R^d $ are i.i.d. Gaussian random vectors,  Lemma \ref{le:l:p=2} yields that 
\begin{equation*} \label{eq:dellow}
\inf_{\norm{\vu} = \norm{\vv} = 1} \frac 1m \sum_{j=1}^{m} \abs{\vu^\T \va_j\va_j^\T \vv}^2 \ge  \inf_{\norm{\vu} = \norm{\vv} = 1}  \xkh{\norm{\vu}^2 \norm{\vv}^2+ 2\abs{\nj{\vu,\vv}}^2 - \delta_1} \ge 1-\delta_1
\end{equation*}
with probability at least $1 - c' \exp\xkh{-c'' \delta_1^2 m} - O(m^{-d})$ when $m\ge C_0  \delta_1 ^{-2} d \log d$.  Here,  $c', c'' > 0$ are universal constants. 
Thus, when $0<\delta_1  \le 1$, it holds
\begin{equation} \label{eq:gaulow}
\frac 1 {\sqrt m} L_{\Psi_\ma}^{\ell_2} \ge 1 - \delta_1.
\end{equation}
Combining \eqref{eq:gauup} and \eqref{eq:gaulow}, we obtain that for any fixed $0<\delta_1<0.4$, when \(m \ge C_0 \delta_1^{-4} d^2\), it holds 
 \begin{align*}
	\frac{U_{\Psi_\ma}^{\ell_2}}{L_{\Psi_\ma}^{\ell_2}} \le  \sqrt{3}+10\delta_1 
	\end{align*}
with probability at least $1 - c_1 \exp\xkh{-c_0 d} - O(m^{-d})$. Here, $c_1, c_0>0$ are universal constants. Taking $\delta=10 \delta_1$, we arrive at the conclusion.

\section{The condition number under \texorpdfstring{\( \ell_1 \)}{l1} norm} \label{sec:p = 1}
In this section, we study the condition number of \(\Psi\) with respect to the \(\ell_1\) norm. The following theorem establishes a universal lower bound for \(\beta^{\ell_1}_{\Psi_\ma}\).

\begin{theorem}\label{le:opt:number:l1}
	For any matrix $\ma \in \H^{m\times d}$, one has 
	\begin{align*}
		\betaone_{\Psi_\ma} \ge \betaonebar := \left\{\begin{aligned}
			&\pitwo,  \   &  \mbox{if } \ \H = \R, \\
			&\, 2,    \   &  \mbox{if } \ \H = \C.
		\end{aligned}\right. 
	\end{align*}
\end{theorem}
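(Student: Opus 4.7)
The plan is to mirror the architecture of the proof of Theorem \ref{le:opt:number:2} (Section \ref{pf:opt:number:2}) but with the $\ell_1$ norm in place of $\ell_2$. First, by the same reduction as in \eqref{eq:betaAB}, it suffices to treat $d=2$ and write $\ma = (\va_1,\ldots,\va_m)^\hh \in \H^{m\times 2}$ with the parameterization \eqref{eq:ai} for $\va_i$ using angles $(\varphi_i,\alpha_i,\beta_i)\in I^\H$ and magnitudes $t_i = \norm{\va_i}$. By Theorem \ref{th:equivalent} together with Corollary \ref{th:ortho:1}, one has
\[
U^{\ell_1}_{\Psi_\ma} \;=\; \sup_{\norm{\vu}=1}\sum_{i=1}^m \abs{\va_i^\hh \vu}^2, \qquad L^{\ell_1}_{\Psi_\ma} \;\le\; M_\ma \;:=\; \inf_{\substack{\norm{\vu}=\norm{\vv}=1\\ \nj{\vu,\vv}=0}} \sum_{i=1}^m \abs{\real{\vu^\hh\va_i\va_i^\hh\vv}}.
\]
Using the parameterization \eqref{eq:uv} for the orthogonal pair $\vu,\vv$, the key algebraic step is to introduce the unit vectors
\[
\vec a_i = (\cos 2\varphi_i,\;\sin 2\varphi_i\cos(\alpha_i-\beta_i),\;\sin 2\varphi_i\sin(\alpha_i-\beta_i)), \quad \vec p = (\cos 2\theta,\;\sin 2\theta\cos(\alpha-\beta),\;\sin 2\theta\sin(\alpha-\beta)),
\]
together with $\vec q = (\sin 2\theta,\,-\cos 2\theta\cos(\alpha-\beta),\,-\cos 2\theta\sin(\alpha-\beta))$, which satisfy $\vec p\perp \vec q$ on $\SSS^2$. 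A direct expansion (analogous to \eqref{eq:Reuv}) gives the clean identities
\[
\abs{\va_i^\hh \vu}^2 \;=\; \frac{t_i^2}{2}\bigl(1+\vec a_i\cdot \vec p\bigr), \qquad \real{\vu^\hh \va_i\va_i^\hh \vv} \;=\; \frac{t_i^2}{2}\,\vec a_i\cdot \vec q.
\]

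Next I would bound $U^{\ell_1}_{\Psi_\ma}$ from below by averaging. For $\H=\R$, averaging over $\theta\in[0,\pi]$ kills $\cos(2\varphi_i-2\theta)$ and yields $U^{\ell_1}_{\Psi_\ma}\ge \tfrac12\sum_i t_i^2$; for $\H=\C$, averaging $\vec p$ uniformly over $\SSS^2$ kills $\vec a_i\cdot \vec p$ by the Poisson formula and gives the same bound $U^{\ell_1}_{\Psi_\ma}\ge \tfrac12\sum_i t_i^2$. Symmetrically, $M_\ma$ is bounded above by the same expectations. For $\H=\R$, using $\real{\vu^\hh\va_i\va_i^\hh\vv} = -\tfrac{t_i^2}{2}\sin(2(\varphi_i-\theta))$ and $\tfrac1\pi\int_0^\pi |\sin(2(\varphi_i-\theta))|\,d\theta = 2/\pi$, one obtains $M_\ma \le \tfrac1\pi \sum_i t_i^2$. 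For $\H=\C$, draw $\vec p$ uniform on $\SSS^2$ and $\vec q$ uniform on the great circle orthogonal to $\vec p$; by $SO(3)$-invariance the marginal law of $\vec q$ is also uniform on $\SSS^2$, so
\[
E\bigl[|\vec a_i\cdot \vec q|\bigr] \;=\; \frac{1}{4\pi}\iint_{\SSS^2}|\vec a_i\cdot \vec q|\,dS \;=\; \frac12,
\]
whence $M_\ma \le \tfrac14 \sum_i t_i^2$. Dividing the two bounds yields $\betaone_{\Psi_\ma}\ge \pi/2$ in the real case and $\betaone_{\Psi_\ma}\ge 2$ in the complex case.

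The main obstacle is the complex case: the real case reduces to one-dimensional averaging with elementary trigonometric identities, but for $\H=\C$ one must (i) recognize the Bloch-sphere-type identification that reduces $\real{\vu^\hh\va_i\va_i^\hh\vv}$ to the inner product $\tfrac{t_i^2}{2}\vec a_i\cdot \vec q$ for a unit vector $\vec q$ orthogonal to $\vec p$, and (ii) exploit spherical symmetry so that the joint averaging over $(\vec p,\vec q)$ on the Stiefel-like set $\{(\vec p,\vec q)\in \SSS^2\times\SSS^2 : \vec p\perp\vec q\}$ marginalizes $\vec q$ to the uniform distribution on $\SSS^2$. Once these two ingredients are in place the integrals are standard (the Poisson formula in surface integrals handles the linear term, and $\tfrac{1}{4\pi}\iint |\cos\gamma|\,dS = \tfrac12$ handles the absolute-value term), and the two bounds $\pi/2$ and $2$ fall out immediately.
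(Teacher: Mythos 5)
Your proposal is correct and follows essentially the same route as the paper: reduce to $d=2$, bound $U^{\ell_1}_{\Psi_\ma}$ below by $\tfrac12\sum_i t_i^2$, and bound the orthogonally-restricted infimum above by its circular (real case) or spherical (complex case) average, yielding $\tfrac1\pi\sum_i t_i^2$ and $\tfrac14\sum_i t_i^2$ respectively. The only cosmetic differences are that the paper obtains the upper Lipschitz bound via $\|\ma\|^2\ge\tfrac12\|\ma\|_F^2$ rather than by averaging, and integrates directly over the $(\theta,\alpha,\beta)$ parameterization instead of invoking the $SO(3)$-marginalization of $\vec q$.
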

\begin{proof}
	See Section \ref{pf:opt:number:1}. 
\end{proof}

In the real case where \(\mathbb{H} = \mathbb{R}\), a more refined analysis allows for a slight improvement of the universal lower bound presented in Theorem \ref{le:opt:number:l1}.

\begin{theorem}\label{le:subopt:number:l1}
	Let \(d \ge 2\), \(m \ge 3\) and let \(\ma = (\va_1,\va_2,\cdots,\va_m)^\T \in \R^{m\times d}\). Then one has \begin{equation*}
		\betaone_{\Psi_\ma} \ge m \tan\pimtwo. 
	\end{equation*}
\end{theorem}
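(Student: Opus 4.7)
By the reduction \eqref{eq:betaAB} applied with $\mb$ the first two columns of $\ma$, it suffices to prove the bound for $d=2$; so write $\va_j = t_j(\cos\varphi_j,\sin\varphi_j)^\T$ with $t_j\ge 0$, $\varphi_j\in[0,\pi)$, and set $T := \sum_j t_j^2$. Theorem~\ref{th:equivalent} gives $U^{\ell_1}_{\Psi_\ma} = \norms{\ma}^2$, and applying the identity $\norm{\ma\vu}^2+\norm{\ma\vu^\perp}^2 = T$ to any orthonormal pair yields the baseline $U^{\ell_1}_{\Psi_\ma} \ge T/2$.

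The strategy is to use $m$ explicit trial pairs tied to the measurement vectors themselves. For each $k$ with $t_k>0$, set $\vu_k := \va_k/t_k$ and $\vv_k := \vu_k^\perp$. A direct computation using $\vu_k^\T\va_j = t_j\cos(\varphi_j-\varphi_k)$ and $\vv_k^\T\va_j = t_j\sin(\varphi_j-\varphi_k)$ gives
\[
L^{\ell_1}_{\Psi_\ma} \;\le\; \sum_{j=1}^m \abs{\vu_k^\T\va_j\va_j^\T\vv_k} \;=\; \frac{1}{2}\sum_{j=1}^m t_j^2\abs{\sin 2(\varphi_j-\varphi_k)} \;=:\; B_k.
\]
Averaging these inequalities with weights $t_k^2/T$ (which sum to $1$) collapses into a symmetric double sum:
\[
L^{\ell_1}_{\Psi_\ma} \;\le\; \frac{1}{T}\sum_{k=1}^m t_k^2 B_k \;=\; \frac{1}{2T}\sum_{j,k=1}^m t_j^2 t_k^2\,\abs{\sin 2(\varphi_j-\varphi_k)}.
\]

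The heart of the argument is the sharp extremal inequality
\[
\sum_{j,k=1}^m w_j w_k\,\abs{\sin 2(\varphi_j-\varphi_k)} \;\le\; \frac{W^2\cot(\pi/(2m))}{m}, \qquad W := \sum_j w_j, \; w_j \ge 0,
\]
which is saturated by the harmonic-frame weights. Substituting $w_j = t_j^2$ and combining with $U^{\ell_1}_{\Psi_\ma}\ge T/2$ produces $L^{\ell_1}_{\Psi_\ma}\le T\cot(\pi/(2m))/(2m)\le \cot(\pi/(2m))\,U^{\ell_1}_{\Psi_\ma}/m$, that is $\beta^{\ell_1}_{\Psi_\ma}\ge m\tan(\pi/(2m))$.

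The main obstacle is establishing the extremal inequality above. A natural route is Fourier analysis: using the expansion $|\sin y| = \tfrac{2}{\pi}-\tfrac{4}{\pi}\sum_{n\ge 1}\cos(2ny)/(4n^2-1)$, the double sum rewrites as $\tfrac{2}{\pi}W^2 - \tfrac{4}{\pi}\sum_{n\ge 1}|\hat\mu(4n)|^2/(4n^2-1)$ for the unnormalized coefficients $\hat\mu(n) := \sum_j w_j e^{in\varphi_j}$, so the inequality reduces to the lower bound $\sum_{n\ge 1}|\hat\mu(4n)|^2/(4n^2-1) \ge W^2\bigl[\tfrac12 - \tfrac{\pi\cot(\pi/(2m))}{4m}\bigr]$ for discrete measures supported on at most $m$ atoms. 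The sharp constant is identified by the partial-fraction evaluation $\sum_{\ell\ge 1}1/(4\ell^2 m^2-1) = \tfrac12 - \tfrac{\pi\cot(\pi/(2m))}{4m}$, which also exhibits the harmonic frame as an extremizer; verifying this lower bound for arbitrary discrete $\mu$ is a Carath\'eodory--Toeplitz-type extremal problem, likely handled by a symmetrization or majorization reduction to the harmonic measure, and is the technically heaviest step of the proof.
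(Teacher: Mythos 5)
Your reduction to $d=2$, the lower bound $U^{\ell_1}_{\Psi_\ma}\ge \tfrac12\sum_j t_j^2$, and the test-vector bounds $L^{\ell_1}_{\Psi_\ma}\le B_k$ are all correct and agree with the paper's setup. The divergence --- and the gap --- is in how the resulting quantity is controlled. The paper proves Lemma~\ref{lemma:sub:tan}: for arbitrary angles and weights,
\[
\min_{\theta\in[0,\pi]}\sum_{i=1}^m t_i^2\abs{\sin(\theta-\varphi_i)}\le \frac{1}{m\tan\frac{\pi}{2m}}\sum_{i=1}^m t_i^2,
\]
by integrating the piecewise-sinusoidal function over each interval $[\varphi_k,\varphi_{k+1}]$, comparing each integral to the global minimum via the pointwise claim \eqref{claim:g middle}, and concluding with Jensen's inequality for the convex function $\tan$. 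You instead average the bounds $L^{\ell_1}_{\Psi_\ma}\le B_k$ with weights $t_k^2/T$ and reduce everything to the quadratic extremal inequality $\sum_{j,k}w_jw_k\abs{\sin 2(\varphi_j-\varphi_k)}\le \frac{W^2}{m}\cot\frac{\pi}{2m}$. This inequality is never proved: you expand $\abs{\sin}$ in a Fourier series and then state that the resulting sharp lower bound on $\sum_{n\ge1}\abs{\hat\mu(2n)}^2/(4n^2-1)$ over arbitrary $m$-atomic measures is ``likely handled by a symmetrization or majorization reduction.'' That unproved step is precisely the entire analytic content of the theorem; without it the argument does not close.

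Two further points. First, your route is strictly harder than the paper's: the weighted average of the values $B_k$ dominates the minimum over all $\theta$ of $\tfrac12\sum_j t_j^2\abs{\sin(2\theta-2\varphi_j)}$, so bounding that average by $\frac{T}{2m}\cot\frac{\pi}{2m}$ is a stronger assertion than Lemma~\ref{lemma:sub:tan} --- and it must still be proved with no slack, since for the odd harmonic frame all $B_k$ coincide and your inequality holds with equality. Second, the Fourier reduction turns the problem into a Tur\'an/Carath\'eodory--Toeplitz-type lower bound on the high-frequency energy of a discrete measure with at most $m$ atoms; identifying and proving the sharp constant there for arbitrary atoms and weights is a genuinely nontrivial extremal problem for which no argument is offered. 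As it stands, the proposal is a correct reduction to a plausible but unproven inequality, so the proof is incomplete; to repair it along the paper's lines you would need the interval-by-interval integration argument of Lemma~\ref{lemma:sub:tan} (or an actual proof of your extremal inequality).
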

\begin{proof}
	See Section \ref{pf:subopt:number:l1}. 
\end{proof}

The following theorem characterizes the condition number \(\betaone_{\mem}\) of the harmonic frame \(\mem \in \mathbb{R}^{m \times 2}\) and demonstrates that the bound \(\pitwo\) given in Lemma \ref{le:opt:number:l1} is nearly optimal.

\begin{theorem}\label{le:em:l1}
	For \(d \ge 2\), \(m \ge 3\) and the harmonic frame \( \mem \) defined in \eqref{def:mem}, we have that \begin{equation*}
		\betaone_{\mem} = \left\{\begin{aligned}
			& m \frac{\tan \pimtwo}{\cos \pimtwo},   \quad \mbox{if}\ m\ \mbox{is odd},\\
			& \frac m2 \tan \frac{\pi}{m},           \quad \mbox{if}\ m\ \mbox{is even}.
		\end{aligned}\right.
	\end{equation*}
\end{theorem}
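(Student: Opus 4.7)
The plan is to compute the upper and lower Lipschitz constants of $\Psi_{\mem}$ separately by invoking the equivalent formulations from Theorem \ref{th:equivalent}, and then to take their ratio. By Theorem \ref{th:equivalent} applied at $p=1$, one has
\[
U_{\Psi_\mem}^{\ell_1} \;=\; \norms{\mem}_{2\to 2}^{2} \;=\; \norms{\mem}^{2},
\qquad
L_{\Psi_\mem}^{\ell_1} \;=\; \inf_{\substack{\norm{\vu}=\norm{\vv}=1 \\ \nj{\vu,\vv}\in\R}} \sum_{j=1}^{m}\abs{\real{\vu^{\T}\va_j\va_j^{\T}\vv}}.
\]
Since we are working over $\R$, the constraint $\nj{\vu,\vv}\in\R$ is automatic and the real part can be dropped, so the lower bound reduces to $\inf_{\norm{\vu}=\norm{\vv}=1}\sum_j|\vu^{\T}\va_j\va_j^{\T}\vv|$; this is precisely the quantity already evaluated in \eqref{eq:ortho:2} of Lemma \ref{le:ortho:l - em:p=1}.

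The first step is to show $\norms{\mem}^{2}=m/2$ by proving $\mem^{\T}\mem=(m/2)\mi_{2}$. Writing $\mem^{\T}\mem$ entrywise in terms of $\cos^{2}(j\pi/m)$, $\sin^{2}(j\pi/m)$, and $\cos(j\pi/m)\sin(j\pi/m)$, then using the half-angle identities and the Lagrange trigonometric identities \eqref{eq:Lind} (exactly as invoked in the proof of Theorem \ref{le:em:l2}) to obtain
\[
\sum_{j=0}^{m-1}\cos\!\xkh{\tfrac{2j\pi}{m}}=0,\qquad \sum_{j=0}^{m-1}\sin\!\xkh{\tfrac{2j\pi}{m}}=0,
\]
yields $\sum_{j}\cos^{2}(j\pi/m)=\sum_{j}\sin^{2}(j\pi/m)=m/2$ and the off-diagonal sum vanishes. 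Hence $U_{\Psi_\mem}^{\ell_1}=m/2$. This step is essentially a short direct calculation and is not the main difficulty.

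The second step is to quote \eqref{eq:ortho:2} from Lemma \ref{le:ortho:l - em:p=1}, which gives
\[
L_{\Psi_\mem}^{\ell_1} \;=\; \begin{cases} \dfrac{\cos(\pi/(2m))}{2\tan(\pi/(2m))}, & m\ \text{odd},\\[4pt] \dfrac{1}{\tan(\pi/m)}, & m\ \text{even}. \end{cases}
\]
Since Lemma \ref{le:ortho:l - em:p=1} is a result already proved earlier in the paper, we can treat it as a black box here; all the genuine combinatorial/trigonometric work required for the lower bound is hidden inside that lemma's proof (which would have been the main obstacle had we needed to redo it).

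The final step is to form the quotient $\betaone_{\mem}=U_{\Psi_\mem}^{\ell_1}/L_{\Psi_\mem}^{\ell_1}$. For $m$ odd,
\[
\betaone_{\mem} \;=\; \frac{m/2}{\cos(\pi/(2m))/(2\tan(\pi/(2m)))} \;=\; m\,\frac{\tan(\pi/(2m))}{\cos(\pi/(2m))},
\]
and for $m$ even,
\[
\betaone_{\mem} \;=\; \frac{m/2}{1/\tan(\pi/m)} \;=\; \frac{m}{2}\tan\!\xkh{\tfrac{\pi}{m}},
\]
which match the stated formulas and complete the argument. The only nontrivial ingredient, apart from Lemma \ref{le:ortho:l - em:p=1}, is verifying the isotropy $\mem^{\T}\mem=(m/2)\mi_{2}$, so this proof is essentially an assembly of previously established pieces.
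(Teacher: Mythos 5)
Your proposal is correct and follows essentially the same route as the paper: compute $U_{\Psi_\mem}^{\ell_1}=m/2$ from the tightness relation $\mem^{\T}\mem=\frac m2\mi$, read off $L_{\Psi_\mem}^{\ell_1}$ from \eqref{eq:ortho:2} of Lemma \ref{le:ortho:l - em:p=1}, and take the quotient. No gaps.
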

\begin{proof}
	See Section \ref{pf:em:l1}. 
\end{proof}

As before, the following theorem demonstrates that the universal lower bounds established in Theorem \ref{le:opt:number:l1} are asymptotically tight for both the real and complex cases.

\begin{theorem}\label{le:concent:p=1} 
	Assume that $\ma=(\va_1,\cdots,\va_m)^\T\in\hmd$ is a standard Gaussian matrix with $\H=\R$ or $\C$. For any $ 0 < \delta < 0.4 $, with probability at least $1 - 4 \exp(-c_0 d)$, one has \begin{equation}
		\betaonebar \le \betaone_{\Psi_\ma} \le \betaonebar + \delta,
	\end{equation}
	provided that $m\ge C_0\log (1/\delta) \delta^{-2} d$. Here, $c_0,C_0>0$ are some universal constants. 
\end{theorem}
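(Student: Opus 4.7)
The lower bound $\betaone_{\Psi_\ma} \ge \betaonebar$ is deterministic and follows immediately from Theorem~\ref{le:opt:number:l1}, so the probabilistic work lies entirely in the matching upper bound. The plan is to follow the template of Theorem~\ref{le:concent:p=2}: by Theorem~\ref{th:equivalent}, write
\[
\betaone_{\Psi_\ma} \;=\; \frac{\|\ma\|_{2\to 2}^2}{L^{\ell_1}_{\Psi_\ma}},
\]
and sandwich numerator and denominator (normalized by $m$) around their expected values, which will turn out to be $1$ and $1/\betaonebar$, respectively.

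For the numerator I invoke the standard nonasymptotic bound $\|\ma\|_{2\to 2} \le \sqrt{m} + C\sqrt{d}$, valid with probability at least $1 - 2\exp(-c_0 d)$ for both real and complex Gaussian matrices. This yields $\|\ma\|_{2\to 2}^2/m \le 1 + \delta_1$ whenever $m \gtrsim \delta_1^{-2} d$, and accounts for the $\delta^{-2} d$ part of the sample complexity. For the denominator, set
\[
f(\vu,\vv) := \frac{1}{m}\sum_{j=1}^m \abs{\real{\vu^\hh \va_j \va_j^\hh \vv}}, \qquad \mathcal{X}^\H := \dkh{(\vu,\vv)\in\H^d\times\H^d:\ \norm{\vu}=\norm{\vv}=1,\ \nj{\vu,\vv}\in\R},
\]
so that $L^{\ell_1}_{\Psi_\ma}/m = \inf_{\mathcal{X}^\H} f$ by Theorem~\ref{th:equivalent}. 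I first compute the expectation pointwise. In the real case, letting $\rho = \nj{\vu,\vv}$, the pair $(\vu^\T \va,\va^\T \vv)$ is standard bivariate Gaussian with correlation $\rho$, and the classical identity $\E\abs{XY} = \tfrac{2}{\pi}(\rho\arcsin\rho + \sqrt{1-\rho^2})$ is minimized at $\rho=0$ with value $2/\pi = 1/\betaonebar$. A parallel calculation in the complex case, exploiting the circular symmetry of $\vu^\hh \va$ and $\vv^\hh \va$, yields minimum value $1/2 = 1/\betaonebar$, again attained at orthogonal pairs. It thus suffices to prove the one-sided uniform bound $f(\vu,\vv) \ge 1/\betaonebar - \delta_1$ on $\mathcal{X}^\H$.

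This uniform control follows from an $\epsilon$-net argument. Each summand $\abs{\real{\vu^\hh \va_j \va_j^\hh \vv}}$ is a product of two Gaussian linear forms, hence sub-exponential with $\psi_1$-norm $\lesssim 1$, so Bernstein's inequality gives $|f(\vu,\vv) - \E f(\vu,\vv)| \le \delta_1$ with failure probability $\exp(-c m \delta_1^2)$ at each fixed $(\vu,\vv)$. Construct an $\epsilon$-net $\cN$ of $\mathcal{X}^\H$ with $|\cN| \le (C/\epsilon)^{c_\H d}$, the exponent absorbing the codimension-one constraint $\nj{\vu,\vv}\in\R$ in the complex case. On the good event from the operator-norm step, $f$ is $O(1)$-Lipschitz in $(\vu,\vv)$, so choosing $\epsilon \asymp \delta_1$ transfers the net bound to all of $\mathcal{X}^\H$. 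The union bound demands $m \gtrsim d\log(1/\epsilon)/\delta_1^2 \asymp d\log(1/\delta_1)/\delta_1^2$, which produces the $\log(1/\delta)$ factor in the statement. Combining the two sandwich bounds yields $\betaone_{\Psi_\ma} \le (1+\delta_1)/(1/\betaonebar - \delta_1) \le \betaonebar + O(\betaonebar^2 \delta_1)$, and rescaling $\delta_1 \mapsto c\delta/\betaonebar^2$ closes the proof.

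The main obstacle will be the uniform lower bound on $L^{\ell_1}_{\Psi_\ma}$. The nonsmooth absolute value forces a sub-exponential rather than sub-Gaussian tail, so Bernstein must be used in place of pure Gaussian concentration, and the pointwise expectation in the complex case requires care with the joint distribution of $\vu^\hh \va$ and $\vv^\hh \va$ (and with the constraint $\nj{\vu,\vv}\in\R$ defining $\mathcal{X}^\H$). The Lipschitz-transfer step from the net to $\mathcal{X}^\H$ is coupled to the operator-norm estimate from the numerator, so both halves of the argument must share a single high-probability event.
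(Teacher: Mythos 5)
Your proposal is correct and follows essentially the same route as the paper: the deterministic lower bound from Theorem~\ref{le:opt:number:l1}, the operator-norm bound for $U^{\ell_1}_{\Psi_\ma}=\|\ma\|^2$ (the paper's Lemma~\ref{le:u:p=1}), a uniform Bernstein/net concentration bound for $\tfrac1m\sum_j|\mathrm{Re}(\vu^\hh\va_j\va_j^\hh\vv)|$ (the paper's Lemma~\ref{le:l-concent:p=1}, whose proof is likewise only sketched by reference), and the pointwise expectation minimum $2/\pi$ resp.\ $1/2$ at orthogonal pairs (the paper's Lemma~\ref{le:l-e:p=1}; your bivariate-Gaussian identity is the same closed form in disguise). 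The final sandwich and rescaling of $\delta$ also match the paper's concluding algebra.
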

\begin{proof}
	See Section \ref{pf:concent:p=1}. 
\end{proof}

\subsection{Proof of Theorem \ref{le:opt:number:l1}} \label{pf:opt:number:1} 
The proof is similar to that of Theorem \ref{le:opt:number:2}, so we only give a brief description. Without loss of generality, we only need to consider the case where  \( d = 2 \), i.e. $\ma = (\va_1,\va_2,\cdots,\va_m)^\hh\in \H^{m\times 2}$.  According to Theorem \ref{th:equivalent}, one has 
\begin{equation}\label{eq:u:l1} 
	U_{\Psi_\ma}^{\ell_1} = \sup_{\norm{\vu}=1} \norms{\ma\vu}_{2}^2 = \norm{\ma}^2 \ge \frac 12 \normf{\ma}^2
\end{equation}
and
 \begin{eqnarray}\label{eq:llpl1}
	L^{\ell_1}_{\Psi_\ma} =  \inf_{\norm{\vu}=\norm{\vv}=1} \sumim \abs{\real{\vu^\hh\va_i \va_i^\hh\vv}} \le  \inf_{\substack{\norm{\vu}=\norm{\vv}=1 \\ \nj{\vu,\vv}=0}} \sumim \abs{\real{\vu^\hh\va_i \va_i^\hh\vv}}.
\end{eqnarray}
Using the same notations as in  \eqref{eq:ai} and \eqref{eq:uv}, one has $\normf{\ma}^2=\sumim t_i^2$ and 
\begin{equation}\label{eq:ax ay}
		\real{\vu^\hh\va_i\va_i^\hh\vv} = \frac 12 t_i^2 \xkh{\cos 2\varphi_i \sin 2\theta - \sin 2\varphi_i \cos 2\theta \cos\xkh{\alpha-\beta-\alpha_i+\beta_i}},
\end{equation}
where $(\theta,\alpha,\beta)\in I^\H$ and  $(\varphi_i,\alpha_i,\beta_i)\in I^\H$ for all $i=1,\ldots,m$. Here, $I^\H$ is defined in \eqref{eq:ih}. Putting \eqref{eq:ax ay} into \eqref{eq:llpl1}, we obtain that 
 \begin{equation}\label{eq:m:l1}
	L^{\ell_1}_{\Psi_\ma} \le  \minm{(\theta,\alpha,\beta)\in I^\H} f(\theta, \alpha, \beta)
\end{equation}
where \[f(\theta, \alpha, \beta) := \frac 12 \sum_{i=1}^{m} t_i^2 \abs{\cos 2\varphi_i \sin 2\theta - \sin 2\varphi_i \cos 2\theta \cos\xkh{\alpha-\beta-\alpha_i+\beta_i}}.
\]

For the real case where $\H=\R$,  \( \alpha = \beta = 0 \) and \( \alpha_i = \beta_i = 0 \) for $i = 1,\cdots,m$. Furthermore, one has
\begin{align*}
	\minm{\theta\in [0,2\pi]} f(\theta,0,0) \le \frac{1}{\pi} \int_{0}^{\pi} f(\theta,0,0)\dd\theta = \sumim \frac{t_i^2}{2\pi} \int_{\varphi_i}^{\varphi_i + \pi} \abs{\sin(2\theta - 2\varphi_i)} \dd\theta \\
	\overset{\text(i)}{=} \sumim \frac{t_i^2}{\pi} \int_{\varphi_i}^{\varphi_i + \pitwo} \sin(2\theta - 2\varphi_i) \dd\theta = \frac{1}{\pi} \sum_{i=1}^{m} t_i^2,
\end{align*}
where (i) comes from the fact that  \(\abs{\sin(2\theta - 2\varphi_i)}\) is a periodic function in \(\theta\) with period \(\pitwo\). 

For the complex case where $\H=\C$, the  Poisson formula in surface integrals reveals that 
\[\minm{(\theta, \alpha, \beta)\in I^\C} f(\theta, \alpha, \beta) \le \frac{1}{8\pi} \cdot 2\pi\int_{-1}^{1} \abs{u}\dd u \sum_{i=1}^{m} t_i^2 = \frac{1}{4} \sum_{i=1}^{m} t_i^2.
\]
Putting everything together, we obtain the conclusion 
\begin{align*}
	\betaone_{\Psi_\ma} \ge \frac{U^{\ell_1}_\ma}{L^{\ell_1}_{\Psi_\ma}} \ge \betaonebar.
\end{align*}

\subsection{Proof of Theorem \ref{le:subopt:number:l1}} \label{pf:subopt:number:l1}
Our analysis primarily relies on the following lemma. While \cite[Lemma 3.6]{xiayu2024stability} establishes a lower bound for \(\max_{\theta \in [0,\pi]} \sum_{i=1}^m t_i^2 \left| \sin(\theta - \varphi_i) \right|\), this paper requires an upper bound for \(\min_{\theta \in [0,\pi]} \sum_{i=1}^m t_i^2 \left| \sin(\theta - \varphi_i) \right|\). It is worth noting that the proof of the next lemma is considerably more involved.

\begin{lemma}\label{lemma:sub:tan}
	For any $\varphi_1, \varphi_2, \cdots, \varphi_{m+1} \in \R$ satisfying $0 \le \varphi_1 \le \varphi_2 \le \cdots \le \varphi_m \le \varphi_{m+1} = \pi$, and for all $t_1, t_2, \cdots, t_{m} \in \R$, we have
	 \begin{equation}
		\minm{\theta \in [0,\pi]} \sum_{i = 1}^m t_i^2\abs{\sin\xkh{\theta-\varphi_i}} \le \frac{1}{m\cdot \tan \frac{\pi}{2 m}} \sum_{i = 1}^m t_i^2.
	\end{equation}
\end{lemma}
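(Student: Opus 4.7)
The plan is to prove the bound in three stages: (i) reduce $\min_{\theta\in[0,\pi]}g(\theta)$, where $g(\theta):=\sum_{i=1}^m t_i^2\lvert\sin(\theta-\varphi_i)\rvert$, to the discrete minimum over the knot values $\{g(\varphi_k)\}_{k=1}^m$; (ii) derive an exact integral identity expressing $\int_0^\pi g$ as a specific weighted sum of those knot values; (iii) bound the total weight below by Jensen applied to $\tan(x/2)$.

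\textbf{Step 1 (reduction to knots).} Because $\lvert\sin\rvert$ is $\pi$-periodic, I would view $g$ as a continuous function on the circle $\R/\pi\Z$ whose kinks sit only at $\varphi_1,\ldots,\varphi_m$; thus $\min_{\theta\in[0,\pi]}g=\min_{\theta\in\R/\pi\Z}g$. On each arc between two consecutive knots the signs of $\theta-\varphi_i$ are constant, so there $g(\theta)=C_k\sin\theta-S_k\cos\theta=R_k\sin(\theta-\Phi_k)$ for some $R_k\ge 0$ and $\Phi_k$; since $g\ge 0$, the argument $\theta-\Phi_k$ lives in $[0,\pi]$ on the arc, so the sine is unimodal and attains its minimum at an endpoint. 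Hence $\min_{\theta\in[0,\pi]}g(\theta)=\min_{1\le k\le m}g(\varphi_k)$.

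\textbf{Step 2 (integral identity).} Set $\Delta_k:=\varphi_{k+1}-\varphi_k$ for $k=1,\ldots,m-1$ and $\Delta_m:=\pi-\varphi_m+\varphi_1$ (the wrap-around arc on $\R/\pi\Z$), so that $\sum_{k=1}^m\Delta_k=\pi$. A short product-to-sum calculation starting from $g=R_k\sin(\theta-\Phi_k)$ on the arc $I_k$ gives the clean identity
\[
\int_{I_k}g(\theta)\,d\theta=\tan(\Delta_k/2)\bigl(g(\varphi_k)+g(\varphi_{k+1})\bigr).
\]
Summing over the $m$ arcs---each knot $\varphi_j$ appears in exactly two arcs, once on each side---produces
\[
\int_0^\pi g(\theta)\,d\theta=\sum_{j=1}^m u_j\,g(\varphi_j),\qquad u_j:=\tan(\Delta_{j-1}/2)+\tan(\Delta_j/2),
\]
while $\int_0^\pi\lvert\sin(\theta-\varphi_i)\rvert\,d\theta=2$ for every $\varphi_i$ gives $\int_0^\pi g=2\sum_{i=1}^m t_i^2$.

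\textbf{Step 3 (weighted mean and Jensen).} Since $u_j\ge 0$ and $\sum_j u_j=2\sum_j\tan(\Delta_j/2)$, the minimum is at most the weighted mean:
\[
\min_{1\le j\le m}g(\varphi_j)\le\frac{\sum_j u_j\,g(\varphi_j)}{\sum_j u_j}=\frac{\sum_{i=1}^m t_i^2}{\sum_{j=1}^m\tan(\Delta_j/2)}.
\]
Finally, $x\mapsto\tan(x/2)$ is strictly convex on $[0,\pi)$ and $\sum_j\Delta_j=\pi$, so Jensen's inequality gives $\sum_{j=1}^m\tan(\Delta_j/2)\ge m\tan(\pi/(2m))$, which combined with Step 1 yields the stated bound.

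\textbf{Main obstacle.} The heart of the argument is discovering and verifying the identity in Step 2; without the sharp factor $\tan(\Delta_k/2)$ one cannot recover the correct constant $1/(m\tan(\pi/(2m)))$. A naive uniform average of the $g(\varphi_k)$'s does not suffice---the putative inequality $\sum_k g(\varphi_k)\le\cot(\pi/(2m))\sum_i t_i^2$ can fail when the weights $t_i^2$ concentrate on a single index---so the adaptive, gap-dependent weights $u_j$ are essential. Degenerate configurations (coincident knots, or $\varphi_1=0$ or $\varphi_m=\pi$) reduce the effective number of distinct knots on $\R/\pi\Z$ from $m$ to some $m'\le m$, and since $k\mapsto k\tan(\pi/(2k))$ is decreasing, the Jensen step only becomes stronger in that regime, so the stated bound is preserved by continuity.
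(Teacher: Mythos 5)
Your proof is correct, and it follows the same overall strategy as the paper's --- decompose $[0,\pi]$ into arcs between consecutive knots, write $g$ as a single sinusoid $R_k\sin(\theta-\Phi_k)$ on each arc, convert $\int_0^\pi g = 2\sum_i t_i^2$ into a sum of per-arc contributions carrying a factor $\tan(\Delta_k/2)$, and finish with Jensen applied to the convex function $\tan(x/2)$. The difference is in the middle step, and yours is a genuine simplification. The paper evaluates $\int_{I_k}g = 2g_k\bigl(\tfrac{\varphi_k+\varphi_{k+1}}{2}\bigr)\sin\bigl(\tfrac{\Delta_k}{2}\bigr)$ and then must prove the claim \eqref{claim:g middle}, namely $g_k(\text{midpoint})\cos(\Delta_k/2)\ge\min_\theta g(\theta)$, via a three-case analysis of the monotonicity of $g$ on the arc. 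Your product-to-sum identity shows that this very quantity equals $\tfrac12\bigl(g(\varphi_k)+g(\varphi_{k+1})\bigr)$, i.e.\ an average of two knot values, so once you have observed (your Step 1) that the global minimum is attained at a knot, the claim is immediate and the case analysis disappears; the weighted-mean formulation with weights $u_j$ packages the summation cleanly. You are also more careful than the paper on two minor points: you work on the circle $\R/\pi\Z$ with a wrap-around arc, so the piece $[0,\varphi_1]$ is accounted for (the paper's identity $\tfrac12\int_0^\pi g=\tfrac12\sum_k\int_{\varphi_k}^{\varphi_{k+1}}g$ silently needs $\varphi_1=0$ or the periodic identification), and you flag the degenerate case $\Delta_k=\pi$ where $\tan(\Delta_k/2)$ blows up but the minimum is trivially zero. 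Both approaches yield the same constant; yours is shorter and, in my view, preferable.
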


With Lemma \ref{lemma:sub:tan} in place, we can give a proof of Theorem \ref{le:subopt:number:l1}.

\begin{proof}[Proof of Theorem \ref{le:subopt:number:l1}]
According to \eqref{eq:betaAB}, we only need to consider the case where  \( d = 2 \).  
Similar to \eqref{eq:u:l1} and \eqref{eq:llpl1}, we may assume $\va_i=t_i(\cos\varphi_i,\sin\varphi_i)^\T$ with $t_i=\norm{\va_i}$ and $0\le \varphi_i \le \pi$ for all $i=1,\ldots,m$, and then have that 
\[
U_{\Psi_\ma}^{\ell_1}  \ge \frac12 \sum_{i = 1}^m t_i^2
\]
and 
\[
L^{\ell_1}_{\Psi_\ma}  \le  \inf_{\substack{\norm{\vu}=\norm{\vv}=1 \\ \nj{\vu,\vv}=0}} \sumim \abs{\vu^\hh\va_i \va_i^\hh\vv}.
\]
Denote $\vu=(\cos\theta,\sin\theta)^\T$ and $\vv=(\sin\theta,-\cos\theta)^\T$ with $0\le \theta\le \pi$. Then 
\begin{eqnarray*}
\inf_{\substack{\norm{\vu}=\norm{\vv}=1 \\ \nj{\vu,\vv}=0}} \sumim \abs{\vu^\hh\va_i \va_i^\hh\vv} =  \minm{\theta\in [0,\pi]} \frac 12 \sum_{i=1}^{m} t_i^2 \abs{\sin(2\theta-2\varphi_i)} =  \minm{\theta\in [0,\pitwo]} \frac 12 \sum_{i=1}^{m} t_i^2 \abs{\sin(2\theta-2\varphi_i)} ,  
\end{eqnarray*}
where the last equality comes from the fact that $\abs{\sin(2\theta-2\varphi_i)}$ has a period of $\pi/2$ for $\theta$ and $\varphi_i$. 
Therefore, apply Lemma \ref{lemma:sub:tan} to obtain that   
\begin{eqnarray*}
	\inf_{\substack{\norm{\vu}=\norm{\vv}=1 \\ \nj{\vu,\vv}=0}} \sumim \abs{\vu^\hh\va_i \va_i^\hh\vv} \le  \frac{1}{2 m\cdot \tan \frac{\pi}{2 m}} \sum_{i = 1}^m t_i^2.  
\end{eqnarray*}  
Combining the above two bounds, we conclude
 \[\betaone_{\Psi_\ma} \ge \frac{U_\ma^{\ell_1}}{L^{\ell_1}_{\Psi_\ma} }  \ge m\tan \frac{\pi}{2 m}.\]
\end{proof}

Finally, we give the proof of Lemma \ref{lemma:sub:tan}.

\begin{proof}[Proof of Lemma \ref{lemma:sub:tan}]
	For the sake of simplicity, we take $g(\theta) := \sumim t_i^2 \abs{\sin\xkh{\theta-\varphi_i}}$.  For each $1\le k\le m$, when $\theta\in\zkh{\varphi_k,\varphi_{k+1}}$, it holds
	\[
	g(\theta) =\sum_i^k t_i^2 \sin\xkh{\theta-\varphi_i} - \sum_{i = k+1}^m t_i^2 \sin\xkh{\theta-\varphi_i} :=g_k(\theta) .
	\]
	For each $1\le k\le m$, there exists $\theta_k\in\zkh{0,\pi}$ and $\rk \ge 0$ such that 
		 \begin{equation}
		g_k(\theta)  = r_k \sin\xkh{\theta-\theta_k}.
	\end{equation}
	Therefore, one has 
	\begin{align*}
		\int_{\phik}^{\phikk} g(\theta) \dd \theta &= \int_{\phik}^{\phikk} \rk\sin\xkh{\theta - \thetak} \dd \theta \\
		&= \rk\cdot\xkh{\cos\xkh{\phik-\thetak}-\cos\xkh{\phikk-\thetak}}\\
		&= 2\cdot \rk\cdot \sin\xkh{\frac{\phik+\phikk}{2}-\thetak} \sin\xkh{\frac{\phikk-\phik}{2}} \\
		&= 2 \gk\xkh{\frac{\phik+\phikk}{2}} \sin\xkh{\frac{\phikk-\phik}{2}} 
	\end{align*}
	due to $\phik \le \frac{\phik+\phikk}{2} \le \phikk$. We claim that the following holds: 
	\begin{equation}\label{claim:g middle}
		\gk\xkh{\frac{\phik+\phikk}{2}} \cos\xkh{\frac{\phikk-\phik}{2}} \ge \minm{\theta\in\zeropi} g(\theta)
	\end{equation}
	for each $k = 1,\cdots,m$.  We next divide the proof into two cases. 
	\begin{itemize}
		\item For all \(1 \le k \le m\), it holds \(\frac{\varphi_{k+1}-\varphi_k}{2} < \frac{\pi}{2}\). Then we have 
		\begin{align*}
			\sumim t_i^2 &= \frac 12 \int_0^\pi g(\theta) \dd\theta = \frac 12 \sumkm \int_{\phik}^{\phikk} g(\theta) \dd\theta \\
			&= \sumkm \gk\xkh{\frac{\phik+\phikk}{2}} \cos\xkh{\frac{\phikk-\phik}{2}} \tan\xkh{\frac{\phikk-\phik}{2}} \\
			&\ge \minm{\theta\in\zeropi} g(\theta) \cdot \sumkm \tan\xkh{\frac{\phikk-\phik}{2}} \\
			&\stackrel{\mbox{(i)}}{\ge} \minm{\theta\in\zeropi} g(\theta) \cdot m \cdot \tan\xkh{\sumim\frac{\phikk-\phik}{2m}} = \minm{\theta\in\zeropi} g(\theta) \cdot m \tan \frac{\pi}{2m} ,
		\end{align*}
		where (i) comes from Jensen's inequality due to the fact that \(\tan x\) is a convex function on \([0, \frac{\pi}{2})\).  We arrive at the conclusion for this case.
		
		\item There exists a \(k \) such that \(\frac{\varphi_{k+1}-\varphi_k}{2} = \frac{\pi}{2}\), i.e., \(\varphi_{k+1}-\varphi_k = \pi\).  For this case, we have \(\varphi_j = 0\) for all $j \le k$ and \(\varphi_j = \pi \)  for all $j \ge k+1$.
Thus, one can conclude that \begin{equation*}
			\minm{\theta \in [0,\pi]} g(\theta) = \minm{\theta \in [0,\pi]} \sum_{i = 1}^m t_i^2 \abs{\sin\xkh{\theta}} = 0.
		\end{equation*}
This implies the conclusion of Lemma \ref{lemma:sub:tan} holds trivially  for \(m\ge 2\).
	\end{itemize}

	Finally, it remains to prove the claim \eqref{claim:g middle}. For any fixed $k\in\dkh{1,\cdots,m}$,  note that $g(\theta) = \gk (\theta) = \rk \sin\xkh{\theta-\thetak} \ne 0$ when $\theta\in\xkh{\phik,\phikk}$. Therefore, \(g\xkh{\theta}\) is either monotonic on the interval $\zkh{\phik,\phikk}$, or it first increases and then decreases on $\zkh{\phik,\phikk}$. For convenience, we define $\alpha_k=\phik-\thetak$ and $\alpha_{k+1}=\phikk-\thetak$.  Observe that when $\theta\in\zkh{\varphi_k,\varphi_{k+1}}$, it holds $g_k(\theta)=\rk\sin\xkh{\theta-\thetak}=g(\theta) \ge 0$. It means that 
 $0\le \alphak\le\alphakk\le\pi$.  Our proof then divides into the following three cases:
	
	{\bf Case 1: $g(\theta)$ is monotonically increasing on $\zkh{\phik,\phikk}$.} For this case, one has $0\le \alphak\le\alphakk\le\frac{\pi}{2}$. Therefore,  
	 \begin{align*}
		\minm{\theta\in\zeropi} g(\theta) &\le \minm{\theta\in\zkh{\phik,\phikk}} g(\theta) = \rk \sin\xkh{\alphak} 
		= \rk \sin\xkh{\frac{\alphakk+\alphak}{2} - \frac{\alphakk-\alphak}{2}} \\
		&= \rk \xkh{\sin \frac{\alphakk+\alphak}{2} \cos \frac{\alphakk-\alphak}{2} - \cos \frac{\alphakk+\alphak}{2} \sin \frac{\alphakk-\alphak}{2}}\\
		&\le \rk \sin \frac{\alphakk+\alphak}{2} \cos \frac{\alphakk-\alphak}{2} = \gk \xkh{\frac{\alphakk+\alphak}{2}} \cos \frac{\alphakk-\alphak}{2}
	\end{align*}
	where the last inequality holds due to $\rk\ge 0$ and the fact that 
	\begin{equation*}\label{ineq:case 1}
		\cos\frac{\alphakk+\alphak}{2} \sin\frac{\alphakk+\alphak}{2}\ge 0.
	\end{equation*}

	{\bf Case 2: $g(\theta)$ is monotonically decreasing on $\zkh{\phik,\phikk}$.} This implies that $\frac{\pi}{2}\le \alphak\le\alphakk\le\pi$ and \begin{equation*}
		\cos\frac{\alphakk+\alphak}{2} \sin\frac{\alphakk+\alphak}{2}\le 0.
	\end{equation*}
	Therefore, we obtain that \begin{align*}
		\minm{\theta\in\zeropi} g(\theta) &\le \minm{\theta\in\zkh{\phik,\phikk}} g(\theta) = \rk \sin\xkh{\alphakk} 
		= \rk \sin\xkh{\frac{\alphakk+\alphak}{2} + \frac{\alphakk-\alphak}{2}} \\
		&= \rk \xkh{\sin \frac{\alphakk+\alphak}{2} \cos \frac{\alphakk-\alphak}{2} + \cos \frac{\alphakk+\alphak}{2} \sin \frac{\alphakk-\alphak}{2}}\\
		&\le \rk \sin \frac{\alphakk+\alphak}{2} \cos \frac{\alphakk-\alphak}{2} = \gk \xkh{\frac{\alphakk+\alphak}{2}} \cos \frac{\alphakk-\alphak}{2}.
	\end{align*}
	
	{\bf Case 3: $g(\theta)$ is increasing first and then decreasing on $\zkh{\phik,\phikk}$.} If the minimum value on $\zkh{\phik,\phikk}$ is attained at \(\phik\), one can see that  \(\frac{\pi}{2}-\alphak \ge \alphakk-\frac{\pi}{2}\); it implies that $\cos\frac{\alphakk+\alphak}{2} \sin\frac{\alphakk+\alphak}{2}\ge 0$, and the remaining proof details are similar to those of Case 1. Otherwise,  if the minimum value on this interval is attained at \(\phikk\), it indicates \(\frac{\pi}{2}-\alphak \le \alphakk-\frac{\pi}{2}\) and $\cos\frac{\alphakk+\alphak}{2} \sin\frac{\alphakk+\alphak}{2}\le 0$, and the remaining proof is then similar to Case 2.
	
	This completes the proof. 
\end{proof}

\subsection{Proof of Theorem \ref{le:em:l1}} \label{pf:em:l1}
For the harmonic frame \( \mem \) defined in \eqref{def:mem}, one can check that  \(\mem^\T\mem = \frac m2 \mi\). Denote \(\mem = (\va_1, \va_2, \cdots, \va_m)^T \in \R^{m\times 2}\). Then the upper Lipschitz constant 
\[U_\mem^{\ell_1} = \sup_{\norm{\vu}=1} \norms{\ma\vu}_{2}^2 = \norm{\mem} = \frac m2. \] 
On the other hand, Lemma \ref{le:ortho:l - em:p=1} reveals that \begin{equation*}
	L_\mem^{\ell_1} = \inf_{\norm{\vu}=\norm{\vv}=1} \sum_{j=1}^m \ucabs{\vu^\T\va_j \va_j^\T\vv} = \left\{\begin{aligned}
		&\frac{\cos \frac{\pi}{2 m}}{2\tan \frac{\pi}{2 m}},   \quad \mbox{if}\ m\ \mbox{is odd},\\
		&\ \frac{1}{ \tan \frac{\pi}{m}},                  \ \ \quad \mbox{if}\ m\ \mbox{is even}.
	\end{aligned}\right.
\end{equation*}
As a result, we obtain the conclusion  that 
\begin{equation*}
	\betaone_{\mem} = \frac{U_\mem^{\ell_1}}{L_\mem^{\ell_1}} = \left\{\begin{aligned}
		& m \frac{\tan \pimtwo}{\cos \pimtwo},   \quad \mbox{if}\ m\ \mbox{is odd},\\
		& \frac m2 \tan \frac{\pi}{m},           \quad \mbox{if}\ m\ \mbox{is even}.
	\end{aligned}\right.
\end{equation*}

\subsection{Proof of Theorem \ref{le:concent:p=1}} \label{pf:concent:p=1}
Given the extensive research on the concentration properties of Gaussian matrices, we directly present several essential lemmas.

\begin{lemma} \cite{candes2013phaselift, fourcart2013} \label{le:u:p=1}
	Assume that $\ma\in\H^d$ is a standard Gaussian matrix with $\H=\R$ or $\C$. Then for any $ 0 < \delta < 1 $, with probability at least $1-2\exp(-c_2 d)$, one has \begin{equation}
		(1-\delta)\norm{\vu}^2 \le \frac 1m \norm{\ma\vu}^2 \le (1+\delta)\norm{\vu}^2
	\end{equation}
	simultaneously for all $\vu\in\H^d$ provided that $m\gtrsim\delta^{-2} d$. Here, $c_2 > 0$ is a universal constant. 
\end{lemma}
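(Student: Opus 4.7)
This is the standard two-sided operator norm / restricted isometry bound for Gaussian matrices, and my plan is a textbook pointwise-plus-net argument. By homogeneity it suffices to verify, for each unit vector $\vu \in \SSS_\H^{d-1}$, that
\[
(1-\delta)\le \tfrac 1m \norm{\ma\vu}^2 \le (1+\delta),
\]
uniformly in $\vu$, and then re-scale.

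First I would handle a fixed $\vu$. Writing $\ma = (\va_1,\ldots,\va_m)^\hh$ with the $\va_j$ i.i.d.\ standard Gaussian as in \eqref{def:gauss}, each scalar $\nj{\va_j,\vu}$ is $\cN(0,1)$ in the real case and has the complex Gaussian distribution of the same unit variance in the complex case, so
\[
\frac{1}{m}\norm{\ma\vu}^2 = \frac{1}{m}\sum_{j=1}^m \abs{\nj{\va_j,\vu}}^2
\]
is an average of $m$ i.i.d.\ nonnegative sub-exponential random variables with mean $1$ (a chi-square with $m$ degrees of freedom over $m$ in the real case; with $2m$ degrees of freedom over $2m$, up to rescaling, in the complex case). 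A Bernstein / Laurent--Massart tail bound therefore gives
\[
\PP\!\left(\abs{\tfrac 1m \norm{\ma \vu}^2 - 1} > \tfrac{\delta}{2}\right) \le 2 \exp\!\xkh{-c_1 \delta^2 m}
\]
for a universal $c_1>0$ and all $0<\delta<1$.

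Next I would upgrade this pointwise estimate to a uniform bound by a covering argument. Let $\cN_\ep \subset \SSS_\H^{d-1}$ be an $\ep$-net of cardinality at most $(3/\ep)^{D}$ in the real case and $(3/\ep)^{2D}$ in the complex case, where $D=d$; this is standard (e.g.\ Vershynin's \emph{High-Dimensional Probability}). Taking the union bound over $\cN_\ep$ gives
\[
\PP\!\left(\maxm{\vu\in\cN_\ep}\abs{\tfrac 1m \norm{\ma \vu}^2 - 1} > \tfrac{\delta}{2}\right) \le 2\xkh{3/\ep}^{2d} \exp\xkh{-c_1 \delta^2 m}.
\]
A routine approximation argument (bound $\norms{\tfrac{1}{\sqrt m}\ma}$ via $\maxm{\vu\in\cN_\ep}\tfrac{1}{\sqrt m}\norm{\ma\vu}$ and a geometric series in $\ep$, cf.\ \cite{fourcart2013}) then transfers the net-level deviation to the whole sphere with only a mild multiplicative loss, yielding
\[
\sup_{\vu\in\SSS_\H^{d-1}} \abs{\tfrac 1m \norm{\ma \vu}^2 - 1} \le \delta
\]
on the same event, provided $\ep$ is chosen to be an absolute constant (e.g.\ $\ep=1/4$).

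Finally, I balance the two terms. Choosing $\ep$ as an absolute constant, the failure probability is bounded by $2\exp\xkh{C d - c_1 \delta^2 m}$, which is at most $2\exp(-c_2 d)$ as soon as $m \ge C_0 \delta^{-2} d$ for sufficiently large universal $C_0$ and some $c_2>0$. The proof for $\H=\C$ is identical once we note that the covering number of the complex unit sphere picks up only a constant factor in the exponent, absorbed into $C_0$. The only mildly delicate point is keeping the constants clean through the net-to-sphere step, but the argument is completely standard and no genuine obstacle arises.
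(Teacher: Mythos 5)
Your proposal is correct and follows the standard concentration-plus-$\epsilon$-net argument; the paper does not prove this lemma but simply cites \cite{candes2013phaselift, fourcart2013}, where essentially the same pointwise Bernstein bound, covering of the sphere, and net-to-sphere transfer (via the Hermitian quadratic-form lemma applied to $\tfrac1m\ma^\hh\ma - \mi$) are used. No gaps: the union bound and the choice $m \gtrsim \delta^{-2} d$ absorb the covering-number exponent exactly as you describe, in both the real and complex cases.
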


\begin{lemma}\label{le:l-concent:p=1}
	Assume that $\ma\in\H^d$ is a standard Gaussian matrix with $\H=\R$ or $\C$, and $m\gtrsim\log(1/\delta) \delta^{-2} d$. Then for any $ 0 < \delta < 0.5 $, with probability exceeding $1-4\exp(-c_3 d)$ for some universal constant $c_3>0$, one has \begin{equation}
		\frac 1m \sum_{j=1}^{m} \abs{\real{\vu^\hh\va_j\va_j^\hh \vv}} \ge \E\abs{\real{\vu^\hh\va_1\va_1^\hh \vv}} - \delta 
	\end{equation}
 for all $\vu,\vv\in\SSS_\H^{d-1}$. 
\end{lemma}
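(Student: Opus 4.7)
The plan is a standard covering argument combined with pointwise Bernstein concentration, the key subtlety being a dimension-free Lipschitz bound that avoids an extra $\log d$ factor in the sample complexity.

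First, fix $(\vu, \vv) \in \SSS_\H^{d-1} \times \SSS_\H^{d-1}$ and set $X_j(\vu,\vv) := |\real{\vu^\hh \va_j \va_j^\hh \vv}|$. Since $X_j \le |\va_j^\hh \vu|\cdot|\va_j^\hh\vv|$ is a product of two sub-Gaussian variables whose $\psi_2$-norms are bounded by an absolute constant, $X_j$ is sub-exponential with $\psi_1$-norm bounded uniformly in $(\vu,\vv)$. Bernstein's inequality then gives, for any $t \in (0,1)$,
\[
\PP\Bxkh{\Big|\tfrac{1}{m}\textstyle\sum_{j=1}^m X_j(\vu,\vv) - \E X_1(\vu,\vv)\Big| > t} \le 2 \exp(-c_1 t^2 m).
\]

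Second, I would discretize the sphere by an $\varepsilon$-net $\mathcal{N}_\varepsilon \subset \SSS_\H^{d-1}$ of cardinality at most $(3/\varepsilon)^{\eta d}$, where $\eta=1$ for $\H=\R$ and $\eta=2$ for $\H=\C$. Choosing $t = \delta/2$ and $\varepsilon$ of order $\delta$, a union bound over $\mathcal{N}_\varepsilon \times \mathcal{N}_\varepsilon$ ensures that $\tfrac{1}{m}\sum_j X_j(\vu_0, \vv_0) \ge \E X_1(\vu_0, \vv_0) - \delta/2$ holds simultaneously for all net pairs with failure probability at most $2\exp(-c_3 d)$, provided $m \gtrsim \log(1/\delta)\,\delta^{-2}\, d$.

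The main obstacle is the continuity step: a naive bound $|X_j(\vu,\vv) - X_j(\vu',\vv')| \le \|\va_j\|^2(\|\vu-\vu'\| + \|\vv-\vv'\|)$ yields a Lipschitz constant of order $d$ and costs an extra $\log d$ factor. I would instead apply Cauchy–Schwarz termwise to obtain
\[
\tfrac{1}{m}\sum_{j=1}^m \bigl| X_j(\vu,\vv) - X_j(\vu',\vv') \bigr|
\le \sqrt{\tfrac{1}{m}\norm{\ma(\vu-\vu')}^2}\,\sqrt{\tfrac{1}{m}\norm{\ma\vv}^2}
+ \sqrt{\tfrac{1}{m}\norm{\ma\vu'}^2}\,\sqrt{\tfrac{1}{m}\norm{\ma(\vv-\vv')}^2}.
\]
Invoking Lemma \ref{le:u:p=1}, on an event of probability $\ge 1 - 2\exp(-c_2 d)$ the bound $\tfrac{1}{m}\norm{\ma\vw}^2 \le 2\norm{\vw}^2$ holds uniformly in $\vw$ as long as $m \gtrsim d$. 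On this event the displayed expression is at most $2\sqrt{2}\,(\|\vu-\vu'\| + \|\vv-\vv'\|) \le 4\sqrt{2}\,\varepsilon$, which is a dimension-free Lipschitz estimate.

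Finally, a parallel Cauchy–Schwarz bound on a single Gaussian vector gives $|\E X_1(\vu,\vv) - \E X_1(\vu',\vv')| \lesssim \|\vu-\vu'\| + \|\vv-\vv'\|$, so choosing the net radius $\varepsilon$ as a small absolute multiple of $\delta$ absorbs both the empirical and expected discretization errors into $\delta/2$. Combining with the pointwise Bernstein bound on $\mathcal{N}_\varepsilon^2$ and intersecting with the event from Lemma \ref{le:u:p=1} yields the desired uniform inequality with the stated probability $1 - 4\exp(-c_3 d)$.
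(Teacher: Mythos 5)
Your proposal is correct and is essentially the standard argument the paper itself defers to (it omits the proof, citing the analogous covering argument in Xia et al., Appendix B.1): pointwise sub-exponential Bernstein concentration, a union bound over an $\varepsilon$-net of cardinality $(3/\varepsilon)^{O(d)}$ with $\varepsilon \asymp \delta$ accounting for the $\log(1/\delta)\delta^{-2}d$ sample size, and a dimension-free Lipschitz transfer via Cauchy--Schwarz together with the restricted-isometry event of Lemma \ref{le:u:p=1}. The only blemish is the inconsequential constant in the continuity step (the Cauchy--Schwarz bound gives $2(\norm{\vu-\vu'}+\norm{\vv-\vv'})$ rather than $2\sqrt{2}$ times that quantity), which does not affect the conclusion.
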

\begin{proof}
The proof is similar to that in \cite[Appendix B.1]{xiayu2024stability} and is therefore omitted.
\end{proof}

With the above two lemmas in hand, we are now ready to prove Lemma \ref{le:concent:p=1}.

\begin{proof}[Proof of Lemma \ref{le:concent:p=1}]
According to Lemma \ref{le:u:p=1}, for any $ 0 < \delta < 1$, it holds with probability at least $1-2\exp(-c_2 d)$ that 
 \[\frac 1m U_{\Psi_\ma}^{\ell_1} = \sup_{\norm{\vu}=1} \frac1m\norms{\ma\vu}_{2}^2 \le 1 + \delta,
 \]
 provided $m\ge C_0\log(1/\delta) \delta^{-2} d$ for some  universal constant  $C_0>0$.  Lemma \ref{le:l-concent:p=1} reveals that when $m\ge C_0\log(1/\delta) \delta^{-2} d$, with probability exceeding $1-4\exp(-c_0 d)$, 
\[\frac 1m L_\ma^{\ell_1} = \inf_{\norm{\vu}=\norm{\vv}=1} \frac 1m \sum_{j=1}^{m} \abs{\real{\vu^\hh\va_j\va_j^\hh \vv}} \ge \inf_{\norm{\vu}=\norm{\vv}=1} \E\abs{\real{\vu^\hh\va_1\va_1^\hh \vv}} - \delta.
\]
Here, $c_3>0$ is a universal constant. Combining the two bounds together with Lemma \ref{le:l-e:p=1}, we obtain that for any $0<\delta<0.2$,  with probability exceeding $1-6\exp(-c_0 d)$, it holds
 \begin{align*}
	\betaone_{\Psi_\ma}=\frac{U_\ma^{\ell_1}}{L_\ma^{\ell_1}} \le \left\{\begin{aligned}
		&\frac \pi2+5\delta,   \quad \mbox{if }\ \H = \R, \\
		& \hspace{0.15em}2\, + 5\delta,     \quad \mbox{if }\ \H = \C,
	\end{aligned}\right. 
\end{align*}
where the inequality follows from the fact that $\frac{1+\delta}{t-\delta}\le\frac 1t +10\delta$ for any $t > 0.5$ and $0<\delta<0.2$, and  $c_0>0$ is a  universal constant. Finally, noting that the lower Lipchitz bound is given in Lemma \ref{le:opt:number:l1}, we complete the proof.
\end{proof}



\section{Discussion}\label{sec:discussion}

This article provides a systematic analysis of the stability of phase retrieval by examining the condition number of the intensity measurement map \(\Psi_{\ma}\). We establish universal lower bounds on the condition number under both the \(\ell_1\)- and \(\ell_2\)-norms. Furthermore, these bounds are shown to be tight or asymptotically tight by evaluating the condition numbers of the harmonic frame \(\mem \in \mathbb{R}^{m \times 2}\) and Gaussian random matrices \(\ma \in \mathbb{H}^{m \times d}\). We conclude by discussing potential extensions and related open problems.

\paragraph{\textit{The Condition Number for Structured Measurement Matrices.}}

In many practical applications, the sensing matrix in phase retrieval exhibits a Fourier structure. Investigating the condition number for Fourier-type phase retrieval problems-such as short-time Fourier transform (STFT) phase retrieval and coded diffraction patterns (CDP)-is of significant practical interest and warrants further study.

\paragraph{\textit{The Optimal Sensing Matrix.}}

While the harmonic frame is known to be an optimal sensing matrix minimizing the condition number in \(\mathbb{R}^{m \times 2}\) for \(p=2\), designing sensing matrices \(\ma \in \mathbb{H}^{m \times d}\) with optimal condition number in general spaces remains an open and challenging problem.

\section{Appendix A: Technical Lemmas}

%

\begin{lemma}\label{le:l-e:p=1}
	For a standard Gaussian random vector \(\va\in\H^d\) defined as \eqref{def:gauss}, one has 
	\begin{align*}
		\inf_{\substack{\vu, \vv\in\SSS_{\H}^{d-1} \\ \nj{\vu,\vv}\in\R}}\E\abs{\real{\vu^\hh\va\va^\hh\vv}} = \left\{\begin{aligned}
			&\frac 2\pi,  & \mbox{if }\; \H = \R, \\
			&\frac 12,    & \mbox{if }\; \H = \C.
		\end{aligned}\right. 
	\end{align*}
\end{lemma}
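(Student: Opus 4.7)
The plan is to exploit rotational invariance of the standard Gaussian to reduce the problem to a one-parameter minimization in $\rho := \nj{\vu, \vv} \in [-1, 1]$, and then evaluate the resulting function $F_\H(\rho) := \E\abs{\real{\vu^\hh \va \va^\hh \vv}}$ in closed form in each of the two cases. Since the law of $(\vu^\hh \va, \vv^\hh \va)$ depends on $(\vu, \vv)$ only through $\rho$ (by orthogonal/unitary invariance), we may take $d = 2$, $\vu = \ve_1$, and $\vv = \rho\ve_1 + \sqrt{1-\rho^2}\ve_2$ throughout.

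For the real case, $X := \vu^\T \va$ and $Y := \vv^\T \va$ form a centered bivariate Gaussian with unit variances and correlation $\rho$. I would use the polar representation $(X, Y) = R(\cos\Theta, \rho\cos\Theta + \sqrt{1-\rho^2}\sin\Theta)$ with $R$ and $\Theta \sim \mathrm{Uniform}[0, 2\pi]$ independent, which reduces $F_\R(\rho)$ to the elementary integral $\tfrac{1}{2\pi}\int_0^{2\pi}\abs{\rho + \sin\psi}\,d\psi$ and yields the classical expression $F_\R(\rho) = \tfrac{2}{\pi}\bigl(\sqrt{1-\rho^2} + \rho\arcsin\rho\bigr)$. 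Since $F_\R'(\rho) = \tfrac{2}{\pi}\arcsin\rho$, the function $F_\R$ is minimized at $\rho = 0$ with value $2/\pi$.

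For the complex case, I would write $X = X_1 + \i X_2$ and $Y = Y_1 + \i Y_2$. The proper-complex-Gaussian identities $\E[X^2] = \E[Y^2] = \E[XY] = 0$ together with $\E[X\bar Y] = \rho \in \R$ imply that $(X_1, Y_1)$ and $(X_2, Y_2)$ are independent, identically distributed centered bivariate Gaussians with variances $1/2$ and correlation $\rho$, and $\E[X_1 Y_2] = \E[X_2 Y_1] = 0$; in particular, $\real{X\bar Y} = X_1 Y_1 + X_2 Y_2$. Using the polarization $X_k Y_k = \tfrac14[(X_k + Y_k)^2 - (X_k - Y_k)^2]$ and the fact that $X_k + Y_k$ and $X_k - Y_k$ are independent Gaussians of variances $1 + \rho$ and $1 - \rho$, a standard chi-squared identity lets one rewrite $X_1 Y_1 + X_2 Y_2 \stackrel{d}{=} \tfrac12\zkh{(1+\rho)E_1 - (1-\rho)E_2}$ with $E_1, E_2$ independent rate-one exponentials. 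Computing the density of $\alpha E_1 - \beta E_2$ as a two-sided exponential then yields $\E\abs{\alpha E_1 - \beta E_2} = (\alpha^2 + \beta^2)/(\alpha + \beta)$, and substituting $\alpha = 1 + \rho$, $\beta = 1 - \rho$ gives $F_\C(\rho) = (1 + \rho^2)/2$, minimized at $\rho = 0$ with value $1/2$.

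The main obstacle I anticipate lies in the complex case: reading off the correct covariance structure of $(X_1, X_2, Y_1, Y_2)$ from the proper-complex-Gaussian identities, and executing the chi-squared to exponential reduction cleanly. Once these are in place, the two one-variable minimizations are routine, and in both the real and the complex case the minimum is attained precisely at $\rho = 0$, i.e.\ when $\vu$ and $\vv$ are orthogonal.
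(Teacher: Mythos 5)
Your proposal is correct, and the two closed forms you obtain agree with the paper's after the reparametrization $\rho=\cos\theta$: your $F_{\R}(\rho)=\tfrac{2}{\pi}\bigl(\sqrt{1-\rho^2}+\rho\arcsin\rho\bigr)$ is the paper's $p(\theta)=\tfrac{2}{\pi}\bigl(\sin\theta+(\tfrac{\pi}{2}-\theta)\cos\theta\bigr)$, and your $F_{\C}(\rho)=\tfrac{1+\rho^2}{2}$ is the paper's $h(\theta)=\tfrac{3+\cos 2\theta}{4}$. The real case is essentially the paper's argument (the paper simply cites the classical formula from \cite[Lemma 4.7]{xiayu2024stability} and checks monotonicity, whereas you derive it via the polar representation; both then minimize at $\rho=0$). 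The complex case is where you genuinely diverge: the paper computes $\E\abs{\real{X\bar Y}}$ by an explicit four\-/dimensional polar\-/coordinate integral that it reduces to a surface integral over $\SSS^2$ and evaluates with the Poisson formula, while you read off the covariance structure of $(X_1,X_2,Y_1,Y_2)$ from the proper\-/Gaussian identities $\E[X^2]=\E[Y^2]=\E[XY]=0$, $\E[X\bar Y]=\rho$, then use polarization, the $\chi^2_2$\-/to\-/exponential identity, and the two\-/sided\-/exponential law of $\alpha E_1-\beta E_2$ to get $\E\abs{\alpha E_1-\beta E_2}=(\alpha^2+\beta^2)/(\alpha+\beta)$. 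Your route is more elementary and avoids the multivariate integration entirely; the paper's route is more computational but self\-/contained at the level of calculus. Two points to make explicit when writing this up: (i) the reduction to a one\-/parameter family uses unitary invariance plus the fact that the constraint $\nj{\vu,\vv}\in\R$ makes every $\rho\in[-1,1]$ attainable for $d\ge 2$; (ii) independence of $(X_1,Y_1)$ and $(X_2,Y_2)$ follows from uncorrelatedness only because all four variables are jointly Gaussian, and likewise for the independence of $X_k+Y_k$ and $X_k-Y_k$ (whose covariance is $\mathrm{Var}\,X_k-\mathrm{Var}\,Y_k=0$). With those remarks your argument is complete.
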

\begin{proof}
Note that $\E\abs{\real{\vu^\hh\va\va^\hh\vv}}$ is unchanged if we multiply $\vu$ or $\vv$ by $-1$. Therefore, without loss of generality we can assume that \(\nj{\vu,\vv}\ge 0\).
	Due to the rotational invariance of the Gaussian random vector, we assume that  \(\vx=\zkh{1,0,\cdots,0}^\T\) and \(\vx=\zkh{\cos\theta,\sin\theta,0,\cdots,0}^\T\), where \(\cos\theta=\nj{\vu,\vv}\) and \(\theta\in[0,\pi/2]\). Next, we divide the proof into two cases.
	
	{\bf Case 1: $\H = \R$.} A direct calculation gives (see, e.g. \cite[Lemma 4.7]{xiayu2024stability})
\[
\E\abs{\vu^\hh\va\va^\hh\vv}  =\frac 2\pi(\sin\theta+(\frac \pi2-\theta)\cos\theta):=p(\theta).
\]
One can check that $p(\theta)$ is a decreasing function over $\theta\in[0,\pi/2]$. Thus, 
\[
\inf_{ \vu, \vv \in\SSS_{\R}^{d-1}} \E\abs{\vu^\hh\va\va^\hh\vv} = \frac 2\pi .
\]
	
	{\bf Case 2: $\H = \C$.} Denote the first two entries of $\va$ as $\frac a\sqrttwo$ and $\frac b\sqrttwo$. Here, $a = a_1+a_2\i$ and $b = b_1+b_2\i$, so one has $a_1,a_2,b_1,b_2\sim\mathcal{N} (0,1)$. Then, we have
	\begin{align*}
		h(\theta)&:=\E\abs{\real{\vu^\hh\va\va^\hh\vv}} = \frac 12\E \abs{\real{a(\bar{a}\cos \theta+\bar{b}\sin\theta)}} \\
		&= \frac 12\E\xkh{\abs{\xkh{a_1^2 + a_2^2}\cos\theta + \xkh{a_1b_1 + a_2b_2}\sin\theta}} \\
		&= \frac 12\int_{-\infty}^{\infty}\int_{-\infty}^{\infty}\int_{-\infty}^{\infty}\int_{-\infty}^{\infty} \frac{1}{4\pi^2} f_1(a_1,a_2,b_1,b_2,\theta) \e^{-(a_1^2+a_2^2+b_1^2+b_2^2)/2}\dd a_1\dd a_2\dd b_1\dd b_2
	\end{align*}
	where \[f_1(a_1,a_2,b_1,b_2,\theta) = \abs{\xkh{a_1^2 + a_2^2}\cos\theta + \xkh{a_1b_1 + a_2b_2}\sin\theta}.\]
	We claim that
	 \begin{equation} \label{cla:htheta}
	h(\theta) = \frac{3+\cos2\theta}4.
	\end{equation}
	The conclusion then holds immediately by noting that 
	 \[h(\theta) \ge h(\frac \pi 2) = \frac 12.\]
It remains to prove the claim \eqref{cla:htheta}. Taking the polar coordinates transformations that $a_1 = \rho\cos\phi_1\cos\phi_2, a_2 = \rho\cos\phi_1\sin\phi_2, b_1 = \rho\sin\phi_1\cos\phi_3$ and $b_2 = \rho\sin\phi_1\sin\phi_3$, where $\rho\in [0,+\infty)$, $\phi_1\in[0,\frac \pi2)$, $\phi_2,\phi_3\in[0,2\pi)$ and $\dd a_1\dd a_2\dd b_1\dd b_2 = \rho^3\cos\phi_1\sin\phi_1 \cdot \dd \rho\dd \phi_1\dd \phi_2\dd \phi_3$,  then $h(\theta)$ can be written as 
	\begin{align*}
		h(\theta) &= \frac{1}{8\pi^2} \int_{0}^{\infty}\int_{0}^{\frac \pi2}\int_{0}^{2\pi}\int_{0}^{2\pi} \rho^5\e^{-\frac{\rho^2}{2}} \cos \phi_1\sin\phi_1 f_2(\phi_1,\phi_2,\phi_3,\theta) \cdot \dd \rho\dd \phi_1\dd \phi_2\dd \phi_3 \\
		&= \frac{1}{\pi^2} \int_{0}^{\frac \pi2}\int_{0}^{2\pi}\int_{0}^{2\pi} \cos \phi_1\sin\phi_1 f_2(\phi_1,\phi_2,\phi_3,\theta) \cdot \dd \phi_1\dd \phi_2\dd \phi_3 \\
		&=  \frac{1}{\pi} \int_{0}^{\frac \pi2}\int_{0}^{2\pi} \cos \phi_1\sin\phi_1 \abs{\cos\theta + \cos 2\phi_1 \cos\theta + \sin 2\phi_1\cos \phi_2 \sin\theta} \cdot \dd \phi_1\dd \phi_2,
	\end{align*}
	where \begin{align*}
		f_2(\phi_1,\phi_2,\phi_3,\theta) = \frac12 \abs{\cos\theta + \cos 2\phi_1 \cos\theta + \sin 2\phi_1\cos(\phi_2-\phi_3)\sin\theta},
	\end{align*}
	and the last equality comes from the fact that $f_2(\phi_1,\phi_2,\phi_3,\theta)$ is a periodic function with a period of $2\pi$ for $\phi_3$. Setting $x = \cos2\phi_1, y = \sin 2\phi_1\cos\phi_2, z = \sin 2\phi_1\sin \phi_2$ and denoting 
	$\vx = \xkh{x,y,z}^\T \in\SSS^2$, $\ve = \xkh{\cos\theta,\sin\theta,0}^\T \in\SSS^2$ with $\SSS^2 = \dkh{(x, y, z)^\T \in \R^3:x^2 + y^2 + z^2 = 1}$, one has
	\begin{align} 
		h(\theta) &= \frac{1}{4\pi} \iint_{\SSS^2}\abs{\cos\theta + x\cos\theta + y\sin\theta}\dd S= \frac{1}{4\pi} \iint_{\SSS^2}\abs{\cos\theta + \nj{\vx,\ve}}\dd S \notag \\
		&= \frac{1}{4\pi} \iint_{\SSS^2}\abs{\cos\theta + \nj{\mpm\vx,\ve}}\dd S= \frac{1}{4\pi} \iint_{\SSS^2}\abs{\cos\theta + x}\dd S, \label{eq:hds}
	\end{align}
	where $\mpm$ is a  rotation matrix defined as
	 \begin{align*}
		\mpm = \begin{bmatrix}
			\cos\theta & \sin\theta &0\\
			-\sin\theta & \cos\theta &0\\
			0          &0           &1
		\end{bmatrix}.
	\end{align*}
	Finally, to compute the surface integral in \eqref{eq:hds}, we take  another  polar transformation that by setting  $x = \cos \phi_4, y = \sin \phi_4\cos\phi_5, z = \sin \phi_4\sin \phi_5$ for \(\phi_4\in [0, \pi), \phi_5\in [0, 2\pi)\), which reveals that \begin{align*}
		h(\theta) = \frac{1}{4\pi} \int_{0}^{2\pi}\dd \phi_5 \int_{0}^{\pi} \sin\phi_4 \abs{\cos\phi_4 + \cos\theta} \cdot \dd \phi_4 = \frac{3+\cos 2\theta}{4}.
	\end{align*}
This completes the proof. 
\end{proof}

%
%
%
%
%
%
%
\begin{lemma}\label{le:u:p=2} \cite[Lemma A.5]{cai2016optimal}
	Let \(\ma = (\va_1,\va_2,\cdots,\va_m)^\T \in \R^{m\times d}\) be a standard Gaussian random matrix with  $\va_j \sim \mathcal{N}(0, \mi_d)$. Then, with probability exceeding \(1 - 2\exp(-t^2/2)\), it holds  
	\[\norms{\ma}_{2\rightarrow 4} \le (3m)^{1/4} + \sqrt{d} + t.\]
\end{lemma}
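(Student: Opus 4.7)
The plan is to combine two standard ingredients: Gaussian concentration for Lipschitz functionals, and a Chevet/Gordon comparison bound on the expectation of $\norms{\ma}_{2\to 4}$.

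First I would verify that $F(\ma):=\norms{\ma}_{2\to 4}$ is $1$-Lipschitz with respect to the Frobenius norm. Indeed, by the reverse triangle inequality for operator-type norms together with the elementary fact $\norms{\vv}_4\le \norms{\vv}_2$ for $\vv\in\R^m$, one has
\[
\abs{F(\ma)-F(\mb)}\ \le\ \norms{\ma-\mb}_{2\to 4}\ \le\ \sup_{\norm{\vx}=1}\norms{(\ma-\mb)\vx}_2\ \le\ \normf{\ma-\mb}.
\]
The standard Gaussian concentration inequality for Lipschitz functions then yields
\[
\PP\!\left(F(\ma)\ \ge\ \E F(\ma)+t\right)\ \le\ \exp(-t^2/2),
\]
and the two-sided version gives the factor $2$ stated in the lemma.

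Second, I would bound $\E F(\ma)$ by rewriting the $2\to 4$ norm in dual form and applying Chevet's inequality. Since $(\ell_4)^*=\ell_{4/3}$,
\[
\norms{\ma}_{2\to 4}\ =\ \sup_{\vx\in S,\ \vy\in T}\ \vy^\T\ma\vx,\qquad S=\SSS_\R^{d-1},\quad T=\{\vy\in\R^m:\norms{\vy}_{4/3}=1\}.
\]
Chevet's inequality for the bilinear Gaussian process $(\vx,\vy)\mapsto \vy^\T\ma\vx$ gives
\[
\E\sup_{(\vx,\vy)\in S\times T}\vy^\T\ma\vx\ \le\ w(S)\cdot r(T)+r(S)\cdot w(T),
\]
where $w(\cdot)$ denotes Gaussian width and $r(\cdot)$ the Euclidean radius. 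The four quantities are then elementary to evaluate: $r(S)=1$, $w(S)=\E\norms{\vh}_2\le\sqrt{d}$ for $\vh\sim\cN(\mzero,\mi_d)$; $w(T)=\E\norms{\vg}_4$ for $\vg\sim\cN(\mzero,\mi_m)$ by norm duality; and $r(T)\le 1$ because on $\R^m$ the norm $\norms{\cdot}_{4/3}$ dominates $\norms{\cdot}_2$ since $4/3<2$. Jensen together with $\E g^4=3$ for $g\sim\cN(0,1)$ yields $\E\norms{\vg}_4\le(\E\norms{\vg}_4^4)^{1/4}=(3m)^{1/4}$. Combining,
\[
\E F(\ma)\ \le\ \sqrt{d}+(3m)^{1/4}.
\]

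Finally, plugging this expectation bound into the concentration inequality from step one gives
\[
\norms{\ma}_{2\to 4}\ \le\ (3m)^{1/4}+\sqrt{d}+t
\]
with probability at least $1-2\exp(-t^2/2)$, as required. The main technical step is the correct application of Chevet's inequality, in particular the identification of the dual constraint set $\{\norms{\vy}_{4/3}=1\}$ and the use of the ordering $\norms{\vy}_2\le\norms{\vy}_{4/3}$ on $\R^m$ to ensure that the radius $r(T)$ does not contribute a factor larger than $1$; all other ingredients (Gaussian concentration, computation of $\E g^4$, expectation of $\norms{\vh}_2$) are routine.
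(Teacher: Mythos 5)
Your proof is correct: the two-step argument (sharp Chevet--Gordon comparison to bound \(\E\norms{\ma}_{2\to 4}\) by \(w(S)\,r(T)+w(T)\,r(S)\le \sqrt{d}+(3m)^{1/4}\), followed by Gaussian concentration for the \(1\)-Lipschitz functional \(\ma\mapsto\norms{\ma}_{2\to 4}\)) is exactly the standard route, and is the argument behind the cited Lemma A.5 of Cai--Li--Ma, which the paper invokes without reproducing. All the auxiliary computations (duality with \(\ell_{4/3}\), \(r(T)\le 1\) from \(\norms{\vy}_2\le\norms{\vy}_{4/3}\), Jensen with \(\E g^4=3\)) check out, so nothing further is needed.
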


\begin{lemma}\label{le:l:p=2} 
	Suppose that $\va_1,\ldots,\va_m \in \R^d $ are i.i.d. Gaussian random vectors with $\va_j \sim \mathcal{N}(0, \mi_d)$. For any constant $0<\delta<1$, with probability at least $1 - c_1\exp\xkh{-c_0  \delta^2  m} - O(m^{-d})$, it holds 
	\begin{align}
		\frac1m \sum_{j=1}^m \aabs{\va_j^\T \vu}^2 \aabs{ \va_j^\T \vv}^2  \ge \norm{\vu}^2 \norm{\vv}^2+2\big| \vu^\T \vv\big|^2  -\delta
	\end{align}
	simultaneously for all $\vu, \vv \in \H^d$, provided that $m\ge C_0  \delta ^{-2} d \log d$. Here,  $c_0, c_1, C_0>0$ are some universal constants. 
\end{lemma}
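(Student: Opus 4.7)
The plan is to use a truncation-plus-Bernstein argument combined with an $\epsilon$-net over pairs of unit vectors. By joint $2$-homogeneity of both sides in $\vu$ and $\vv$, one may restrict attention to $\vu, \vv \in \SSS_{\R}^{d-1}$ and reduce the claim to
\[
\inf_{\vu,\vv \in \SSS_{\R}^{d-1}} \frac1m \sum_{j=1}^m (\va_j^\T \vu)^2 (\va_j^\T \vv)^2 \ge 1 + 2\, \nj{\vu,\vv}^2 - \delta.
\]
Isserlis' theorem immediately yields the pointwise expectation $\E[(\va^\T \vu)^2 (\va^\T \vv)^2] = 1 + 2\nj{\vu,\vv}^2$, so the task becomes a uniform one-sided concentration of the empirical average around its mean.

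Because each summand $(\va_j^\T \vu)^2 (\va_j^\T \vv)^2$ has only $\psi_{1/2}$ tails, I would first truncate. Set $R = C_1 \sqrt{d \log m}$ and let $\mathcal{G} := \{\max_{j \in [m]} \norm{\va_j} \le R\}$; standard $\chi^2$ tail bounds and a union bound give $\PP(\mathcal{G}^c) = O(m^{-d})$. On $\mathcal{G}$, define the truncated summand $Z_j(\vu,\vv) := (\va_j^\T \vu)^2 (\va_j^\T \vv)^2 \,\1\{\norm{\va_j} \le R\}$, which satisfies $0 \le Z_j \le R^4$ uniformly in $(\vu,\vv)$, has $\mathrm{Var}(Z_j) = O(1)$ by Cauchy--Schwarz against the finite Gaussian eighth moment, and whose expectation agrees with the untruncated one up to an error of order $m^{-(d+1)/2}$ (again by Cauchy--Schwarz on the tail).

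With these bounds in hand, Bernstein's inequality applied to $\frac1m \sum_j Z_j(\vu,\vv)$ for a fixed pair gives
\[
\PP\!\left(\frac1m \sum_{j=1}^m Z_j(\vu,\vv) \le 1 + 2\,\nj{\vu,\vv}^2 - \tfrac{\delta}{2}\right) \le \exp\!\bigl(-c_0 m \delta^2\bigr)
\]
in the regime $\delta \lesssim 1$, where the variance term dominates the boundedness term $R^4 = O(d^2 \log^2 m)$. To upgrade this to uniformity, I would take an $\epsilon$-net $\mathcal{N}_\epsilon$ of $\SSS_{\R}^{d-1} \times \SSS_{\R}^{d-1}$ of cardinality $(3/\epsilon)^{2d}$ and verify, on $\mathcal{G}$, the Lipschitz estimate $|\tilde S(\vu_1,\vv_1) - \tilde S(\vu_2,\vv_2)| \le C R^4 \bigl(\norm{\vu_1 - \vu_2} + \norm{\vv_1 - \vv_2}\bigr)$ for $\tilde S(\vu,\vv) := \frac1m \sum_j Z_j(\vu,\vv)$. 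Choosing $\epsilon \asymp \delta/R^4$ absorbs the discretisation error into $\delta/4$, and a union bound over $|\mathcal{N}_\epsilon|^2 = \exp\!\bigl(O(d \log(d/\delta))\bigr)$ net pairs closes the argument whenever $m \delta^2 \gtrsim d \log(d/\delta)$, which is implied by the hypothesis $m \ge C_0 d \log d / \delta^2$ after adjusting constants.

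The principal technical obstacle is calibrating the three competing error sources---truncation loss, Bernstein deviation, and net discretisation---so that they collapse into the stated probability bound $1 - c_1 \exp(-c_0 \delta^2 m) - O(m^{-d})$ under exactly the claimed sample complexity. The $\psi_{1/2}$ (rather than $\psi_1$) nature of the products forces a logarithmic truncation radius; any attempt to bypass truncation by invoking a direct sub-Weibull concentration inequality would pay an extra $\log m$ factor in the required $m$, which would weaken the result beyond what is stated.
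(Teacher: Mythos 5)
Your overall architecture (restrict to unit vectors, compute the mean via Isserlis, truncate on the event $\max_j\norm{\va_j}\le R$ to produce the $O(m^{-d})$ term, then Bernstein plus an $\epsilon$-net) is exactly the strategy of the proof the paper defers to, namely \cite[Lemma 21]{sunju2018}; the paper itself omits the argument. However, there is a genuine flaw in your Bernstein step. Having truncated only through the event $\norm{\va_j}\le R$, your almost-sure bound on $Z_j$ is $M:=R^4=O(d^2\log^2 m)$, and Bernstein's inequality for variables bounded by $M$ gives a deviation exponent of order $-mt^2/(\sigma^2+Mt)$. The variance term dominates only when $t\lesssim \sigma^2/M=O\bigl(1/(d^2\log^2 m)\bigr)$, not ``when $\delta\lesssim 1$'' as you assert. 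For $\delta$ of constant order the boundedness term dominates, the exponent degrades to $-cm\delta/(d^2\log^2 m)$, and this is neither of the form $\exp(-c_0\delta^2 m)$ nor strong enough to survive the union bound over $\exp\bigl(O(d\log(d/\delta))\bigr)$ net points under the stated sample size $m\gtrsim \delta^{-2}d\log d$. So the fixed-pair deviation bound, as justified, fails.

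The step is repairable by standard means, because you only need a \emph{lower} tail for a sum of nonnegative variables: either (i) use the one-sided Bernstein/Chernoff bound $\PP\bigl(\sum_j X_j\le \sum_j\E X_j-t\bigr)\le \exp\bigl(-t^2/(2\sum_j\E X_j^2)\bigr)$, valid for $X_j\ge 0$, which involves only $\E Z_j^2\le \E[(\va^\T\vu)^4(\va^\T\vv)^4]\le 105$ and no almost-sure bound; or (ii) truncate the \emph{summand} at a level $T=C\log^2(1/\delta)$ rather than at $R^4$, noting that $\min(Z_j,T)\le Z_j$ and that the sub-Weibull tail of $Z_j$ gives $\E\min(Z_j,T)\ge \E Z_j-\delta/4$, so Bernstein with bound $T$ yields exponent $-cm\delta^2/(1+T\delta)\ge -c'm\delta^2$. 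The event $\max_j\norm{\va_j}\le R$ should then be reserved solely for the Lipschitz estimate in the net step, where you do use it correctly. A further minor point: your net has cardinality $\exp\bigl(O(d\log(d\log m/\delta))\bigr)$, and absorbing this into $\exp(-c_0\delta^2 m)$ under $m\ge C_0\delta^{-2}d\log d$ tacitly requires $\log(1/\delta)\lesssim\log d$; this slack is also present in the paper's statement and is harmless in the intended regime, but it is not literally ``implied after adjusting constants'' for arbitrarily small $\delta$.
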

\begin{proof}
The proof follows a similar step-by-step approach as in \cite[Lemma 21]{sunju2018} and is therefore omitted here.
\end{proof}


\section{Appendix B: Proof of Lemma \ref{le:ortho:l - em:p=1}}\label{pf:ortho:l - em:p=1}
Setting $\vx = (\cos \theta \ \sin \theta)^\T$ and $\vy = (\cos \psi \ \sin \psi)^\T$ with $0 \le \theta, \psi \le 2\pi$, a little algebra reveals
\begin{align*}
	\mem\vx =& \begin{pmatrix}
		\cos \xkh{0-\theta} \ 
		\cos \xkh{\frac{\pi}{m}-\theta} \ 
		\cdots \ 
		\cos \xkh{\frac{(m-1)\pi}{m}-\theta} 
	\end{pmatrix}^\T, \\
	\mem\vy =& \begin{pmatrix}
		\cos \xkh{0-\psi} \ 
		\cos \xkh{\frac{\pi}{m}-\psi} \ 
		\cdots \ 
		\cos \xkh{\frac{(m-1)\pi}{m}-\psi} 
	\end{pmatrix}^\T, 
\end{align*}
for $\mem = (\va_1,\va_2,\cdots,\va_m)^\T$ defined in \eqref{def:mem}. It gives  
\[\sum_{j=1}^m \ucabs{\vy^\T\va_j \va_j^\T\vx} = \sum_{j=1}^{m} \ucabs{\cos(\frac{j\pi}{m} - \theta) \cos(\frac{j\pi}{m} - \psi)} =: G(\theta,\psi).\]
Note that \(G(\theta,\psi) = G(\psi,\theta)\), \(G(\theta, \psi) = G(\theta-\pi, \psi) = G(\theta+\pi, \psi)\), and \(G(\theta, \psi) = G(\theta+\frac{\pi}{m}, \psi+\frac{\pi}{m})\). 
Without loss of generality, one assumes \(0 \le \theta \le \psi \le \pi\). 
We next claim that \begin{equation}\label{claim:em:1} 
	\minm{0\le \theta, \psi \le \pi} G(\theta, \psi)  =  \minm{\substack{\theta\in[0, \pim] \\ \phi \in [\theta, \theta+\pim]}} G(\theta, \phi+\pitwo). 
\end{equation}
To calculate \(G(\theta, \phi+\pitwo)\), we divide the interval \([\pitwo, \pi]\) into segments of length \(\pim\) and separately discuss the scenario \(\psi\) within each subinterval.  
For \(\theta\in[0, \pim]\) and \(m\ge 3\), take \(\phi \in [0, \pim]\) and denote \begin{equation*}
	G_k(\theta, \phi) := G(\theta, \phi+\pitwo+k\pim) = \sum_{j=1}^{m} \ucabs{\cos(\frac{j\pi}{m} - \theta) \sin(\frac{j\pi}{m} - \phi - k\pim)}
\end{equation*}
where \(k\) is an integer with \(0\le k \le \hat{k}\) and \begin{equation}\label{eq:khat}
	\hat{k} = \left\{\begin{aligned}
		\frac{m-2}{2},    &\quad \mbox{if}\ m\ \mbox{is even},\\ 
		\frac{m-1}{2},    &\quad \mbox{if}\ m\ \mbox{is odd and } \phi \in [0, \pimtwo],\\ 
		\frac{m-3}{2},    &\quad \mbox{if}\ m\ \mbox{is odd and } \phi \in [\pimtwo, \pim].\\ 
	\end{aligned}\right.
\end{equation}
Then one claims that \begin{equation}\label{eq:gl}
	G_k(\theta, \phi) = \left\{\begin{aligned}
		&\frac{\cos\kpim \cos(\pim - \theta - \phi)}{\sin\pim} + k\sin(\phi-\theta+\kpim) \hspace{6em}  \mbox{ if}\ m\ \mbox{is even},\\ 
		&\frac{\cos(\pimtwo + \kpim) \cos(\frac{\pi}{2m} - \theta - \phi)}{\sin\pim} + \frac{2k+1}{2}\sin(\phi-\theta+\kpim) \mbox{ if}\ m\ \mbox{is odd and } \theta \in [0, \pimtwo],\\ 
		&\frac{\cos(\pimtwo - \kpim) \cos(\frac{3\pi}{2m} - \theta - \phi)}{\sin\pim} + \frac{2k-1}{2}\sin(\phi-\theta+\kpim) \mbox{ if}\ m\ \mbox{is odd and } \theta \in [\pimtwo, \pim].\\ 
	\end{aligned}\right.
\end{equation}

Below, we provide a proof for the conclusion (b) of Lemma \ref{le:ortho:l - em:p=1}. 
To establish the minimum of  \(G(\theta, \phi+\pitwo)\) for \(\theta\in[0, \pim]\) and $\phi \in [\theta, \theta+\pim]$, we divide \(\phi\in[\theta, \theta+\pim]\) into the two cases \([\theta, \pim]\) and \([\pim, \theta+\pim]\) for discussion. 
\begin{itemize}
	\item When \(\phi\in[\theta, \pim]\), we consider \(G_0(\theta, \phi)\). 
	If $m$ is even, \begin{equation*}
		G_0(\theta, \phi) = \frac{\cos(\pim - \theta - \phi)}{\sin\pim} \ge \frac{\cos\pim}{\sin\pim}
	\end{equation*}
	where the inequality holds if and only if \(\theta + \phi = 0\) or \(\frac{2\pi}{m}\), i.e., \(\theta = \phi = 0\) or \(\theta = \phi = \pim\). If $m$ is odd, one has \begin{equation*}
		G_0(\theta, \phi) = \left\{\begin{aligned}
			&\frac{\cos\theta \cos(\frac{\pi}{2m} - \phi) + \sin(\pimtwo - \theta) \sin\phi}{2\sin\pimtwo},  \ \hspace{5.5em} \mbox{ if}\  \theta \in [0, \pimtwo],\\
			&\frac{\cos(\theta-\pim) \cos(\phi - \frac{\pi}{2m}) + \sin(\theta - \pimtwo) \sin(\pim - \phi)}{2\sin\pimtwo},  \ \mbox{ if}\  \theta \in [\pimtwo, \pim].\\
		\end{aligned}\right.
	\end{equation*}
	For any $\phi \in [0, \pim]$, one sees $\cos\theta$ and $\sin(\pimtwo - \theta)$ are monotonically decreasing on $[0, \pimtwo]$, while $\cos(\theta-\pim)$ and $\sin(\theta - \pimtwo)$ are monotonically increasing on $[\pimtwo, \pim]$. As a result, \begin{equation*}
		G_0(\theta, \phi) \ge G_0(\pimtwo, \phi) = \frac{\cos\pimtwo \cos(\frac{\pi}{2m} - \phi)}{2\sin\pimtwo} \ge \frac{\cos^2\pimtwo}{2\sin\pimtwo} = G_0(\pimtwo, \pim)
	\end{equation*}
	for an odd $m\ge 3$.

	\item When \(\phi\in[\pim, \theta+\pim]\), we consider \(G_1(\theta, \phi-\pim)\). It is easy to check that \begin{align*}
		G_1(\theta, \phi-\pim) = \frac{\cos\theta \cos(\phi - \pim) + \sin(\frac{2\pi}{m} - \theta) \sin(\phi - \pim)}{\sin\pim} \\
		\ge G_1(\pim, \phi-\pim) = \frac{\cos(\frac{2\pi}{m} - \phi)}{\sin\pim} \ge G_1(\pim, \pim) = \frac{\cos\pim}{\sin\pim} 
	\end{align*}
	with an even \(m\). Similarly, if \(m\) is odd and \(\theta \in [0, \pimtwo]\), one has \begin{align*}
		G_1(\theta, \phi-\pim) = \frac{2\sin(\frac{2\pi}{m}-\theta) \sin(\phi-\pim) + 2\cos(\theta+\pimtwo) \cos(\phi-\pimtwo) + \sin\pim \sin(\phi - \theta)}{2\sin\pim}  \\
		\ge G_1(\pimtwo, \phi-\pim) = \frac{2\cos\pimtwo \cos(\frac{2\pi}{m} - \phi) + \sin\pim \sin(\phi - \pimtwo)}{2\sin\pim} 
		\ge G_1(\pimtwo, 0) = \frac{\cos^2\pimtwo}{2\sin\pimtwo}.
	\end{align*}
	In addition, if \(m\) is odd and \(\theta \in [\pimtwo, \pim]\), we have \begin{align*}
		G_1(\theta, \phi-\pim) = \frac{\cos(\pim-\theta) \cos(\frac{3\pi}{2m} - \phi) + \sin(\frac{3\pi}{2m} - \theta) \sin(\phi-\pim)}{2\sin\pimtwo}  \\
		\ge G_1(\theta, 0) = \frac{\cos(\pim-\theta) \cos\pimtwo}{2\sin\pimtwo} 
		\ge G_1(\pimtwo, 0) = \frac{\cos^2\pimtwo}{2\sin\pimtwo}.
	\end{align*}
\end{itemize}
In short, one has \begin{equation}\label{claim:em:2} 
	\minm{\substack{\theta\in[0, \pim] \\ \phi \in [\theta, \theta+\pim]}} G(\theta, \phi+\pitwo) = \left\{\begin{aligned}
		&\ \frac{\cos\pim}{\sin\pim},     \hspace{1.6em} \mbox{ if}\ m\ \mbox{is even},\\
		&\frac{\cos^2\pimtwo}{2\sin\pimtwo},  \quad \mbox{ if}\ m\ \mbox{is odd}.\\
	\end{aligned}\right.
\end{equation}
Substituting \eqref{claim:em:2} into \eqref{claim:em:1} gives the conclusion \eqref{eq:ortho:2}. 

Next, we consider the conclusion \eqref{eq:ortho:1} of Lemma \ref{le:ortho:l - em:p=1}. 
Without loss of generality, the condition \(\vu \perp \vv\) implies \(\theta = \phi\), i.e., \(\sumjm \ucabs{\vu^\T\va_j \va_j^\T\vv} = G(\theta, \theta+\pitwo)\). 
Therefore, from \eqref{claim:em:1}, it is easy to see that \begin{equation}\label{claim:em:3}
	\minm{0\le \theta \le \pi} G(\theta, \theta+\pitwo)  =  \minm{\theta\in[0, \pim]} G(\theta, \theta+\pitwo) = \minm{\theta\in[0, \pim]} G_0(\theta, \theta). 
\end{equation}
For an even \(m\), take \(\phi = \theta\) into \eqref{eq:gl} to obtain \begin{equation}\label{claim:em:4}
	G_0(\theta, \theta) = \frac{\cos(\pim - 2\theta)}{\sin\pim} \ge \frac{\cos\pim}{\sin\pim}
\end{equation}
with the proviso that \(\theta \in [0, \pim]\). For an odd \(m\), one has \begin{equation*}
	G_0(\theta, \theta) = \left\{\begin{aligned}
		&\frac{\cos(\frac{ \pi}{2m} - 2\theta)}{2\sin\pimtwo},   \quad \mbox{ if}\  \theta \in [0, \pimtwo],\\
		&\frac{\cos(\frac{3\pi}{2m} - 2\theta)}{2\sin\pimtwo},   \quad \mbox{ if}\  \theta \in [\pimtwo, \pim],\\
	\end{aligned}\right.
\end{equation*}
which leads to \begin{equation}\label{claim:em:5}
	G_0(\theta, \theta) \ge \frac{\cos\pimtwo}{2\sin\pimtwo}.
\end{equation}
Clearly, \eqref{claim:em:3}-\eqref{claim:em:5} constitute the conclusion \eqref{eq:ortho:1} of Lemma \ref{le:ortho:l - em:p=1}. \\
{\bf Proof of Claim \eqref{eq:gl}.}  
Recall that \begin{equation*}
	G_k(\theta, \phi) = \sum_{j=1}^{m} \ucabs{\cos(\frac{j\pi}{m} - \theta) \sin(\frac{j\pi}{m} - \phi - k\pim)}. 
\end{equation*}
Here and below, for convenience, we denote \[\tau_j^k := \cos(\frac{j\pi}{m} - \theta) \sin(\frac{j\pi}{m} - \phi - k\pim) = \frac 12 \zkh{\sin(\frac{2j\pi}{m}-\theta-\phi-\kpim)-\sin(\phi + \kpim-\theta)}.\] 
To check \(G_k(\theta, \phi)\), we divide our proof into the following two cases:

\begin{itemize}
	\item $m$ is even with $m\ge 4$. It is easy to examine \begin{align*}
		\cos(\frac{j\pi}{m}-\theta) &\ge 0 \quad \mbox{ for } 1 \le j\le \frac{m}{2}, \\
		\cos(\frac{j\pi}{m}-\theta) &\le 0 \quad \mbox{ for } \frac{m}{2} + 1 \le j \le m, \\
		\sin(\frac{j\pi}{m} - \phi - \kpim) &\ge 0 \quad \mbox{ for } k + 1 \le j\le m, \\
		\sin(\frac{j\pi}{m} - \phi - \kpim) &\le 0 \quad \mbox{ for } 1 \le j\le k .
	\end{align*}
	As a result, \begin{align*}
		G_k(\theta, \phi) =& -\sum_{j=1}^{k} \tjk + \sum_{j=k+1}^{\frac m2} \tjk - \sum_{j=\frac m2 + 1}^{m} \tjk \\
		=& (-\sum_{j=1}^{m} + 2\sum_{j=k+1}^{\frac m2}) \frac 12 \zkh{\sin(\frac{2j\pi}{m}-\theta-\phi)\cos(\kpim)-\cos(\frac{2j\pi}{m}-\theta-\phi)\sin(\kpim)} \\
		&+ k\sin(\phi-\theta+\kpim) \\
		\stackrel{\mbox{(i)}}{=}& \frac{\cos\kpim \cos(\pim - \theta - \phi)}{\sin\pim} + k\sin(\phi-\theta+\kpim). 
	\end{align*}
	where (i) comes from \eqref{eq:Lind} by the Lagrange's trigonometric identities.  
	
	\item $m$ is odd with $m\ge 3$. One can see that \begin{align*}
		\cos(\frac{j\pi}{m}-\theta) &\ge 0 \quad \mbox{ for } 1 \le j\le \frac{m-1}{2} \mbox{ if } \theta\in[0, \pimtwo], \\
		\cos(\frac{j\pi}{m}-\theta) &\le 0 \quad \mbox{ for } \frac{m+1}{2} \le j \le m \mbox{ if } \theta\in[0, \pimtwo], \\
		\cos(\frac{j\pi}{m}-\theta) &\ge 0 \quad \mbox{ for } 1 \le j\le \frac{m+1}{2} \mbox{ if } \theta\in[\pimtwo, \pim], \\
		\cos(\frac{j\pi}{m}-\theta) &\le 0 \quad \mbox{ for } \frac{m+3}{2} \le j \le m \mbox{ if } \theta\in[\pimtwo, \pim], \\
		\sin(\frac{j\pi}{m} - \phi - \kpim) &\ge 0 \quad \mbox{ for } k + 1 \le j\le m, \\
		\sin(\frac{j\pi}{m} - \phi - \kpim) &\le 0 \quad \mbox{ for } 1 \le j\le k .
	\end{align*}
	Thus, if \(\theta\in[0, \pimtwo]\), we have \begin{align*}
		G_k(\theta, \phi) =& -\sum_{j=1}^{k} \tjk + \sum_{j=k+1}^{\frac{m-1}{2}} \tjk - \sum_{j=\frac{m+1}{2}}^{m} \tjk \\
		=& \frac{\cos(\pimtwo + \kpim) \cos(\pimtwo - \theta - \phi)}{\sin\pim} + \frac{2k+1}{2}\sin(\phi-\theta+\kpim). 
	\end{align*}
	In addition, if \(\theta\in[\pimtwo, \pim]\), it holds \begin{align*}
		G_k(\theta, \phi) =& -\sum_{j=1}^{k} \tjk + \sum_{j=k+1}^{\frac{m+1}{2}} \tjk - \sum_{j=\frac{m+3}{2}}^{m} \tjk \\
		=& \frac{\cos(\pimtwo - \kpim) \cos(\frac{3\pi}{2m} - \theta - \phi)}{\sin\pim} + \frac{2k-1}{2}\sin(\phi-\theta+\kpim). 
	\end{align*}
\end{itemize}

In summary, from these two cases, we obtain \begin{equation}
	G_k(\theta, \phi) = \left\{\begin{aligned}
		&\frac{\cos\kpim \cos(\pim - \theta - \phi)}{\sin\pim} + k\sin(\phi-\theta+\kpim) \hspace{6em}  \mbox{ if}\ m\ \mbox{is even},\\ 
		&\frac{\cos(\pimtwo + \kpim) \cos(\frac{\pi}{2m} - \theta - \phi)}{\sin\pim} + \frac{2k+1}{2}\sin(\phi-\theta+\kpim) \mbox{ if}\ m\ \mbox{is odd and } \theta \in [0, \pimtwo],\\ 
		&\frac{\cos(\pimtwo - \kpim) \cos(\frac{3\pi}{2m} - \theta - \phi)}{\sin\pim} + \frac{2k-1}{2}\sin(\phi-\theta+\kpim) \mbox{ if}\ m\ \mbox{is odd and } \theta \in [\pimtwo, \pim].\\ 
	\end{aligned}\right.
\end{equation}
This completes the proof of \eqref{eq:gl}. \\
{\bf Proof of Claim \eqref{claim:em:1}.}  If \(\psi-\theta \ge \pitwo\), one has \(G(\theta, \psi) = G(\theta-k_1\pim, \psi-k_1\pim)\)
with \(k_1 = \lfloor \frac{\theta}{\pim} \rfloor\); if \(0 \le \psi-\theta < \pitwo\), we obtain that \(G(\theta, \psi) = G(\psi-\pi, \theta) = G(\psi-\pi+k_2\pim, \theta+k_2\pim)\)
with \(k_2 = \lceil \frac{\pi-\psi}{\pim} \rceil\). This implies 
\[\minm{0\le \theta, \psi \le \pi} G(\theta, \psi) = \minm{\substack{\theta\in[0, \pim] \\ \psi \in [\frac{\pi}{2} + \theta, \pi]}} G(\theta, \psi) = \minm{\substack{\theta\in[0, \pim] \\ \phi \in [\theta, \pitwo]}} G(\theta, \phi+\pitwo) . \]
Thus, we only need to verify   \begin{equation}\label{eq:gl+1}
	\minm{\substack{\theta\in[0, \pim] \\ \phi \in [\theta, \pitwo]}} G(\theta, \phi+\pitwo) = \minm{\substack{\theta\in[0, \pim] \\ \phi \in [\theta, \theta+\pim]}} G(\theta, \phi+\pitwo). 
\end{equation}
In fact, it suffices to prove that \begin{align}
	\label{eq:gl+1:1} G_{k+1}(\theta, \phi) \ge G_{k}(\theta, \phi) \quad \mbox{for } 0 \le \phi \le \theta \le \pim \mbox{ and } k \ge 1, \\
	\label{eq:gl+1:0} G_{k+1}(\theta, \phi) \ge G_{k}(\theta, \phi) \quad \mbox{for } 0 \le \theta \le \phi \le \pim \mbox{ and } k \ge 0. 
\end{align}
To determine \eqref{eq:gl+1:1}, we consider three cases:

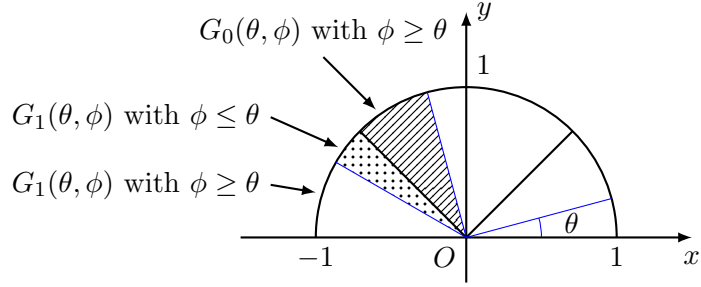
\begin{figure}[t]
	\setlength\tabcolsep{1pt}
	\centering
	\begin{tikzpicture}[>=latex,scale=2]
		\draw[->, thick] (-1.5,0) -- (1.5,0) node[below, yshift=-0.05cm] {$x$};
		\draw[->, thick] (0,-0.3) -- (0,1.5) node[right] {$y$};
		\node at (0,0) [below left] {$O$};

		\draw[thick] (-1,0) arc (180:0:1);
		
		\foreach \x in {-1,1}
		\draw (\x,0) -- (\x,-0) node[below] {$\x$};
		\foreach \y in {1}
		\node[right, yshift=0.3cm] at (0,\y) {$\y$}; 
		
		
		\begin{scope}
			\path[pattern=north east lines] (0,0) -- (135:1) arc (135:105:1) -- cycle;
			
			\path[pattern=dots] (0,0) -- (150:1) arc (150:135:1) -- cycle;
		\end{scope}
		
		\draw[thick] (45:1)--(0,0)--(135:1);
		\draw[blue] (105:1)--(0,0)--(15:1);
		\draw[blue] (150:1)--(0,0);
		
		\draw[<-, thick] (-0.6,0.85) -- (-0.95,1.2) node[above] {$G_0(\theta, \phi)$ with $\phi \ge \theta$};
		\draw[<-, thick] (-0.85,0.6) -- (-1.3,0.8) node[left] {$G_1(\theta, \phi)$ with $\phi \le \theta$};
		\draw[<-, thick] (-1,0.3) -- (-1.3,0.35) node[left] {$G_1(\theta, \phi)$ with $\phi \ge \theta$};
		
		\draw[blue] (0.5,0) arc (0:15:0.5);
		\node at (0.7,0) [above, yshift=-0.08cm] {$\theta$};
	\end{tikzpicture}
	\caption{A schematic diagram when \( m = 4 \) and \( \theta = \frac{\pi}{12} \).}
	\label{S1}
\end{figure}

{\bf Case 1:} $m$ is even with $m\ge 4$. Recalling \eqref{eq:gl}, it holds \begin{align*}
	G_{k+1}(\theta, \phi) - G_{k}(\theta, \phi) =& 2k\cos(\phi-\theta+\kpim+\pimtwo)\sin\pimtwo + \sin(\phi - \theta + \frac{k+1}{m}\pi) \\
	&-\frac{2\sin(\kpim+\pimtwo)\sin\pimtwo\cos(\pim-\theta-\phi)}{\sin\pim} \\
	\stackrel{\mbox{(i)}}{\ge}& 2k\cos(\kpim+\frac{\pi}{2m})\sin\pimtwo + \sin\kpim - \frac{2\sin(\kpim+\pimtwo)\sin\pimtwo}{\sin\pim}\\
	=& 2k\cos(\kpim+\frac{\pi}{2m})\sin\pimtwo - \frac{2\cos\kpim\sin^2\pimtwo}{\sin\pim}\\
	=& \frac{2\sin^2\pimtwo}{\sin\pim}\zkh{2k \cos(\kpim+\pimtwo)\cos\pimtwo-\cos\kpim} \stackrel{\mbox{(ii)}}{\ge} 0.
\end{align*}
Here, (i) follows from the facts $\cos(\phi-\theta+\kpim+\pimtwo)\ge \cos(\kpim+\frac{\pi}{2m})$, $\sin(\phi - \theta + \frac{l+1}{m}\pi)\ge \sin\kpim$ and $\cos(\pim-\theta-\phi)\le 1$ with \(k \le \hat{k}-1 = \frac m2 -2\). And (ii) holds since 
\[2k \cos(\kpim+\pimtwo)\cos\pimtwo \ge 2\cos(\kpim+\pimtwo)\cos\pimtwo = \cos\kpim + \cos(\kpim+\pim) \ge \cos\kpim\]
with the proviso that \(1 \le k \le \frac m2 -2\).

{\bf Case 2:} $m$ is odd with \(\theta\in[0, \pimtwo]\) and $\phi\in[0, \theta]$. \eqref{eq:khat} gives \(m \ge 5\) and \(1 \le k \le \frac{m-3}{2}\). Then \eqref{eq:gl} asserts that \begin{align*}
	G_{k+1}(\theta, \phi) - G_{k}&(\theta, \phi) = (2k+1)\cos(\phi-\theta+\kpim+\pimtwo)\sin\pimtwo + \sin(\phi - \theta + \frac{k+1}{m}\pi) \\
	&-\frac{2\sin(\kpim+\pim)\sin\pimtwo\cos(\pimtwo-\theta-\phi)}{\sin\pim} \\
	\ge& (2k+1)\cos(\kpim+\pimtwo)\sin\pimtwo + \sin(\kpim+\pimtwo) - \frac{2\sin(\kpim+\pim)\sin\pimtwo}{\sin\pim}\\
	=& (2k+1)\cos(\kpim+\pimtwo)\sin\pimtwo - \frac{2\cos(\kpim+\pimtwo)\sin^2\pimtwo}{\sin\pim}\\
	=& \frac{2\sin^2\pimtwo\cos(\kpim+\pimtwo)}{\sin\pim}\zkh{ (2k+1) \cos\pimtwo - 1} \stackrel{\mbox{(i)}}{\ge} 0 
\end{align*}
where (i) holds because of \((2k+1) \cos\pimtwo \ge 3 \cos\frac{\pi}{10}  >  1.\)

{\bf Case 3:} $m$ is odd with \(\theta\in[\pimtwo, \pim]\) and $\phi\in[0, \theta]$. We first prove \(G_{2}(\theta, \phi) \ge G_{1}(\theta, \phi)\) with \(m \ge 5\) due to \(\hat{k} \ge 2\). One has 
\begin{align*}
	G_{2}(\theta, \phi&) - G_{1}(\theta, \phi) = 3\cos(\phi-\theta+\pimtwothree)\sin\pimtwo + \sin(\phi - \theta + \pim) - 2\sin\pimtwo\cos(\pimtwothree-\theta-\phi) \\
	=& -4\sin\pimtwo\sin(\pimtwothree-\theta)\sin\phi + \cos\pimtwo\sin(\pimtwothree-\theta)\cos\phi + \cos\pimtwo\cos(\pimtwothree-\theta) \\
	=& \cos\pimtwo\cos(\pimtwothree-\theta)\cos\phi \zkh{-4\tan\pimtwo\tan(\pimtwothree-\theta)\tan\phi + \tan(\pimtwothree-\theta) + \tan\phi}\\
	=& \frac{\cos\pimtwo\cos(\pimtwothree-\theta)\cos\phi}{4\tan\pimtwo} \zkh{1 - \xkh{1-4\tan\pimtwo\tan(\pimtwothree-\theta)}\xkh{1-4\tan\pimtwo\tan\phi} } \stackrel{\mbox{(i)}}{\ge} 0.
\end{align*}
Here, (i) holds since 
\[\tan\pimtwo\tan(\pimtwothree-\theta),\ \tan\pimtwo\tan\phi \le \tan\pimtwo\tan\pim \le \tan\frac{\pi}{10}\tan\frac{\pi}{5} = \sqrt{5}-2 < \frac 14\]
for \(m \ge 5\). Then consider \(G_{k+1}(\theta, \phi) - G_{k}(\theta, \phi)\) with \(k \ge 2\) and \(\hat{k} \ge 3\), i.e. \(m \ge 7\). It is easy to see that \begin{align*}
	G_{k+1}(\theta, \phi) - G_{k}(&\theta, \phi) \ge (2k+1)\cos(\kpim+\pimtwo)\sin\pimtwo + \sin(\kpim-\pim) - \frac{2\sin\kpim \sin\pimtwo}{\sin\pim}\\
	=& (2k+1)\cos(\kpim+\pimtwo)\sin\pimtwo + \frac{2\sin\pimtwo}{\sin\pim} \zkh{\sin(\kpim-\pim)\cos\pimtwo - \sin\kpim}\\
	\ge& (2k+1)\cos(\kpim+\pimtwo)\sin\pimtwo + \frac{2\sin\pimtwo}{\sin\pim} \zkh{\sin(\kpim-\pim)\cos\pim - \sin\kpim}\\
	=& \sin\pimtwo \Big[\underbrace{(2k+1) \cos(\kpim+\pimtwo)-2\cos(\kpim-\pim)}_{:=I_1}\Big] \ge 0. 
\end{align*}
Here,  we consider \(I_1\) for \(3 \le k \le \frac{m-3}{2}\) and \(k = 2\), respectively.
\begin{itemize}
	\item If \(3 \le k \le \frac{m-3}{2}\), we have \begin{align*}
		I_1 =& (2k+1) \cos(\kpim+\pimtwo) - 2\cos(\kpim+\pimtwo)\cos\pimtwothree - 2\sin(\kpim+\pimtwo)\sin\pimtwothree \\
		\stackrel{\mbox{(i)}}{\ge}& (2+4\cos\pimtwothree) \cos(\kpim+\pimtwo) - 2\cos(\kpim+\pimtwo)\cos\pimtwothree - 2\sin(\kpim+\pimtwo)\sin\pimtwothree \\
		=& 2 \cos(\kpim+\pimtwo) + 2 \cos(\kpim+\frac{2\pi}{m}) = 4 \cos(\kpim+\frac{5\pi}{4m}) \cos \pimfourthree \stackrel{\mbox{(ii)}}{\ge} 0
	\end{align*}
	where (i) uses \(2k+1 >  2 + 4\cos\pimtwothree\) for \(k \ge 3\), and (ii) holds since \(3 \le k \le \frac{m-3}{2}\). 
	\item If \(k = 2\), take \(x = \pimtwo\) and see that \[I_1 = 5\cos\frac{5\pi}{2m}-2\cos\pim = 5\cos(5x) - 2\cos(2x).\]
	We denote \(f(x) := 5\cos(5x) - 2\cos(2x)\) with \(x \in (0, \frac{\pi}{14}]\) and obtain  
	\[f'(x) = -25\sin(5x) + 4\sin(2x) < -25\sin(2x) + 4\sin(2x) < 0.\] 
	As a result, one has \(f(x)>0\) for \(x \in (0, \frac{\pi}{14}]\) since \(f(0) = 3\) and 
	\[f(\frac{\pi}{14}) = 5\cos\frac{5\pi}{14}-2\cos\frac{\pi}{7} = 3\cos\frac{5\pi}{14}-4\sin\frac{\pi}{4}\sin\frac{3\pi}{28} > 3\sin\frac{4\pi}{28}-3\sin\frac{3\pi}{28} > 0.\]
\end{itemize}

The above analysis establishes \eqref{eq:gl+1:1}. 
Inequality \eqref{eq:gl+1:0} is similar; for the sake of simplicity, we omit its proof here. 
This completes the proof. 


\bibliographystyle{plain}

\begin{thebibliography}{10}

	\bibitem{alaifari2017phase}
	Rima Alaifari and Philipp Grohs.
	\newblock Phase retrieval in the general setting of continuous frames for {B}anach spaces.
	\newblock {\em SIAM J. Math. Anal.}, 49(3):1895--1911, 2017.
	
	\bibitem{alharbi2024locality}
	Wedad Alharbi, Salah Alshabhi, Daniel Freeman, and Dorsa Ghoreishi.
	\newblock Locality and stability for phase retrieval.
	\newblock {\em Sampling Theory, Signal Process., Data Anal.}, 22(1):10, 2024.
	
	\bibitem{alharbi2023stable}
	Wedad Alharbi, Daniel Freeman, Dorsa Ghoreishi, Claire Lois, and Shanea Sebastian.
	\newblock Stable phase retrieval and perturbations of frames.
	\newblock {\em Proc. Amer. Math. Soc. Ser. B}, 10(31):353--368, 2023.
	
	\bibitem{balan2015stability}
	Radu Balan.
	\newblock Stability of frames which give phase retrieval.
	\newblock {\em Houston J. Math.}, 1610:1611, 2015.
	
	\bibitem{balan2006signal}
	Radu Balan, Pete Casazza, and Dan Edidin.
	\newblock On signal reconstruction without phase.
	\newblock {\em Appl. Comput. Harmon. Anal.}, 20(3):345--356, 2006.
	
	\bibitem{balan2022lipschitz}
	Radu Balan and Chris~B Dock.
	\newblock Lipschitz analysis of generalized phase retrievable matrix frames.
	\newblock {\em SIAM J. Matrix Anal. Appl.}, 43(3):1518--1571, 2022.
	
	\bibitem{balan2015invertibility}
	Radu Balan and Yang Wang.
	\newblock Invertibility and robustness of phaseless reconstruction.
	\newblock {\em Appl. Comput. Harmon. Anal.}, 38(3):469--488, 2015.
	
	\bibitem{balan2016lipschitz}
	Radu Balan and Dongmian Zou.
	\newblock On lipschitz analysis and lipschitz synthesis for the phase retrieval problem.
	\newblock {\em Linear Algebra Appl.}, 496:152--181, 2016.
	
	\bibitem{bandeira2014saving}
	Afonso~S Bandeira, Jameson Cahill, Dustin~G Mixon, and Aaron~A Nelson.
	\newblock Saving phase: {I}njectivity and stability for phase retrieval.
	\newblock {\em Appl. Comput. Harmon. Anal.}, 37(1):106--125, 2014.
	
	\bibitem{botelho2016phase}
	Sara Botelho-Andrade, Peter~G Casazza, Hanh Van~Nguyen, and Janet~C Tremain.
	\newblock Phase retrieval versus phaseless reconstruction.
	\newblock {\em J. Math. Anal. Appl.}, 436(1):131--137, 2016.
	
	\bibitem{cahill2016phase}
	Jameson Cahill, Peter Casazza, and Ingrid Daubechies.
	\newblock Phase retrieval in infinite-dimensional {H}ilbert spaces.
	\newblock {\em Trans. Amer. Math. Soc. Ser. B}, 3(3):63--76, 2016.
	
	\bibitem{caijianfeng2022solving}
	Jian-Feng Cai, Meng Huang, Dong Li, and Yang Wang.
	\newblock Solving phase retrieval with random initial guess is nearly as good as by spectral initialization.
	\newblock {\em Appl. Comput. Harmon. Anal.}, 58:60--84, 2022.
	
	\bibitem{cai2016optimal}
	T~Tony Cai, Xiaodong Li, and Zongming Ma.
	\newblock Optimal rates of convergence for noisy sparse phase retrieval via thresholded {W}irtinger flow.
	\newblock {\em Ann. Statist.}, 44(5):2221--2251, 2016.
	
	\bibitem{candes2014phaselift}
	Emmanuel~J Cand{\`e}s and Xiaodong Li.
	\newblock Solving quadratic equations via {PhaseLift} when there are about as many equations as unknowns.
	\newblock {\em Found. Comput. Math.}, 14:1017--1026, 2014.
	
	\bibitem{candes2015wf}
	Emmanuel~J Cand{\`e}s, Xiaodong Li, and Mahdi Soltanolkotabi.
	\newblock Phase retrieval via {W}irtinger flow: {T}heory and algorithms.
	\newblock {\em IEEE Trans. Inf. Theory}, 61(4):1985--2007, 2015.
	
	\bibitem{candes2013phaselift}
	Emmanuel~J Cand{\`e}s, Thomas Strohmer, and Vladislav Voroninski.
	\newblock Phaselift: {E}xact and stable signal recovery from magnitude measurements via convex programming.
	\newblock {\em Commun. Pure Appl. Math.}, 66(8):1241--1274, 2013.
	
	\bibitem{casazza2005frame}
	Pete Casazza, Ole Christensen, and Diana~T Stoeva.
	\newblock Frame expansions in separable {B}anach spaces.
	\newblock {\em J. Math. Anal. Appl.}, 307(2):710--723, 2005.
	
	\bibitem{chai2010array}
	Anwei Chai, Miguel Moscoso, and George Papanicolaou.
	\newblock Array imaging using intensity-only measurements.
	\newblock {\em Inverse Probl.}, 27(1):015005, 2010.
	
	\bibitem{chenyuxin2015twf}
	Yuxin Chen and Emmanuel~J Cand{\'e}s.
	\newblock Solving random quadratic systems of equations is nearly as easy as solving linear systems.
	\newblock In {\em Advances in Neural Information Processing Systems}, volume~28. Curran Associates, Inc., 2015.
	
	\bibitem{chenyuxin2019}
	Yuxin Chen, Yuejie Chi, Jianqing Fan, and Cong Ma.
	\newblock Gradient descent with random initialization: {F}ast global convergence for nonconvex phase retrieval.
	\newblock {\em Math. Program.}, 176:5--37, 2019.
	
	\bibitem{chenyuxin2015convex}
	Yuxin Chen, Yuejie Chi, and Andrea~J Goldsmith.
	\newblock Exact and stable covariance estimation from quadratic sampling via convex programming.
	\newblock {\em IEEE Trans. Inf. Theory}, 61(7):4034--4059, 2015.
	
	\bibitem{christ2022examples}
	Michael Christ, Ben Pineau, and Mitchell~A Taylor.
	\newblock Examples of {H\"o}lder-stable phase retrieval, apr 2022.
	
	\bibitem{christensen2003p}
	Ole Christensen and Diana~T Stoeva.
	\newblock $p$-frames in separable {B}anach spaces.
	\newblock {\em Adv. Comput. Math.}, 18:117--126, 2003.
	
	\bibitem{conca2015algebraic}
	Aldo Conca, Dan Edidin, Milena Hering, and Cynthia Vinzant.
	\newblock An algebraic characterization of injectivity in phase retrieval.
	\newblock {\em Appl. Comput. Harmon. Anal.}, 38(2):346--356, 2015.
	
	\bibitem{davis2020nonsmooth}
	Damek Davis, Dmitriy Drusvyatskiy, and Courtney Paquette.
	\newblock The nonsmooth landscape of phase retrieval.
	\newblock {\em IMA J. Numer. Anal.}, 40(4):2652--2695, 2020.
	
	\bibitem{duchi2019solving}
	John~C Duchi and Feng Ruan.
	\newblock Solving (most) of a set of quadratic equalities: {C}omposite optimization for robust phase retrieval.
	\newblock {\em Inf. Inference J. IMA}, 8(3):471--529, 2019.
	
	\bibitem{eldar2014phase}
	Yonina~C Eldar and Shahar Mendelson.
	\newblock Phase retrieval: {S}tability and recovery guarantees.
	\newblock {\em Appl. Comput. Harmon. Anal.}, 36(3):473--494, 2014.
	
	\bibitem{fourcart2013}
	Simon Foucart and Holger Rauhut.
	\newblock {\em A mathematical introduction to compressive sensing}.
	\newblock Birkh{\"a}user, 2013.
	
	\bibitem{freeman2023discretizing}
	Daniel Freeman and Dorsa Ghoreishi.
	\newblock Discretizing $l_p$ norms and frame theory.
	\newblock {\em J. Math. Anal. Appl.}, 519(2):126846, 2023.
	
	\bibitem{goldstein2018}
	Tom Goldstein and Christoph Studer.
	\newblock Phasemax: {C}onvex phase retrieval via basis pursuit.
	\newblock {\em IEEE Trans. Inf. Theory}, 64(4):2675--2689, 2018.
	
	\bibitem{grohs2020phase}
	Philipp Grohs, Sarah Koppensteiner, and Martin Rathmair.
	\newblock Phase retrieval: {U}niqueness and stability.
	\newblock {\em SIAM Rev.}, 62(2):301--350, 2020.
	
	\bibitem{harrison1993phase}
	Robert~W Harrison.
	\newblock Phase problem in crystallography.
	\newblock {\em J. Opt. Soc. Amer. A}, 10(5):1046--1055, 1993.
	
	\bibitem{huangmeng2022}
	Meng Huang and Yang Wang.
	\newblock Linear convergence of randomized {K}aczmarz method for solving complex-valued phaseless equations.
	\newblock {\em SIAM J. Imag. Sci.}, 15(2):989--1016, 2022.
	
	\bibitem{jaganathan2016phase}
	Kishore Jaganathan, Yonina~C Eldar, and Babak Hassibi.
	\newblock Phase retrieval: {A}n overview of recent developments.
	\newblock In {\em Optical Compressive Imaging}, pages 279--312. CRC Press, 2016.
	
	\bibitem{krahmer2018}
	Felix Krahmer and Yi-Kai Liu.
	\newblock Phase retrieval without small-ball probability assumptions.
	\newblock {\em IEEE Trans. Inf. Theory}, 64(1):485--500, 2017.
	
	\bibitem{macong2020}
	Cong Ma, Kaizheng Wang, Yuejie Chi, and Yuxin Chen.
	\newblock Implicit regularization in nonconvex statistical estimation: {G}radient descent converges linearly for phase retrieval, matrix completion, and blind deconvolution.
	\newblock {\em Found. Comput. Math.}, 20(3):451--632, 2019.
	
	\bibitem{millane1990phase}
	R.~P. Millane.
	\newblock Phase retrieval in crystallography and optics.
	\newblock {\em J. Opt. Soc. Amer. A}, 7(3):394--411, Mar 1990.
	
	\bibitem{penghaiyang2024npr}
	Haiyang Peng, Deren Han, Linbin Li, and Meng Huang.
	\newblock Noisy phase retrieval from subgaussian measurements, dec 2024.
	
	\bibitem{shechtman2015phase}
	Yoav Shechtman, Yonina~C Eldar, Oren Cohen, Henry~Nicholas Chapman, Jianwei Miao, and Mordechai Segev.
	\newblock Phase retrieval with application to optical imaging: {A} contemporary overview.
	\newblock {\em IEEE Signal Process Mag.}, 32(3):87--109, 2015.
	
	\bibitem{sunju2018}
	Ju~Sun, Qing Qu, and John Wright.
	\newblock A geometric analysis of phase retrieval.
	\newblock {\em Found. Comput. Math.}, 18:1131--1198, 2018.
	
	\bibitem{tan2019phase}
	Yan~Shuo Tan and Roman Vershynin.
	\newblock Phase retrieval via randomized kaczmarz: theoretical guarantees.
	\newblock {\em Inf. Inference J. IMA}, 8(1):97--123, 2019.
	
	\bibitem{vershynin2018}
	Roman Vershynin.
	\newblock {\em High-dimensional probability: {A}n introduction with applications in data science}, volume~47.
	\newblock Cambridge university press, 2018.
	
	\bibitem{waldspurger2018phase}
	Irene Waldspurger.
	\newblock Phase retrieval with random gaussian sensing vectors by alternating projections.
	\newblock {\em IEEE Trans. Inf. Theory}, 64(5):3301--3312, 2018.
	
	\bibitem{waldspurger2015}
	Irene Waldspurger, Alexandre d’Aspremont, and St{\'e}phane Mallat.
	\newblock Phase recovery, maxcut and complex semidefinite programming.
	\newblock {\em Math. Program.}, 149:47--81, 2015.
	
	\bibitem{wanggang2017taf}
	Gang Wang, Georgios~B Giannakis, and Yonina~C Eldar.
	\newblock Solving systems of random quadratic equations via truncated amplitude flow.
	\newblock {\em IEEE Trans. Inf. Theory}, 64(2):773--794, 2018.
	
	\bibitem{wanggang2019generalized}
	Yang Wang and Zhiqiang Xu.
	\newblock Generalized phase retrieval: {M}easurement number, matrix recovery and beyond.
	\newblock {\em Appl. Comput. Harmon. Anal.}, 47(2):423--446, 2019.
	
	\bibitem{xiayu2024stability}
	Yu~Xia, Zhiqiang Xu, and Zili Xu.
	\newblock Stability in phase retrieval: {C}haracterizing condition numbers and the optimal vector set.
	\newblock {\em Math. Comput.}, 2024.
	
	\bibitem{zhanghuishuai2016provable}
	Huishuai Zhang, Yuejie Chi, and Yingbin Liang.
	\newblock Provable non-convex phase retrieval with outliers: {M}edian truncated {W}irtinger flow.
	\newblock In {\em International Conference on Machine Learning}, pages 1022--1031. PMLR, 2016.
	
	\bibitem{zhanghuishuai2017rwf}
	Huishuai Zhang, Yingbin Liang, and Yuejie Chi.
	\newblock A nonconvex approach for phase retrieval: {Reshaped Wirtinger} flow and incremental algorithms.
	\newblock {\em J. Mach. Learn. Res.}, 18(141):1--35, 2017.
	
	\end{thebibliography}

\end{document}